\tikzset{sqstate/.style={draw,regular polygon,regular polygon sides=4,inner sep=0cm}}
\tikzset{rectangle/.style={draw,shape=rectangle,inner sep=0cm}}
\newtheorem*{rep@thm}{\rep@title}
\newcommand{\newreptheorem}[2]{%
\newenvironment{rep#1}[1]{%
 \def\rep@title{#2 \ref{##1}}%
 \begin{rep@thm}}%
 {\end{rep@thm}}}
\providecommand{\keywords}[1]{\textbf{\textit{Keywords---}} #1}
\theoremstyle{plain}
\newtheorem{thm}{Theorem}
\newtheorem{lem}{Lemma}
\newtheorem*{cor}{Corollary}
\theoremstyle{definition}
\newtheorem{defn}{Definition}
\newtheorem{rexamplex}{Running Example}
\newenvironment{rexample}
{\pushQED{\qed}\rexamplex}
{\popQED\endrexamplex}
\newenvironment{crexample}[1]
{\pushQED{\qed}\rexamplexx}
{\popQED\endrexamplexx}
\newtheorem{example}[rexamplex]{Example}
\theoremstyle{remark}
\numberwithin{subcase}{case}
\numberwithin{subsubcase}{subcase}
\newcommand{\cspm}[1]{\texttt{#1}}
\newcommand{\qedb}[1]{{\hfill $\blacksquare$ #1}}
\newcommand{\Seq}[1]{\langle #1 \rangle}
\newcommand{\req}{request}
\newcommand{\ungReq}{ungranted\_request}
\newcommand{\ung}{ungrantedness}
\newcommand{\inVoc}{in\_vocabulary}
\newcommand{\ur}{\mathrel{\rightarrow\hspace{-4pt}\bullet}}
\newcommand{\clr}{client\_request}
\newcommand{\svr}{server\_request}
\newcommand{\clp}{client\_response}
\newcommand{\svp}{server\_response}
\newcommand{\bigo}{\mathcal{O}}
\begin{document}

\title{A Pattern-based deadlock-freedom analysis strategy for concurrent systems\thanks{The EU Framework 7 Integrated Project COMPASS~(Grant Agreement 287829) financed most of the work presented here. This work is partially funded by INES, grants CNPq/465614/2014-0 and FACEPE/APQ/0388-1.03/14. No new	primary data was created as part of the study reported here.}}

\author{Pedro Antonino \hspace{3cm} Augusto Sampaio \\
	        \small{Centro de Inform\'{a}tica, Universidade Federal de Pernambuco, Recife, PE, Brazil} \\
	        \small{\{prga2@cin.ufpe.br, acas@cin.ufpe.br\}} 
	        \vspace{.3cm}\\
	        Jim Woodcock  \\
	        \small{Department of Computer Science, University of York, York, UK} \\
        	\small{jim.woodcock@york.ac.uk}
        }
    
\date{January 31, 2019}

	

\maketitle

\begin{abstract} Local analysis has long been recognised as an effective tool to combat the state-space explosion problem. In this work, we propose a method that systematises the use of local analysis in the verification of deadlock freedom for concurrent and distributed systems. It combines a strategy for system decomposition with the verification of the decomposed subsystems via adherence to behavioural patterns. At the core of our work, we have a number of CSP refinement expressions that allows the user of our method to automatically verify all the behavioural restrictions that we impose. We also propose a prototype tool to support our method. Finally, we demonstrate the practical impact our method can have by analysing how it fares when applied to some examples.   
\end{abstract}

\keywords{ CSP; model checking; refinement; local analysis; behavioural patterns; system decomposition; deadlock freedom}

\section{Introduction}

A deadlock is a long-standing, common pathology of concurrent systems~\cite{Coffman71,Dijkstra02}. It occurs when the system reaches a state where all its components are stuck. The importance of deadlock analysis is attested by the fact that deadlock freedom is often considered to be the first step towards correctness for distributed and concurrent systems. Moreover, safety properties can be reduced to deadlock checking~\cite{Godefroid93}. As with many properties of concurrent systems, deadlock verification can be severely affected by the state space explosion problem~\cite{Baier08}.

One common way to cope with the state space explosion problem is to use local analysis~\cite{Roscoe87,Brookes91,Martin96,Attie05,Gruner10,Ramos11,Antonino14a,Antonino14b,Oliveira2016,Filho16,Antonino16a,Antonino16b,Antonino17a,Antonino17b,Otoni17,Attie18}. Instead of checking the entire state space of the concurrent system, the analysis of small combinations of components is carried out to determine whether a system is deadlock free. In fact, for some complex systems, a method using local analysis might be the only practicable option. Local analysis methods are usually incomplete in the sense that they either guarantee deadlock freedom or are inconclusive. The latter means they can neither show that the system deadlocks nor prove deadlock freedom. Traditional local analysis techniques consist of either fully automatic \emph{a posteriori} verification methods, or guidelines to the design of a system that, if followed, guarantee deadlock freedom by construction. The former techniques do not provide any guidance on how system designers can avoid deadlocks, whereas the latter ones do not provide automatic ways of checking that the guidelines were correctly followed.

We propose a method that provides both guidelines to construct deadlock-free systems and a procedure for automatically checking that the guidelines have been correctly followed. This method embodies a notion of decomposition that can be used to prove deadlock freedom for systems with an acyclic communication topology. Moreover, it relies on three behavioural patterns to deal with cyclic-topology systems. These behavioural patterns restrict both the behaviour of components and the structure of the system. Both the decomposition and behavioural patterns rely on local behavioural analysis. The decomposition relies on the analysis of pairs of components of the system, whereas the behavioural patterns constrain the behaviour of individual components. So, this method is not hindered by the state space explosion problem. Nevertheless, its efficiency comes at the price of incompleteness: deadlock freedom can only be proved for systems that fall into our decomposition/pattern-adherence method.

This proposed method is based on prior works that explored local analysis for the verification of deadlock freedom for concurrent and distributed systems. In fact, both the decomposition strategy and two out of the three patterns presented have been proposed decades ago~\cite{Roscoe87,Martin96}. Nevertheless, we introduce a CSP formalisation based on refinement expressions that can be automatically checked by a refinement checker. In prior works, the decompositions and pattern adherence are characterised in terms of semantic properties that a system must have~\cite{Roscoe87,Martin96}. This characterisation forces the user of such methods to understand not only the formalism but also the subtleties of its semantic models. On the other hand, our characterisation based on refinement expressions, together with design guidelines and tool support, gives a more practical support for the system designer. More importantly, prior works do not suggest an automatic way to test whether a system has a given semantic property, whereas our refinement expressions can be automatically checked by a refinement checker like FDR~\cite{fdr3}. Finally, we conduct some experiments to measure the efficiency gains on the analysis of some practical examples.  

This work is a significant extension of two previous works~\cite{Antonino14a,Antonino14b}. The new contributions of this paper are as follows.

\begin{itemize}
	\item We present formal proofs that our adherence to communication patterns guarantees deadlock freedom.
	
	\item We propose a method that systematises the application of system decomposition and pattern-adherence checking strategy to ensure deadlock freedom for a system. This systematisation should guide the user in applying our method in practice.
	
	\item We analyse the computational complexity of the proposed method, illustrate its application to three systems, and compare the method with three other approaches to deadlock analysis. 
	
	\item We implemented a prototype tool that supports the proposed systematisation, saving a lot of manual effort that would otherwise be required in applying our method.
\end{itemize}

This paper is organised as follows. Section \ref{sec:background} introduces the CSP notation, some of its semantic models and a theory of networks of processes, based on CSP, that we use to represent and reason about concurrent systems. In Section \ref{sec:decomposition}, we present our decomposition strategy and how it can ensure deadlock freedom for acyclic-topology systems. Section \ref{sec:patterns} presents the formalisation of three behavioural patterns that prevent deadlocks. In Section \ref{sec:method}, we present our method and a tool to support it. This section also presents the results of some experiments we conduct to assess the efficiency of our method when compared to traditional \emph{a posteriori} verification techniques. Section~\ref{sec:related} introduces a series of related works and how they relate to ours. Finally, in Section \ref{sec:conclusion}, we present our concluding remarks.

\section{Background}
\label{sec:background}

We present a brief introduction to CSP, including the main operators and semantic models used in this work. Then we introduce a notion of live-network model, which is basically a sequence of components that obey some relevant properties. Two CSP models, used as running examples, are also presented.   

\subsection{CSP}

Communicating Sequential Processes (CSP)~\cite{Hoare85,Roscoe98,Roscoe10} is a notation used to model concurrent systems where processes interact by exchanging messages. In this notation, sequential processes can be combined using high-level parallel operators to create complex concurrent processes. The CSP notation used here is the machine-readable version called $CSP_M$, which is the standard version for encoding CSP processes by the FDR tool~\cite{fdr3}. In the following, we informally introduce some operators of this language using two CSP systems that serve as running examples throughout this paper. Our first example introduces a ring-buffer system.

\begin{rexample} \label{ex:buffer} A ring buffer with $\textit{NCELLS}$ storage cells is a system that stores data in a \emph{first-in-first-out} fashion and where its storage cells are written to in a cyclic way. Cells are organised as if they were part of a ring, and once some piece of data is written to a cell, the next piece of data will be written to the next cell on this ring, provided the next cell is available. Our system can store up to $N = \textit{NCELLS} + 1$ pieces of data because it has an extra cache storage space. $NCELLS$ (and $N$) is a global constant that serves as a parameter for our model; by changing $NCELLS$, we can create an arbitrary-sized system with $NCELLS > 0$ many cells. We use a central controller (described by process \verb|Controller(cache,size,top,bot)| whose parameters are initially $0$) to manage input and output requests to the buffer. This process has four parameters: \verb|cache| holds the next element to be output, \verb|size| keeps track of how many cells are full, \verb|top| and \verb|bot| keep track of which cell is the top (i.e., beginning) and the bottom (i.e., end) of the buffer, respectively. The parameters of a process represent its internal state. 
	
\begin{lstlisting}
Controller(cache,size,top,bot) =
	size < N & Input(cache,size,top,bot) 
	[] 
	size > 0 & Output(cache,size,top,bot)
\end{lstlisting}

The process \verb|c & P| behaves like \verb|P| if the condition \verb|c| is true and like \verb|STOP| if \verb|c| evaluates to false, where \verb|STOP| is the atomic process that does nothing and deadlocks. The process \cspm{P [] Q} represents the external choice of \verb|P| and \verb|Q|, that is, the behaviours of \verb|P| and \verb|Q| are initially offered and then either \verb|P| or \verb|Q| is chosen. We point out that an external choice between \verb|P| and \verb|STOP| behaves just as \verb|P|, i.e., \cspm{P [] STOP} = \verb|P|. So, process \verb|Controller| offers the choice of behaving as \verb|Input| if $size < N$ and as \verb|Output| if $size > 0$.     

If the buffer is not full (i.e., $size < N$), the controller can receive and store some data as described by process \verb|Input|. 
\begin{lstlisting}
Input(cache,size,top,bot) =
	input?x -> 
		(size == 0 & Controller(x,1,top,bot)
		[]
		size > 0 & write.top!x -> 
			Controller(cache,size+1,(top+1)%NCELLS,bot))
\end{lstlisting}
The prefixed process \cspm{a -> P} initially offers the event \verb|a| and after this event is performed it behaves as \verb|P|. $CSP_M$ also proposes the notion of a \emph{channel} that transmits data. A channel \verb|ch| is associated to the type of data, say values in the set $datatype$, they transmit. So, a channel gives rise to a number of events each of which denotes the transmission of a different piece of data, that is, event \verb|ch.x| where $x \in datatype$ denotes the transmission of value $x$. A channel \verb|ch| can output \verb|ch!x| and input \verb|ch?x| values. Outputting \verb|ch!x| simply creates event \verb|ch.x| based on the value $x$, whereas the input operation \verb|ch?x| binds the values of \verb|ch|'s datatype to $x$ (intuitively, this means that \verb|ch| can receive/input any value associated with this channel). So, \verb|ch!x -> P| behaves as a simple prefix, whereas \verb|ch?x -> P| behaves like an external choice: each possible value $v$ for $x$ gives rise to a new branch for which $x = v$. Note that channels and their operations are just syntactic sugar over events. For instance, process \verb|Input| initially inputs some value $v$ on channel \verb|input|, and then it proceeds execution with $x = v$. The expression \verb|(top+1)%NCELLS| stands for the increment of \verb|top| modulo \verb|NCELLS|.

If the buffer is not empty, the controller can output and update its state as described by process \verb|Output|.
\begin{lstlisting}
Output(cache,size,top,bot) =
	output!cache -> 
		(size > 1 & (read.bot?x -> 
			Controller(x,size-1,top,(bot+1)%NCELLS))
		[]
		size == 1 & Controller(cache,0,top,bot))
\end{lstlisting}

The process \verb|Cell(id,0)| describes the individual cells that build up the buffer's storage space. It holds some value which can be read (using channel \verb|read.id|) and updated (using channel \verb|write.id|).
\begin{lstlisting}
Cell(id,val) = 
	read.id!val -> Cell(id,val) 
	[] 
	write.id?x -> Cell(id,x)
\end{lstlisting}

Our final system, given by process \verb|RingBufferBehaviour|,  runs our controller process in parallel with $NCELLS$ storage cell processes using the indexed version of CSP's alphabetised-parallel operator. 

\begin{center}
	\verb"RingBufferBehaviour = || i : {0..NCELL} @ A(i) [P(i)]"
\end{center}
where
\begin{itemize}
	\item \verb|P(0) = Controller(0,0,0,0)|
	\item \verb"A(0) = {|read, write, input, output|}",
	\item \verb|P(i) = Cell(i,0)| for $i \in \{1 \ldots NCELL\}$
	\item \verb"A(i) = {|read.i, write.i|}" for $i \in \{1 \ldots NCELL\}$
	\begin{itemize}
		\item In $CSP_M$, the extension operator $\{|e_1,\ldots,e_n|\}$ gives the events that extend the elements $e_i$. For instance, in this example, we have $\{|\cspm{read}|\}$ gives $\{\cspm{read.i.v} \mid i \in \{0\ldots N-1\} \land v \in \{0,1\} |\}$, assuming cells store binary values $v$.  
	\end{itemize}
\end{itemize}

The parallel process \cspm{P [X||Y] Q} allows $P$ and $Q$ to freely perform events not in the set of events $X \inter Y$, but to perform an event in $X \inter Y$, $P$ and $Q$ must synchronise on it. Additionally, $P$ ($Q$) is only allowed to perform events in $X$ ($Y$). $X$ is called the alphabet of $P$. This parallel operator also has an indexed version \cspm{ || e : S @ [A(e)] P(e)}, where \cspm{A(e)} gives an alphabet and \cspm{P(e)} gives a CSP process. For this indexed version, all processes are put in parallel using their corresponding alphabet. Similar to the binary version of this operator, shared events require synchronisation by all processes having the event on their alphabet and the non-shared events can be performed freely by a process. \verb|RingBufferBehaviour| ensures that components synchronise on shared events, namely, read and write events only occur when the controller and the cells cooperate. We formally define the parallel composition for this system when we later introduce our network model.
\end{rexample}

The second running example that we use describes the well-known asymmetric solution to the dining philosophers problem.

\begin{rexample} \label{ex:philosophers} In the dining philosophers setting, $N$ philosophers are trying to eat on a shared round table; $N$ is a constant that also serves as a parameter for our example/model. To do so, each of them must acquire a pair of forks: one on its left-hand side and another on its right-hand side. Philosophers share their right-hand fork with their right neighbour and their left-hand one with the left neighbour. If all philosophers acquire their forks in the same order, they might run into the following deadlock. Say that all philosophers acquire first their left-hand fork and then their right-hand one, then they might reach a state where all of them have acquired their left-hand fork and are waiting for their right-hand one to be released. A well-known solution to avoid this deadlock is to have an asymmetric philosopher that acquires forks in the opposite order.
	
We describe the behaviour of philosophers that acquire and release first their left-hand fork and then their right-hand one by process \verb|Phil(id)|. On the other hand, asymmetric philosophers are described by \verb|APhil(id)|. Event \verb|pickup.i.j| (\verb|putdown.i.j|) is used by philosopher $i$ to acquire (release) fork $j$. Functions \verb|next(i)| and \verb|prev(i)| yield $(i + 1)\%N$ and $(i - 1)\%N$, respectively.
\begin{lstlisting}
Phil(id) = 
	sit.id -> pickup.id.id -> pickup.id!next(id) -> eat.id -> 
		putdown.id.id -> putdown.id!next(id) -> getup.id -> 
			Phil(id)

APhil(id) = 
	sit.id -> pickup.id!next(id) -> pickup.id.id -> 
		eat.id -> putdown.id!next(id) -> putdown.id.id -> 
			getup.id -> APhil(id)
\end{lstlisting}

A fork can be acquired by a philosopher which later releases it as described by process \verb|Fork(id)|.
\begin{lstlisting}
Fork(id) = 
	[] i : {id,prev(id)} @ pickup.i.id -> putdown.i.id -> Fork(id)
\end{lstlisting}
The process \cspm{[] x : S @ P(x)} is the indexed version of the external choice operator. For $\verb|S| = \{v_1,\ldots,v_{|S|}\}$ where $|S|$ gives the size of set $S$, this process is \cspm{P($v_1$) [] \ldots\ [] P($v_{|S|}$)}. 

The system implementing the asymmetric solution, given by \hbox{\verb|APhilsBehaviour|}, runs in parallel $N$ forks, $N-1$ philosophers and an asymmetric philosopher. It relies on the indexed version of CSP's alphabetised-parallel operator to ensure processes synchronise on shared events. 


\begin{center}
	\verb"APhilsBehaviour = || i : {0..2N-1} @ A(i) [P(i)]"
\end{center}
where
\begin{itemize}
	\item \verb|P(i) = Phil(i)| for $i \in \{0,\ldots,N-2\}$
	\item \verb|A(i) = |$AlphaPhil(i)$ for $i \in \{0,\ldots,N-2\}$
	\begin{itemize}
		\item $AlphaPhil(i) = \begin{array}[t]{l} \{\cspm{sit.i},\cspm{pickup.i.i},\cspm{pickup.i.next(i)},\cspm{eat.i},\\
		\quad \cspm{putdown.i.i},\cspm{putdown.i.next(i)},\cspm{getup.i}\} \end{array}$
	\end{itemize}
	\item \verb|P(N-1) = APhil(N-1)|
	\item \verb|A(N-1) = |$AlphaPhil(N-1)$
	\item \verb|P(i) = Fork(i)| for $i \in \{N,\ldots,2N-1\}$
	\item \verb|A(i) = |$\begin{aligned}[t]&(\{\cspm{pickup.i.i},\cspm{pickup.prev(i).i},\\ & \qquad\cspm{putdown.i.i},\cspm{putdown.prev(i).i}\},\cspm{Fork(i)})
	\end{aligned}$
\end{itemize}

\end{rexample}

\subsection{Denotational semantics}

In order to reason about processes, CSP embodies a collection of mathematical
models. In this work, we use the \textit{stable failures}
model, and the less conventional \textit{stable revivals} model.

In the stable-failures model, a process is represented by
a pair $(F,T)$ containing its stable failures and its finite traces, respectively. The traces of a process are represented by a set of all the finite sequences of visible events that this process can perform; this
set is given by $\traces(P)$. The stable failures of a process are represented
by a set of pairs $(s,X)$, where $s$ is a trace and $X$ is a set of
events that the process can refuse to do after performing the trace $s$. At the
state where the process can refuse events in $X$, the process must not be able
to perform an internal action, otherwise this state would be unstable and would
not be taken into account in this model. The function $\failures(P)$ gives the
set of stable failures of process $P$. Hence, the representation of process $P$
in this model is given by the pair $(\failures(P),traces(P))$.

Before introducing how to systematically calculate the traces and failures of a process, we introduce a few more constructs of the $CSP_M$ notation. Similarly to \cspm{STOP}, \cspm{SKIP} is the atomic process that does nothing and terminates successfully. Another useful atomic process, mainly from a theoretical perspective, is \cspm{div}, which is the diverging process. $\Sigma$ is the universal set of visible events; the invisible event $\tau$ and the termination signal $\tick$ are not members of this set. The internal (non-deterministic) choice process \cspm{P |\~{}| Q} offers either \verb|P| or \verb|Q| non-deterministically. The process \cspm{P ; Q} behaves initially as process \cspm{P} and, once \cspm{P} successfully terminates, it behaves as process \cspm{Q}.

The renaming process \cspm{P [[R]]}, where \cspm{R} is a set of pairs \cspm{a <- b}, offers a deterministic choice of events in \cspm{S} whenever \cspm{P} offers \cspm{a}, where  $\cspm{S} = \{ \cspm{b} \mid \cspm{(a <- b)} \in \cspm{R}\}$. The hidden process \verb"P \ S" offers the events not in $S$ whenever \cspm{P} offers them. On the other hand, \verb"P \ S" can perform a $\tau$, the silent event, whenever \cspm{P} can perform an event in \verb"S". The interrupt process \cspm{P} \verb|/\| \cspm{Q} behaves like \cspm{P} and at any point it can be interrupted in which case it behaves as \cspm{Q}.

The $CSP_M$ notation does not provide an explicit operator for recursion, but
it allows one to use the name of the process in its definition. For instance, \cspm{P = a -> P}
performs \cspm{a}, and then recurses, behaving as \cspm{P}. Even though a
formal construct is not available for recursion, we can define it as an
equation where the right-hand side is a process context depending on the
definition of the process itself, e.g. $X = F(X)$. For the process \cspm{P} given above, we can define it as \cspm{P = F(P)}, where $F(X) =$\cspm{ a -> }$X$.
For the purpose of giving the semantics of a recursive process, we use this style of
definition.


\begin{table}[!b]
	\setlength{\tabcolsep}{2pt}
	\def\arraystretch{1.3}
	\resizebox{\textwidth}{!}{%
		\begin{tabular}{ll}
		$\traces(\cspm{STOP})$ & $= \{\emptyseq\}$ \\
		$\traces(\cspm{SKIP})$   & $= \{\emptyseq, \seq{\tick} \}$ \\
		$\traces(\cspm{div})$     & $= \{\emptyseq\}$ \\
		$\traces(a\cspm{ -> }P)$ & $= \{\emptyseq\} \cup \{\seq{a}\cat s~|~ s \in
		\traces(P)\}$ \\
		$\traces(P\cspm{ ; }Q) $& $= (\traces(P) \cap \Sigma^{*}) \cup \{s\cat
		t~|~s\cat\seq{\tick} \in \traces(P) \land t \in \traces(Q) \}$ \\
		$\traces(P\cspm{ [] }Q) $& $= \traces(P) \cup \traces(Q)$ \\
		$\traces(P\cspm{ |\~{}| }Q) $& $= \traces(P) \cup \traces(Q)$ \\
		$\traces(P \cspm{ [|}X\cspm{|] } Q) $& $=  \bigcup \{ s \parallel[X] t ~|~
		s \in \traces(P) \land t\in\traces(Q) \}$  \\
		$\traces(P\cspm{ \textbackslash~}X) $& $=  \{ s\hide X~|~ s \in
		\traces(P)\}$
		\\
		$\traces(P\cspm{ [[R]]})$ & $= \{ t~|~ \exists s \in \traces(P) \spot
		s~R^{*}~t\}$\\
		$\traces(P~/\backslash~Q)$ & $= traces(P) \cup \{ s \cat t~|~ s \in \traces(P) \inter \Sigma^* \land t \in \traces(Q)\}$
		\end{tabular}}
		~\\where \begin{itemize} 
			 \item $s \cat t$ represents the concatenation of sequences $s$ and $t$.
			\item $s \parallel[X] t$ gives all the traces $w$ that are interleaving of $s$ and $t$ such that $w \restrict X = s \restrict X = t \restrict X$.
			\item $s \restrict X$ gives the trace resulting from removing events not in $X$ from $s$. 
			\item $s \hide X$ gives the trace resulting from removing events in $X$ from $s$.
			\item $\Seq{s_0,\ldots,s_n} R^* \Seq{t_0,\ldots,t_m}$ holds iff $n = m$ and $ \forall i \in \{0\ldots n\} \bullet s_iRt_i$.
		\end{itemize}
	\caption{Semantic clauses for the traces model}
	\label{tab:semanticClausesTraces}
\end{table}

\begin{table}[!h]
	\setlength{\tabcolsep}{2pt}
	\def\arraystretch{1.3}
	\resizebox{\textwidth}{!}{%
		\begin{tabular}{ll}
		$ \failures(\cspm{STOP})$ & $= \{(\emptyseq, X)~|~X\subseteq \Sigma \union \{\tick\}\}$ \\
		$ \failures(\cspm{SKIP})$ & $= \{(\emptyseq, X)~|~X\subseteq \Sigma\} \cup
		\{(\seq{\tick},X)~|~X\subseteq\Sigma \union \{\tick\}\}$ \\
		$ \failures(\cspm{div})$ & $= \emptyset$ \\
		$ \failures(a \cspm{ -> } P)$ & $= \{(\emptyseq, X)~|~a\nin X \} \cup  
		\{(\seq{a}\cat s, X)~|~(s,X) \in \failures(P) \}$ \\
		$ \failures(P \cspm{ ; } Q)$ & $= \begin{aligned}[t] 	
		&\{(s,X)~|~s\in\Sigma^{*} \land (s, X\cup\set{\tick}) \in \failures(P)\} \cup \\
		 &\{(s\cat t, X)~|~s\cat\seq{\tick}\in\traces(P) \land (t,X) \in \failures(Q) \} \end{aligned}$ \\
		$ \failures(P \cspm{ [] } Q)$ & $=  \begin{aligned}[t] &
		\{(\emptyseq,X)~|~ (\emptyseq,X) \in \failures(P) \cap \failures(Q)\} \cup \\
		& \{(t,X)~|~ (t,X) \in \failures(P) \cup \failures(Q) \land t \neq \emptyseq \}\cup \\
		& \{(\emptyseq,X) ~|~ X \subseteq\Sigma \land \seq{\tick} \in \traces(P)\cup \traces(Q) \}       
		\end{aligned} $\\
		$ \failures(P \cspm{ |\~{}| } Q)$ & $= \failures(P) \cup \failures(Q)$ \\
		$ \failures(P \cspm{ [|}X\cspm{|] } Q)$ & $=  \bigcup  \{(s \parallel[X] t,
		Y \cup Z)~|~ \begin{aligned}[t] &Y\backprime (X \cup\set{\tick}) = Z \backprime (X\cup\set{\tick}) \land \\
		& (s,Y) \in \failures(P) \land (t,Z) \in \failures(Q)  \} \end{aligned}$ \\
		$ \failures(P \cspm{ \textbackslash~} X)$ & $= \{(t \hide X,Y) ~|~ (t,Y
		\cup X) \in failures(P)\}$ \\
		$ \failures(P \cspm{ [[}R\cspm{]]})$ & $= \{(t,X)~|~ (\exists t' ~|~ (t',
		t) \in R^{*} \land (t',R^{-1}(X)) \in \failures(P)\}$\\
		$\failures(P~/\backslash~Q)$ & $= \begin{aligned}[t]
		&\{ (s,X) ~|~ (s,X) \in \failures(P) \land s \in \Sigma^* \land (\emptyseq,X) \in \failures(Q)\} \\ 
		&\cup \{ (s,X) ~|~ s\cat\seq{\tick} \in \traces(P) \land \tick \nin X\} \\
		&\cup \{ (s\cat\seq{\tick},X) ~|~ s\cat\seq{\tick} \in \traces(P)\} \\
		& \cup \{ (s \cat t,X) ~|~ s \in \traces(P) \inter \Sigma^* \land (t,X) \in \failures(Q)\} \end{aligned}$
		\end{tabular}}
	\caption{Semantic clauses for the failures model}
	\label{tab:semanticClausesFailures}
\end{table}

The functions $\traces(P)$ and $\failures(P)$ are calculated inductively based
on the constructs of the CSP language. The clauses for calculating the $\traces$ are
presented in Table~\ref{tab:semanticClausesTraces}, whereas the clauses
for calculating the $\failures$ are depicted in
Table~\ref{tab:semanticClausesFailures}.
The semantics of a recursive process can be calculated, using the presented clauses, thanks to the following equivalence. For a recursive process $P = F(P)$, $P \equiv \Intchoice \{F^n(\cspm{div}) \mid n \in \Nats\}$, where $\Intchoice S$ is the distributed application of the operator \verb"|~|" to the processes in $S$. This equivalence also holds for the \emph{stable revivals} model, presented later.

We illustrate the calculation of these behaviours using our ring-buffer system.

\begin{crexample}{\ref{ex:buffer}} We illustrate the traces and stable-failures for the processes \verb|Controller(0,0,0,0)| and \verb|Cell(0,0)|. For the following $\failures$ sets, $(tr,S)$ is a shorthand for all pairs $(tr,X)$ such that $X \subseteq S$; this makes our examples more compact. 
	\begin{itemize}
		\item $\traces(\cspm{Controller(0,0,0,0)}) = \\ \qquad \begin{array}[t]{l}
		\{ \Seq{},\Seq{input.0},\Seq{input.1},\Seq{input.2}, \Seq{input.0,input.0}, \Seq{input.0,input.1}, \\ \ \Seq{input.0,input.2}, \Seq{input.0,output.0}, \Seq{input.1,output.1}, \ldots \}\end{array}$
		\item $\failures(\cspm{Controller(0,0,0,0)}) = \\ \qquad \begin{array}[t]{l}
		\{ (\Seq{},\Sigma - \{input.0,input.1,input.2\}),
		\\ \ (\Seq{input.0},\Sigma - \{input.0,input.1,input.2,output.0\}), 
		\\ \ (\Seq{input.1},\Sigma - \{input.0,input.1,input.2,output.1\}), 
		\\ \ (\Seq{input.2},\Sigma - \{input.0,input.1,input.2,output.2\})), \ldots \}\end{array}$
		\item $\traces(\cspm{Cell(0,0)}) = \\ \qquad \begin{array}[t]{l}
		\{ \Seq{},\Seq{read.0},\Seq{write.0},\Seq{write.1},\Seq{write.2},\Seq{read.0,write.0}, \\ \ \Seq{read.0,write.1}, \Seq{read.0,write.2},\Seq{read.0,write.0,read.0},\ldots \}\end{array}$
		\item $\failures(\cspm{Cell(0,0)}) = \\ \qquad \begin{array}[t]{l}
		\{ (\Seq{},\Sigma - \{read.0,write.0,write.1,write.2\}),
		\\ \ (\Seq{read.0},\Sigma - \{read.0,write.0,write.1,write.2\}),
		\\ \ (\Seq{write.0},\Sigma - \{read.0,write.0,write.1,write.2\}),
		\\ \ (\Seq{write.1},\Sigma - \{read.1,write.0,write.1,write.2\})),
		\\ \ (\Seq{write.2},\Sigma - \{read.2,write.0,write.1,write.2\}), \ldots \}\end{array}$
	\end{itemize}
\end{crexample}

In this model, the failures for a given trace are subset closed: if $(s,X) \in \failures(P)$ then so is $(s,Y)$ provided $Y \subseteq X$.
So, for some properties, we will be interested only in the maximal failures, considering the subset order, for each trace $s$. $\overline{\failures}(P)$ denotes the set of such maximal failures for process $P$.

\subsubsection*{Stable Revivals Model}

In the stable revivals model, a process is
described by a triple $(T,D,R)$ containing its traces, its deadlocks and
its stable revivals, respectively. The deadlocks of a process are given by the set of
traces after which the process refuses all the events in its alphabet; this set
is given by $\deadlocks(P)$. The stable revivals set is composed of triples $(s,X,a)$ containing a trace, a refusal set, and a revival event, respectively. The refusal set $X$, similarly to the one described in the stable failures model, describes the set of events that can be refused by the process after the trace $s$. The revival event $a$ represents an event that the process
can offer after performing $s$ and refusing $X$. At the state where the revival
is recorded, the process must not be able to perform an internal action,
otherwise this state is unstable, not being taken into account. The function
$\revivals(P)$ gives the set of stable revivals of process $P$. Thus, the
representation of a process $P$ in this model is given by $(traces(P),
deadlocks(P),revivals(P))$. The necessity of this model comes from the fact that some properties that we intend to capture cannot be naturally captured using the notion of refinement over the failures model. Conflict freedom is a concrete example of such properties. Generally, requiring that different refusal behaviours $R_1$ and $R_2$, where $R_1 \subset R_2$, from a process based on whether a particular event is offered cannot be naturally captured using failures refinement; the subset-closed structure of refusal sets gets in the way of specifying such a property.

The functions $\traces(P)$, $\deadlocks(P)$ and $\revivals(P)$ are calculated
inductively based on the constructs of the CSP language. The clauses for calculating the
$\traces$ are presented in Table~\ref{tab:semanticClausesTraces}. In the same
way, the clauses for calculating $\deadlocks$ and $\revivals$ are depicted in
Table~\ref{tab:semanticClausesDeadlocks} and
Table~\ref{tab:semanticClausesRevivals}, respectively.

\begin{table}[!t]
		\setlength{\tabcolsep}{2pt}
		\def\arraystretch{1.3}
		\resizebox{\textwidth}{!}{%
		\begin{tabular}{ll}
		$ \deadlocks(\cspm{STOP}) $ & $ = \{\emptyseq\}$ \\
		$ \deadlocks(\cspm{SKIP}) $ & $ =  \emptyset$ \\
		$ \deadlocks(\cspm{div}) $ & $ =  \emptyset $ \\
		$ \deadlocks(a \cspm{ -> } P) $ & $ =  \{\seq{a}\cat s~|~ s \in
		\deadlocks(P) \} $ \\
		$ \deadlocks(P \cspm{ ; } Q) $ & $ =  
		\begin{aligned}[t]
		& \{s~|~ s \in \deadlocks(P)\}\cup \{s \cat t~|~s\cat\seq{\tick}\in\traces(P) \land t \in
		\deadlocks(Q) \} 
		\end{aligned} $ \\
		$ \deadlocks(P \cspm{ [] } Q) $ & $ =   
		\begin{aligned}[t] 
		& ((\deadlocks(P) \union \deadlocks(Q)) \inter \\ & \quad \{ s~|~s\neq \emptyseq\})\union (\deadlocks(P) \inter \deadlocks(Q))
		\end{aligned} $ \\
		$ \deadlocks(P \cspm{ |\~{}| } Q) $ & $ =  \deadlocks(P) \cup \deadlocks(Q)
		$ \\
		$ \deadlocks(P \cspm{ [|}X\cspm{|] } Q) $ & $ =  
		\begin{aligned}[t] 
		&\{u~| \exists (s,Y):\failures(P); (t,Z) : \failures(Q) \spot \\ & \qquad Y \setminus (X \union \{\tick\}) = Z \setminus (X \union \{\tick\})
		\land u \in (s \parallel[X] t) \inter \Sigma^* \land \Sigma^{\tick} = Y \union Z \} 
		\end{aligned}$ \\
		$ \deadlocks(P \cspm{ \textbackslash~} X) $ & $ =  \{s \hide X ~|~ s \in
		\deadlocks(P)\} $ \\
		$ \deadlocks(P \cspm{ [[}R \cspm{]]}) $ & $ =  \{ s'~|~ \exists s \spot s R
		s' \land s \in \deadlocks(P)\}$ \\
		where:&\\
		$\quad \failures(P)$ & $ = \{(s,X)~|~X \subseteq \Sigma^{\tick} \land s \in Dead
		\} \union \{(s,X), (s, X \union \{\tick\})~|~(s,X,a) \in Rev \}$ \\
		& $\quad\union \{(s,X)~|~s\cat \seq{\tick} \in Tr \land X \subseteq
		\Sigma \}\union \{(s\cat\seq{\tick},X)~|~s\cat \seq{\tick} \in Tr \land X
		\subseteq \Sigma^{\tick} \}$ \\
		\end{tabular}}
	\caption{Deadlocks semantic clauses}
	\label{tab:semanticClausesDeadlocks}
\end{table}

\begin{table}[!ht]
	\setlength{\tabcolsep}{2pt}
	\def\arraystretch{1.3}
	\resizebox{\textwidth}{!}{%
		\begin{tabular}{ll}
		$ \revivals(\cspm{STOP}) $ & $ =  \emptyset$ \\
		$ \revivals(\cspm{SKIP}) $ & $ =  \emptyset$ \\
		$ \revivals(\cspm{div}) $ & $ =  \emptyset $ \\
		$ \revivals(a \cspm{ -> } P) $ & $ =  \{(\emptyseq, X,a)~|~a\nin X \} \cup  
		\{(\seq{a}\cat s,X,b)~|~(s,X,b) \in \revivals(P) \} $ \\
		$ \revivals(P \cspm{ ; } Q) $ & $ =  
		\begin{aligned}[t]
		& \{(s,X,a)~|~ \land (s,X,a) \in \revivals(P)\}\\
		& \cup \{(s \cat t, X,a)~|~s\cat\seq{\tick}\in\traces(P) \land (t,X,a) \in
		\revivals(Q) \} 
		\end{aligned} $ \\
		$ \revivals(P \cspm{ [] } Q) $ & $ =   
		\begin{aligned}[t] 
		& \{(\emptyseq,X,a)~|~(\emptyseq,X) \in \failures^b(P) \cap \failures^b(Q)\\
		& \quad \land (\emptyseq,X,a) \in \revivals(P) \union \revivals(Q)\}\\
		& \cup \{(s,X,a)~|~ (s,X,a) \in \revivals(P) \cup \revivals(Q) \land s \neq
		\emptyseq\} \end{aligned} $ \\
		$ \revivals(P \cspm{ |\~{}| } Q) $ & $ =  \revivals(P) \cup \revivals(Q) $ \\
		$ \revivals(P \cspm{ [|}X\cspm{|] } Q) $ & $ =  
		\begin{aligned}[t] 
		&\{(u,Y \cup Z,a)~| \exists s,t \spot (s,Y) \in \failures^b(P) \land (t,Z) \in \failures^b(Q)\\
		& \land u \in s \parallel[X] t \land Y \setminus X = Z \setminus X \\
		& \land (( a \in X \land (s,Y,a) \in \revivals(P) \land (t,Z,a) \in \revivals(Q)) \\
		& \quad \lor (a \nin X \land ((s,Y,a) \in \revivals(P) \lor (t,Z,a) \in \revivals(Q))))\} \\
		& \union \{(u,Y \cup Z,a)~| \exists s,t \spot (s,Y,a) \in \revivals(P) \land t \cat \seq{\tick} \in
		\traces(Q) \\
		& \land Z \subseteq X \land a \nin X \land u \in s \parallel[X] t\} \\
		& \union \{(u,Y \cup Z,a)~| \exists s,t \spot (t,Z,a) \in \revivals(Q) \land s \cat \seq{\tick} \in
		\traces(P) \\
		& \land Y \subseteq X \land a \nin X \land u \in s \parallel[X] t\} 
		\end{aligned}$ \\
		$ \revivals(P \cspm{ \textbackslash~} X) $ & $ =  \{(s \hide X,Y,a) ~|~ (s,Y
		\cup X,a) \in \revivals(P)\} $ \\
		$ \revivals(P\cspm{ [[}R\cspm{]]}) $ & $ =  \{(s',X,a')~|~ \exists s,a \spot
		s R s' \land a R a' \land (s,R^{-1}(X),a) \in \revivals(P)\}$ \\
		where:&\\
		$\quad \failures^b(P)$ & $ = \{(s,X)~|~X \subseteq \Sigma \land s \in Dead
		\}\union \{(s,X)~|~(s,X,a) \in Rev \}$
		\end{tabular}}
	\caption{Revivals semantic clauses}
	\label{tab:semanticClausesRevivals}
\end{table}

We illustrate the calculation of deadlocks and revivals using our ring-buffer system.

\begin{crexample}{\ref{ex:buffer}} We illustrate the revivals for the processes  \verb|Cell(0,0)| and \verb|Controller(0,0,0,0)|. Since these two processes do not deadlock, they have empty $deadlocks$ sets. For the following $revivals$ sets, to make our presentation more compact, we use $(tr,S,S')$ as a shorthand denoting all pairs $(tr,X, a)$ such that $X \subseteq S$ and $a \in S'$. 
	\begin{itemize}
		\item $revivals(\cspm{Controller(0,0,0,0)}) = \\ \qquad \begin{array}[t]{l}
		\{ (\Seq{},\Sigma - \{input.0,input.1,input.2\}, \{input.0,input.1,input.2\}),\\ 
		\ (\Seq{input.0},\Sigma - \{input.0,input.1,input.2,output.0\}, \\ 
		\qquad \{input.0,input.1,input.2,output.0\}), \\ 
		\ (\Seq{input.1},\Sigma - \{input.0,input.1,input.2,output.1\},\\
		\qquad  \{input.0,input.1,input.2,output.1\}), \\ 
		\ (\Seq{input.2},\Sigma - \{input.0,input.1,input.2,output.2\}, \\
		\qquad  \{input.0,input.1,input.2,output.2\}), \ldots \}\end{array}$
		
		\item $revivals(\cspm{Cell(0,0)}) = \\ \qquad \begin{array}[t]{l}
		\{ (\Seq{},\Sigma - \{read.0,write.0,write.1,write.2\}, 
		\\ \qquad\{read.0,write.0,write.1,write.2\}),
		\\\ (\Seq{read.0},\Sigma - \{read.0,write.0,write.1,write.2\}, 
		\\ \qquad \{read.0,write.0,write.1,write.2\}),
		\\ \ (\Seq{write.0},\Sigma - \{read.0,write.0,write.1,write.2\}, 
		\\ \qquad \{read.0,write.0,write.1,write.2\}),
		\\ \ (\Seq{write.1},\Sigma - \{read.1,write.0,write.1,write.2\}, 
		\\ \qquad \{read.1,write.0,write.1,write.2\} ),
		\\ \ (\Seq{write.2},\Sigma - \{read.2,write.0,write.1,write.2\}, 
		\\ \qquad \{read.2,write.0,write.1,write.2\}),\ldots \}\end{array}$
	\end{itemize}
\end{crexample}

The CSP framework offers, for each semantic model, a refinement relation between processes. \cspm{[F=} is the refinement relation for the stable failures model. \cspm{P [F= Q} holds if and only if $traces(Q) \subseteq traces(P)$ and $\failures(Q) \subseteq \failures(P)$ hold. This order relation can be seen as depicting that \cspm{Q} is more deterministic than \cspm{P}. \cspm{[V=} is the refinement relation for the stable revivals model. \cspm{P [V= Q} holds if and only if $traces(Q) \subseteq traces(P)$, $revivals(Q) \subseteq revivals(P)$ and $deadlocks(Q) \subseteq deadlocks(P)$ hold. This relation can be seen as depicting a finer more-deterministic order. While \cspm{P [F= Q} establishes that \cspm{Q} is more deterministic than \cspm{P} after each trace, \cspm{P [V= Q} establishes that \cspm{Q} is more deterministic than \cspm{P} for each event offered after each trace (namely, \cspm{Q} must refuse fewer events than \cspm{P} for each offer of an event $a$ after the trace $s$). 

\subsection{Network model}

The concepts presented in this section are essentially a slight reformulation of concepts presented in
\cite{Brookes91,Roscoe87}. A \emph{network} provides a model for a concurrent system in terms of its components. 

\begin{defn} A network is a sequence of components $\Seq{C_1,\ldots,C_n}$, where $C_i = (A_i,P_i)$, $A_i \subseteq \Sigma$ and  $P_i$ is a CSP process.
\end{defn}

In this work, we consider only \emph{live} networks. A network is live if and only if it is busy, non-terminating and triple disjoint. A network is busy if and only if every component is deadlock free, non-terminating if and only if every component does not terminate, and triple disjoint if and only if an event is shared by at most two components.

The \emph{communication topology} (or topology for short) of a network can be analysed through its \emph{communication graph}. It shows how components are connected, where a connection (i.e. edge) between two components represent that they (might) communicate/interact. This graph only depicts the (static) connections between components.

\begin{defn}
	The communication graph of a network is an undirected graph where nodes denote components and there is an (undirected) edge between two nodes if and only if the corresponding components share some event.
\end{defn} 

For example, Figure~\ref{fig:comm_snap_graph} depicts the communication graph for the system implementing the (deadlocking) symmetric version of the dining philosophers problem with 3 philosophers and 3 forks, Figure~\ref{fig:ringbuffer} depicts the communication graph for an instance of our ring-buffer network with 3 storage cells, and Figure~\ref{fig:diningphilosopher} depicts the communication graph for our (asymmetric) dining-philosophers network with 3 philosophers and 3 forks.

Note that this graph can be constructed based on a static analysis of components and their alphabets. The communication topology of a network plays an important role in deadlock analysis as we present later.

The behaviour of a network is given by a composition of the components' behaviours as follows.

\begin{defn} Let $V = \Seq{C_1,\ldots,C_n}$, where $C_i = (A_i,P_i)$, be a network. The behaviour of $V$ is given by the CSP expression: \cspm{|| $i$ : $\{1\ldots n\}$ @ [$A_i$] $P_i$}
	\label{def:behaviour}
\end{defn}

We (re-)define the systems in our running examples using the network model. Note how the behaviour of the following networks coincide with the processes that we have earlier introduced to capture the behaviour of the systems in our running examples. We define our ring buffer system as follows.

\begin{crexample}{\ref{ex:buffer}} Our ring-buffer system is defined by the \cspm{RingBuffer} network. Its behaviour requires processes to synchronise on shared events.
	\begin{center}
		$\cspm{RingBuffer} = \Seq{Controller,Cell(0),\ldots,Cell(N-1)}$
	\end{center}
	\begin{itemize}
		\item $Controller = (\{|\cspm{read},\cspm{write},\cspm{input},\cspm{output}|\},\cspm{Controller(0,0,0,0)})$
		\item $Cell(i) = (\{|\cspm{read.i},\cspm{write.i}|\},\cspm{Cell(i,0)})$
		\begin{itemize}
			\item In $CSP_M$, the extension operator $\{|e_1,\ldots,e_n|\}$ gives the events that extend the elements $e_i$. For instance, in this example, we have $\{|\cspm{read}|\}$ gives $\{\cspm{read.i.v} \mid i \in \{0\ldots N-1\} \land v \in \{0,1\} |\}$, assuming cells store binary values $v$.  
		\end{itemize}
	\end{itemize}
\end{crexample}

The asymmetric solution to the dining philosophers problem is defined as follows.

\begin{crexample}{\ref{ex:philosophers}} We define our asymmetric solution system using network \cspm{APhils}. Note that philosopher $N-1$ behaves asymmetrically. Also, we point out that this network's behaviour requires processes to synchronise on shared events.
	
	\begin{center}
		$\cspm{APhils} = \Seq{Phil(0),\ldots,Phil(N-2),APhil(N-1),Fork(0),\ldots,Fork(N-1)}$
	\end{center}
	\begin{itemize}
		\item $Phil(i) = (AlphaPhil(i),\cspm{Phil(i)})$
		\begin{itemize}
			\item $AlphaPhil(i) = \begin{array}[t]{l} \{\cspm{sit.i},\cspm{pickup.i.i},\cspm{pickup.i.next(i)},\cspm{eat.i},\\
			\quad \cspm{putdown.i.i},\cspm{putdown.i.next(i)},\cspm{getup.i}\} \end{array}$
		\end{itemize}
		\item $APhil(i) = (AlphaPhil(i),\cspm{APhil(i)})$
		\item $Fork(i) = \begin{aligned}[t]&(\{\cspm{pickup.i.i},\cspm{pickup.prev(i).i},\\ & \qquad\cspm{putdown.i.i},\cspm{putdown.prev(i).i}\},\cspm{Fork(i)})
		\end{aligned}$
	\end{itemize}
\end{crexample}

To reason about the behaviour of a network, we define the notion of a \emph{state}. A state presents an instant picture of the behaviour of the network in terms of its components' behaviours.

\begin{defn} Let $V = \Seq{C_1,\ldots,C_n}$ be a network where $C_i = (A_i,P_i)$. A state of the network is a pair $(s,R)$, where $R = (R_1,\ldots,R_n)$, such that:
	\begin{itemize}
		\item $s \in \Sigma^*$
		\item For all $i \in \{1\ldots n\}$, $(s \project A_i,R_i) \in \overline{\failures}(P_i)$.
	\end{itemize}
\end{defn}

A network deadlocks if and only if it can reach a state in which no further action can be taken. 

\begin{defn} Let $V = \Seq{C_1,\ldots,C_n}$ be a network where $C_i = (A_i,P_i)$, and $\sigma= (s,R)$, where $R = (R_1,\ldots,R_n)$, one of $V$'s states.
	\begin{center}
		$Deadlocked(\sigma) \defs Refusals(\sigma) = \Sigma$
	\end{center}
	where $Refusals(\sigma) \defs \Union\{ A_i \inter R_i \mid i \in \{1\ldots n\}\}$
\end{defn}

For live networks, ungranted requests are considered to be the building blocks of deadlocks. An ungranted request denotes a wait-for dependency from a component to another component in a given state. It arises, in a system state, when one component is offering an event which is being refused by its communication partner, so they cannot synchronise on this event. As components in a live network do not deadlock, a deadlock must be formed of a situation in which there exists a mutual wait between components.

\begin{defn} Let $V = \Seq{C_1,\ldots,C_n}$ be a network where $C_i = (A_i,P_i)$, and $\sigma= (s,R)$, where $R = (R_1,\ldots,R_n)$, one of $V$'s states. There is an ungranted request between $i$ and $j$ in state $\sigma$ if the following predicate holds.
	\begin{flalign*}
	&\ungReq(\sigma,i,j) \defs \\& \qquad \req(\sigma,i,j) \land \ung(\sigma,i,j) \land
	\inVoc(\sigma,i,j)&
	\end{flalign*}
	where:
	\begin{itemize}
		\item $\req(\sigma,i,j) \defs (A_i - R_i) \inter
		A_j \neq \emptyset$
		\item $\ung(\sigma,i,j) \defs (A_i - R_i)
		\inter (A_j - R_j) = \emptyset$
		\item $\inVoc(\sigma,i,j) \defs (A_i - R_i)
		\union (A_j - R_j) \subseteq \textit{Voc}$
		\begin{itemize}
			\item $\textit{Voc} =\bigcup_{i,j \in \{1 \ldots n\} \land i \neq j} (A_i \inter A_j)$ gives the vocabulary of the network, namely, the events requiring components to synchronise.
		\end{itemize}
	\end{itemize}
\end{defn}

Given a fixed state $\sigma$, we use $i \ur j$ to denote that there exists an ungranted request from $i$ to $j$ in $\sigma$. For a given fixed state, one can calculate all ungranted requests between components and create what we call a \emph{snapshot graph}. 

\begin{defn} A snapshot graph for system state $\sigma$ is a directed graph where components are nodes and there is an edge from component $i$ to component $j$ if and only if there is an ungranted request in $\sigma$ from $i$ to $j$. 
\end{defn}
	
Unlike communication graphs that depict a static view of the (fixed) topology of the network, these snapshot graphs give instantaneous pictures of the dynamic behaviour of the system. Instead of attempting to capture the overall complexity of components' interactions, a snapshot graph depicts dependencies (i.e., ungranted requests) between components, which are the building blocks for our study of deadlock.

As mentioned, a network deadlocks when all components are blocked in a given state. For a live network, in such a state, all components must be waiting for some other component to advance. This situation implies that there must exist a cycle of ungranted requests between components. To be more concrete, the snapshot graph constructed for a deadlocked system state must exhibit a cycle (of dependencies). The following theorem, asserting these two facts, is our main tool in proving the soundness of our framework. These facts and their proofs can be found in~\cite{Roscoe87,Martin96,Roscoe98}

\begin{thm}\label{thm:deadlock}
	Let $V = \Seq{C_1,\ldots,C_n}$ be a network. In a deadlock state $\sigma$:
	\begin{enumerate}
		\item Each $C_i$ must be blocked.
		\begin{itemize}
			\item A process $C_i$ is blocked in $\sigma$ if $A_i \subseteq Refusals(\sigma)$.
		\end{itemize}
		\item There must be a cycle of ungranted requests among components.
		\begin{itemize}
			\item Given a state $\sigma$, a cycle of ungranted requests is a sequence $c \in \{1\ldots n\}^*$ such that for all $i$ in $\{1\ldots |c|\}$,  $c_i \ur c_{i \oplus 1}$ holds, where $\oplus$ denotes addition modulo $|c|$ and $|c|$ is the length of sequence/cycle $c$.
		\end{itemize}
	\end{enumerate}
\end{thm}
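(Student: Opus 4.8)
The plan is to dispatch the first part directly from the definitions and to spend the real effort on the second. For part 1, since $\sigma$ is deadlocked we have $Refusals(\sigma) = \Sigma$ by definition, and since every component alphabet satisfies $A_i \subseteq \Sigma$, we get $A_i \subseteq \Sigma = Refusals(\sigma)$; hence each $C_i$ is blocked. For part 2 the strategy is to show that, in the snapshot graph of $\sigma$, every node has an outgoing ungranted-request edge, and then to extract a cycle by finiteness. Throughout I write $O_k = A_k - R_k$ for the set of events that component $k$ offers in $\sigma$.

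First I would record three properties of a deadlock state $\sigma$. (i) Each $O_k$ is non-empty: busyness makes every $P_k$ deadlock free and non-termination forbids offering only $\tick$, so the maximal failure $(s \project A_k, R_k)$ must leave some genuine event of $A_k$ unrefused. (ii) Every offered event is shared, i.e. $O_k \subseteq \textit{Voc}$: if some $a \in O_k$ belonged to no other alphabet then no component would both own and refuse $a$ (only $k$ owns it, and $k$ offers it), so $a \nin Refusals(\sigma)$, contradicting $Refusals(\sigma) = \Sigma$. (iii) Distinct components never share an offered event, i.e. $O_i \inter O_j = \emptyset$ for $i \neq j$: if $b \in O_i \inter O_j$ then, by triple disjointness, $i$ and $j$ are the only owners of $b$ and both offer it, so again $b \nin Refusals(\sigma)$, a contradiction.

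Next I would verify that every node emits an edge. Fix a component $i$ and, by (i), choose $a \in O_i$. By (ii) $a \in \textit{Voc}$, and since $a \in Refusals(\sigma)$ there is some $j$ with $a \in A_j \inter R_j$; as $a \nin R_i$ this forces $j \neq i$. I would then check $\ungReq(\sigma, i, j)$ against its three conjuncts: $\req(\sigma,i,j)$ holds because $a \in (A_i - R_i) \inter A_j$; $\ung(\sigma,i,j)$ holds by (iii); and $\inVoc(\sigma,i,j)$ holds by (ii) applied to $i$ and to $j$. Hence $i \ur j$. Since $\{1,\ldots,n\}$ is finite and each node has out-degree at least one, following edges from any starting node must eventually revisit a node, and the segment between the two visits is a cycle $c$ with $c_i \ur c_{i \oplus 1}$ for all $i$, as required.

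The main obstacle is observations (ii) and (iii): the real content is recognising that the two side conditions $\ung$ and $\inVoc$ of an ungranted request are not secured by a clever choice of partner but hold automatically throughout a deadlock state, being forced by $Refusals(\sigma) = \Sigma$ together with triple disjointness, since any violation would exhibit an event that the composed network could actually perform, contradicting deadlock. Once these are in place, the edge-existence check is immediate and the cycle follows from a one-line pigeonhole argument.
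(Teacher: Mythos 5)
Your proof is correct. The paper does not actually prove Theorem~1 --- it states it as a known result and defers the proofs to the cited works of Roscoe, Brookes and Martin --- and your argument is essentially a faithful reconstruction of that classical argument: part~1 is purely definitional, and for part~2 you show that every node of the snapshot graph has out-degree at least one (each component offers something by busyness and non-termination; any offered event that were unshared, or simultaneously offered by its unique partner, would escape $Refusals(\sigma)=\Sigma$, using triple disjointness), and then extract a cycle by finiteness. The only tacit assumption worth flagging is the standard network-model convention that each $P_k$ only ever offers events of its own alphabet $A_k$, which your observation~(i) relies on (refusing all of $A_k$ must amount to refusing everything, so that busyness gives a contradiction); this convention is implicit in the paper's definition of a network and in the references it cites, so invoking it is legitimate, but it deserves to be stated.
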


In this paper, we will mainly prove that a system is deadlock free by showing that a cycle of ungranted requests cannot arise in any conceivable state of the system. Since such a cycle is a necessary condition for a deadlock, deadlock freedom can be proved this way. We finish this section by illustrating a few of the concepts we have presented.

\begin{example} We discuss the concepts of communication and snapshot graph using the example of the symmetric (deadlocking) dining philosophers. This system is very similar to the asymmetric version that we have defined in Running Example~\ref{ex:philosophers} but instead of having one right-handed philosopher (as captured in component $APhil$), all philosophers are left-handed (as in component $Phil)$. We discuss an instance of this system with 3 (left-handed) philosophers and 3 forks. Since all philosophers are left-handed, they first acquire their left-hand-side fork in order to eat. If all of them acquire their left-hand-side fork at the same time, let us call this system state $\sigma$, then all forks have been acquired and none of them can acquire their right-hand-side one; a deadlock occurs. 
	
Figure~\ref{fig:comm_snap_graph} depicts the communication graph of this system (left-hand side) and the snapshot graph for system state $\sigma$ (right-hand side). On the snapshot graph, a dependency from $Fork_x$ to $Phil_y$ arises because the philosopher $y$ has acquired fork $x$ but has not released it yet. So, the fork is offering event $putdown.x.y$ which is being refused by the philosopher, who is trying to acquire its right-hand-side fork. A dependency from $Phil_x$ to $Fork_y$ occurs because the philosopher $x$ is trying to acquire the fork $y$, which has already been acquired by the philosopher who is next in the cycle of dependencies ($Phil_{x\oplus1}$, where $\oplus$ is addition modulo 3). The cycle of ungranted requests in the snapshot graph is an evidence of the deadlock system state $\sigma$ represents. Note that ungranted requests can only arise (in a snapshot graph) between components that are connected by an edge in the communication graph; if two components do not share an event, there cannot be an ungranted request between them. \qedb{}
	
	\begin{figure}[t]
		\centering
		\begin{minipage}{0.35\textwidth}
		\resizebox{\textwidth}{!}{%
		\begin{tikzpicture}[shorten >=1pt,node distance=1cm]
		\node[sqstate]	(p0)                {$Phil_0$};
		\node[sqstate]	(f0) [above=of p0] {$Fork_0$};
		\node[sqstate]	(p2) [above right=of f0] {$Phil_2$};
		\node[sqstate]	(f2) [below right=of p2] {$Fork_2$};
		\node[sqstate]	(p1) [below=of f2] {$Phil_1$};
		\node[sqstate]	(f1) [below left=of p1] {$Fork_1$};
		\path[-] 
		(p0)  edge node {} (f0)
		(f0) edge node {} (p2)
		(p2) edge node {} (f2)
		(f2) edge node {} (p1)
		(p1) edge node {} (f1)
		(f1) edge node {} (p0);
		\end{tikzpicture}}
	\caption{Communication graph for symmetric dining philosophers.}
	\end{minipage}\hspace{2cm}%
	\begin{minipage}{0.35\textwidth}
		\resizebox{\textwidth}{!}{%
			\begin{tikzpicture}[shorten >=1pt,node distance=1cm]
			\node[sqstate]	(p0)                {$Phil_0$};
			\node[sqstate]	(f0) [above=of p0] {$Fork_0$};
			\node[sqstate]	(p2) [above right=of f0] {$Phil_2$};
			\node[sqstate]	(f2) [below right=of p2] {$Fork_2$};
			\node[sqstate]	(p1) [below=of f2] {$Phil_1$};
			\node[sqstate]	(f1) [below left=of p1] {$Fork_1$};
			\path[{latex[scale=3.0]}-] 
			(p0)  edge node {} (f0)
			(f0) edge node {} (p2)
			(p2) edge node {} (f2)
			(f2) edge node {} (p1)
			(p1) edge node {} (f1)
			(f1) edge node {} (p0);
			\end{tikzpicture}}
	\end{minipage}
		\caption{Communication graph and snapshot graph for symmetric dining philosophers.}
		\label{fig:comm_snap_graph}
	\end{figure}
\end{example}

\section{Conflict freedom, acyclic networks and decomposition}
\label{sec:decomposition}

Conflict freedom can be a very helpful property in proving deadlock freedom for a system. It can be used to decompose a proof of deadlock freedom for a system or, even, to prove that an acyclic network is deadlock free. In this section, we present a refinement expression that captures conflict freedom for a pair of components. An important advantage of this formalisation is that it can be automatically checked by a refinement checker. We also discuss how this property, and our refinement expression, can be used to break down a deadlock-freedom proof and to show an acyclic network deadlock free.

\begin{defn} A \emph{conflict} is a cycle of ungranted requests of size two, i.e. a cycle between a pair of components in a network. In a system state $\sigma$, a conflict between components $i$ and $j$ arises if and only if $i \ur j$ and $j \ur i$. In such a state, both components are willing to interact with one another, but they cannot agree on the event they need to synchronise on. Then, a pair of components $i$ and $j$ is \emph{conflict free} if and only if in there is no system state in which a conflict between $i$ and $j$ occurs.	
\end{defn}

Conflict freedom can be more naturally captured by a refinement expression if the pair of components being verified is placed in a particular behavioural context. This context abstracts the behaviour of both components by using the process \cspm{Abs}. It abstracts away the events that components can perform individually as they do not play a part in making a system deadlock. This abstraction plays a fundamental role in our work; instead of their original behaviour, our behavioural analyses examine the abstract behaviour of components.

\begin{defn} \label{def:abs} For a given network $V = \Seq{C_1,\ldots,C_n}$, where $C_i = (A_i,P_i)$, we have that \cspm{Abs(i)} = $P_i \hide (\Sigma - \textit{Voc})$.
\end{defn}

Our context is also designed so it offers the special event $req$ whenever these abstract components can both offer an event from $A_i \inter A_j$. This context is given by the process \verb"Context".

\begin{defn} Let $C_i$ and $C_j$ be	two components of network $V$.
	\begin{itemize}
		\item[] \resizebox{0.92\textwidth}{!}{\texttt{Context(i,j) = Ext(i,j)[union(A(i),{req})||union(A(j),{req})] Ext(j,i)}}
	\end{itemize}
	where \resizebox{0.90\textwidth}{!}{\texttt{Ext(i,j) = Abs(i) [[ x <- x, x <- req | x <- inter(A(i),A(j))]]}}
\end{defn}

When in this context, if both components are making requests to each other (i.e. they are both offering events in $A_i \cap A_j$) but they do not agree on this event (i.e. they offer different events), then they both can offer $req$ so they can synchronise on $req$ but they cannot synchronise  on any event on $A_i \cap A_j$. So, a conflict arises when the $req$ event is offered
and $A_i \inter A_j$ is refused. Hence, a conflict free pair of
processes does not have some revival of the form
$(s, A_i \inter A_j ,req)$. So, the characteristic process \verb"ConflictFreeSpec" capturing conflict freedom must have all possible revivals but these ones.

\begin{defn} Let $C_i$ and $C_j$ be two components in network $V$.
	
\begin{verbatim}
ConflictFreeSpec(i,j) = 
    let U_A = union(A(i),A(j))
        I_A = inter(A(i),A(j))
        CF = ((|~| ev : I_A @ ev -> CF) 
                 [] req -> CHAOS(union(U_A,{req})))
             |~| 
             (|~| ev : U_A @ ev -> CF)
within CF
\end{verbatim}
	\noindent
	where:
		\verb"CHAOS(A) = SKIP |~| STOP |~| (|~| ev : A @ ev -> CHAOS(A))"
\end{defn}

The following theorem depicts the refinement expression we propose to check conflict freedom. Note that we use the stable revivals model, as this property can be more intuitively captured in this model. The reason is the nature of conflict freedom. A pair of processes are conflict free if they are not at all willing to engage or if they are willing and able to engage. This implies that in the stable failures model, we would need a process that could refuse all shared events as well as offer some events to engage, but this intuitively violates the property that refusals should be subset closed.   

\begin{thm} \label{thm:conflictFreedomAssertion}
		\verb"ConflictFreeSpec(i,j) [V= Context(i,j)" $\Rightarrow$ the pair of components $(C_i,C_j)$ is conflict free. 
\end{thm}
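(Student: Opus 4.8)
The plan is to argue by contraposition: assuming the pair $(C_i,C_j)$ is \emph{not} conflict free, I will exhibit a revival of \cspm{Context(i,j)} that is absent from \cspm{ConflictFreeSpec(i,j)}, thereby refuting the refinement \cspm{ConflictFreeSpec(i,j) [V= Context(i,j)}. Writing $I = A_i \inter A_j$ throughout, the whole argument turns on isolating one distinguished family of revivals.

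First I would pin down that distinguished revival. The key structural fact about the specification is that \cspm{CF} never offers $req$ while refusing all of $I$: the branch that offers $req$ does so only in external choice with an internal choice over the shared events $I$, so at any stable state where $req$ is on offer some event of $I$ is simultaneously on offer. Consequently every revival of \cspm{ConflictFreeSpec(i,j)} with revival-event $req$ has a refusal set omitting at least one element of $I$, and therefore $(s, I, req) \notin \revivals(\cspm{ConflictFreeSpec(i,j)})$ for every trace $s$. This single family is exactly the signature of a conflict, and the refinement forbids \cspm{Context(i,j)} from exhibiting it.

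Next I would translate a conflict into precisely such a revival of the context. Suppose $V$ has a state $\sigma = (s,R)$ with $i \ur j$ and $j \ur i$. From $\req(\sigma,i,j)$ and $\req(\sigma,j,i)$ I obtain that $i$ offers a nonempty $O_i = (A_i - R_i) \inter I$ and $j$ offers a nonempty $O_j = (A_j - R_j) \inter I$ (any event offered into the partner's alphabet lies in $I$ automatically), while $\ung(\sigma,i,j)$ forces $O_i \inter O_j = \emptyset$. I then lift the maximal failures $(s \project A_i, R_i)$ and $(s \project A_j, R_j)$ to a matching stable state of \cspm{Context(i,j)}: since $\inVoc(\sigma,i,j)$ places every offered event in $\textit{Voc}$, no event of $\Sigma - \textit{Voc}$ is enabled, so after the hiding in \cspm{Abs} both components remain \emph{stable}; the relational renaming of \cspm{Ext(i,j)}/\cspm{Ext(j,i)} and the alphabetised parallel composition then preserve stability. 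In this state both sides offer $req$ (because $O_i,O_j \neq \emptyset$), so the context synchronises on $req$; but no event of $I$ can be jointly offered (because $O_i \inter O_j = \emptyset$), so the context refuses all of $I$. This yields $(s', I, req) \in \revivals(\cspm{Context(i,j)})$ for the corresponding context-trace $s'$, contradicting the refinement by the previous paragraph.

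The main obstacle is this middle step: faithfully lifting a network state — a tuple of maximal component failures — to a genuine \emph{stable} revival of the two-component context through hiding, relational renaming, and parallel composition. The delicate points are (i) confirming that stability survives the hiding in \cspm{Abs}, which is exactly where $\inVoc$ is indispensable, as it rules out any newly enabled $\tau$; and (ii) constructing $s'$ and checking, against the semantic clauses of Tables~\ref{tab:semanticClausesFailures} and~\ref{tab:semanticClausesRevivals}, that the offer/refusal pattern ``offer $req$, refuse $I$'' is realised precisely as a revival rather than merely a failure. Once this correspondence is verified, the clash with the absence of $(s',I,req)$ from \cspm{ConflictFreeSpec(i,j)} closes the contradiction and establishes conflict freedom.
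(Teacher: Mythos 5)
Your proposal is correct and, modulo the contrapositive phrasing, follows essentially the same route as the paper's proof: both arguments rest on computing that $\revivals(\cspm{ConflictFreeSpec(i,j)})$ contains every revival except those of the form $(s,X,req)$ with $A_i \inter A_j \subseteq X$, and on the correspondence between a conflict state of the network and exactly such a revival of \cspm{Context(i,j)}. If anything, your treatment of the lifting step (stability preserved through hiding thanks to $\inVoc$, then through renaming and parallel composition) is more explicit than the paper's, which simply asserts that correspondence.
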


\begin{proof} In a conflict free state, the \verb"Context" process must not
have a revival of the form $(s, X ,req)$ where $A_i \inter A_j \subseteq X$. 
After calculation of the revivals of the \verb"ConflictFreeSpec", its revivals
are given by the following set comprehension expression $\{(s,X,a) | s \in (A_i \cup
A_j \cup \{req\})^* \land a \in (A_i \cup A_j\cup \{req\}) \land a\not \in X \land (a = req \Rightarrow
(A_i \cap A_j) \not \subseteq X) \}$; this specification has all the
possible revivals but the ones generated by a conflict. If the refinement
expression holds, then $revivals(\verb"ConflictFreeSpec(i,j)") \supseteq
revivals(\verb"Context(i,j)")$. Hence, in this case \verb"Context" has
only conflict free revivals. For the other components of this model, $deadlocks$ and $traces$,
the restrictions are evident. Traces are not restricted at all,
$traces(\verb"ConflictFreeSpec(i,j)") = (A_i \cup A_j \cup \{req\})^*$,
also as deadlock can only arise if there is a conflict, we restrict the
set of deadlocks to be empty, $deadlocks(\verb"ConflictFreeSpec(i,j)") =
\emptyset$.
\end{proof}

Conflict freedom can be used to break down the verification of deadlock for a network to the analysis of some of its subnetworks. In the communication graph of a network, the \emph{disconnecting edges} are the edges whose removal would increase the number of connected components in this graph -- these are bridges in graph-theoretic terms. We call essential subnetworks the connected components resulting after removing some of these edges.

\begin{thm}[Theorem 4 in \cite{Brookes91}] \label{thm:decomposition} A network $V$ is deadlock free if the essential subnetworks resulting from the removal of conflict-free disconnecting edges are deadlock free. A disconnecting edge is conflict free if and only if the two components participating on it are conflict free.
\end{thm}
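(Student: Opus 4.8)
The plan is to prove the contrapositive: assuming the network $V$ deadlocks, I would exhibit one of the essential subnetworks that also deadlocks, contradicting the hypothesis. So I fix a deadlock state $\sigma = (s,R)$ of $V$, with $R = (R_1,\ldots,R_n)$, so that $Refusals(\sigma) = \Sigma$. The intuition, via Theorem~\ref{thm:deadlock}, is that a deadlock is witnessed by a cycle of ungranted requests, and such a cycle lives in the communication graph; since a disconnecting edge is a bridge (it lies on no cycle), a cycle cannot use it — crossing a bridge and returning would force ungranted requests in both directions across it, i.e.\ a conflict. Conflict-freedom of the removed edges should therefore confine the deadlock to a single essential subnetwork, and the real work is to turn this intuition into an actual deadlocked state of that subnetwork.

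The key lemma, which I expect to be the main obstacle, is the following: for every removed (conflict-free disconnecting) edge $\{i,j\}$, at least one of its two endpoints refuses, in $\sigma$, all of the shared events $A_i \cap A_j$. I would prove this by contradiction. Suppose $C_i$ offers some $a \in A_i \cap A_j$ (so $a \notin R_i$) and $C_j$ offers some $b \in A_i \cap A_j$ (so $b \notin R_j$); I claim this yields both $i \ur j$ and $j \ur i$, a conflict. For $i \ur j$: $\req(\sigma,i,j)$ holds because $a \in (A_i - R_i) \cap A_j$; $\ung(\sigma,i,j)$ holds because any jointly offered shared event would, by triple-disjointness, be refused by no component, contradicting $Refusals(\sigma) = \Sigma$; and $\inVoc(\sigma,i,j)$ holds because in a deadlock every offered event must be shared (an offered unshared event could be performed unilaterally, contradicting $Refusals(\sigma) = \Sigma$), so every offer lies in $\textit{Voc}$. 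The argument for $j \ur i$ is symmetric. This contradicts conflict-freedom of $\{i,j\}$, establishing the lemma.

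With the lemma in hand I would finish combinatorially. The essential subnetworks, obtained by deleting bridges, are linked by the removed edges into a tree (re-inserting a bridge always merges two components, so no cycle among essential subnetworks can form). I orient each removed bridge towards an endpoint whose side refuses all of that edge's shared events; such an endpoint exists by the lemma. Since the quotient tree is finite and acyclic, following the orientation must terminate at a \emph{sealed} essential subnetwork $W$ — one into which every incident bridge points, i.e.\ whose own endpoints refuse all their cross-bridge events. I would then check that the restriction $\sigma|_W$ is a deadlock of $W$: each component of $W$ is blocked within $W$, because its cross-bridge events are refused by $W$'s own endpoints (sealing), while its $W$-internal shared events and its unshared events are refused within $W$ exactly as in the full deadlock. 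Hence $Refusals_W(\sigma|_W) = \Sigma_W$, the union of the alphabets of $W$.

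The remaining step is routine bookkeeping: confirming that $\sigma|_W$ is a genuinely reachable state of $W$. This follows because the projection of the global trace $s$ onto the events of $W$ is a trace of $W$'s behaviour (the synchronisations internal to $W$ form a subset of those of $V$), and each component retains the same local maximal failure. Concluding, $W$ deadlocks, contradicting the assumption that all essential subnetworks are deadlock free; therefore $V$ is deadlock free. I expect the delicate points to be the key lemma above — in particular the careful use of triple-disjointness together with $Refusals(\sigma) = \Sigma$ to validate $\ung$ and $\inVoc$ simultaneously in both directions — and making the reachability of $\sigma|_W$ precise against the parallel-composition semantics.
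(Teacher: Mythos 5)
The paper never proves this theorem: it is imported as Theorem~4 of Brookes and Roscoe~\cite{Brookes91}, and what follows the statement in the text is only an illustrative example. So there is no in-paper proof to compare yours against; judged on its own, your argument is correct and self-contained. Your key lemma is the right pivot and its proof is sound: in a deadlock state $\sigma$, $Refusals(\sigma)=\Sigma$ together with triple-disjointness gives $\ung$ in both directions (a jointly offered shared event would be refused by no component having it in its alphabet), and gives $\inVoc$ (an offered event outside $\textit{Voc}$ could be refused by nobody but its owner), so two opposing offers across a removed edge really do produce $i \ur j$ and $j \ur i$, contradicting conflict freedom of that pair. The finishing combinatorics are also right: the quotient of the communication graph by the essential subnetworks is a forest precisely because bridges lie on no cycle, any orientation of a finite forest has a sink, and for a sink (``sealed'') subnetwork $W$ your case split -- unshared events, $W$-internal shared events, and cross-bridge events, where any edge leaving $W$ is necessarily one of the removed bridges -- shows every event of $W$'s alphabet is refused inside $W$. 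Two remarks. First, the ``reachability bookkeeping'' you worry about is lighter than you fear: the paper's notion of state only requires that the projection of the trace onto each $A_i$ carry a maximal failure of $P_i$, which your restricted state inherits verbatim; and since the traces of an alphabetised parallel composition are exactly the sequences whose projections are component traces, the restriction of $s$ to $W$'s alphabet is indeed a genuine trace of $W$'s behaviour. Second, a strength of your route worth making explicit: although your opening intuition gestures at Theorem~\ref{thm:deadlock}'s cycle of ungranted requests, the actual proof never invokes it, so you need only triple-disjointness and not busyness -- the sealed-subnetwork construction does all the work directly.
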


We give an example to illustrate the concepts linked to decomposition.

\begin{example} Let $V = \Seq{C_0,C_1,C_2,C_3,C_4,C_5}$ be a live network for which communication graph is given in Figure~\ref{fig:decomposition_example}. This network has two rings ($C_0,C_1,C_2$ and $C_3,C_4,C_5$) which are interconnected via components $C_0$ and $C_3$. Also, let $\sigma_1$, $\sigma_2$, and $\sigma_3$ be states of this network for which snapshot graphs are also depicted in Figure~\ref{fig:decomposition_example}. 
	
This network has a single disconnecting edge $(C_0,C_3)$. Note that by removing this edge, we end up with two essential subnetworks (i.e. connected components in graph-theoretic terms) $\Seq{C_0,C_1,C_2}$ and $\Seq{C_3,C_4,C_5}$. If, instead, we decided to remove any other edge, we would end up with a single connected component. Hence, all other edges are not disconnecting.   
	
In a live network, a component is either blocked because it is in a path of ungranted requests leading to a blocked subnetwork or because it is in a cycle of ungranted requests; such a cycle is sort of a fundamental blocked subnetwork. Considering our network, a deadlocked state could arise because there is a conflict between our two rings, i.e. a conflict between $C_0$ and $C_3$, as for instance in state $\sigma_1$. Note that in this state, all other components depend on this pair of components to progress. If we remove the $C_0,C_3$ edge (from the communication graph)  and analyse the two rings independently, these two separate subnetworks could even be deadlock free and still admit exactly the paths of ungranted requests leading to the conflict shown in $\sigma_1$ when put together. Note that these paths on their own are not blocking either ring (hence, deadlock free could admit these paths), the conflict is the root cause of the deadlock. Therefore, inadvertently removing disconnecting edges and might lead to unknowingly removing the root cause of a deadlock from our analysis. Disconnecting edges can only be removed if they are conflict free.

Let us assume now that the edge $C_0,C_3$ is conflict free (so state $\sigma_1$ is unreachable). For a deadlock to arise, it must be that one of the rings is blocked and the components in the other ring are in ungranted-request paths leading to it. This happens, for instance, in state $\sigma_2$ where we have that the subnetwork $\Seq{C_3,C_4,C_5}$ is blocked by a cycle of dependencies and the other ring (involving $C_0,C_1,C_2$) depends on this subnetwork, so we have a deadlock. As our disconnecting edge is conflict free, we could analyse our rings independently. This state shows, however, that it only takes one blocked (essential) subnetwork to make a system deadlock. Note here that the path in $\sigma_2$ around ring $C_0,C_1,C_2$ is a valid configuration of a deadlock free (sub)network. The cycle of ungranted requests around the ring $C_3,C_4,C_5$, however, means that this subnetwork deadlocks.

If the edge $C_0,C_3$ is conflict free and the two rings are independently deadlock free, it is impossible for one ring to be blocked by the other.  State $\sigma_3$ shows a state where ring $C_0,C_1,C_2$ depends on a \emph{progressing} ring $C_3,C_4,C_5$. \qedb{}
	
	\begin{figure}[t]
			\centering
			\begin{minipage}{.45\textwidth}
		\resizebox{\textwidth}{!}{%
		\begin{tikzpicture}[shorten >=1pt,node distance=1cm]
		\node[sqstate]	(q0)                {$C_0$};
		\node[sqstate]	(q1) [above left=of q0] {$C_1$};
		\node[sqstate]	(q2) [below left=of q0] {$C_2$};
		\node[sqstate]	(q3) [right=of q0] {$C_3$};
		\node[sqstate]	(q4) [above right=of q3] {$C_4$};
		\node[sqstate]	(q5) [below right=of q3] {$C_5$};
		\path[-] 
			(q0) edge node {} (q3)
			(q0) edge  node {} (q2)
			(q2) edge  node {} (q1)
			(q1) edge  node {} (q0)
			(q3) edge  node {} (q4)
			(q4) edge  node {} (q5)
			(q5) edge node {} (q3);
		\end{tikzpicture}}
		\end{minipage}\hspace{1cm}%
		\begin{minipage}{.45\textwidth}
			\resizebox{\textwidth}{!}{%
			\begin{tikzpicture}[shorten >=1pt,node distance=1cm]
			\node[sqstate]	(q0)                {$C_0$};
			\node[sqstate]	(q1) [above left=of q0] {$C_1$};
			\node[sqstate]	(q2) [below left=of q0] {$C_2$};
			\node[sqstate]	(q3) [right=of q0] {$C_3$};
			\node[sqstate]	(q4) [above right=of q3] {$C_4$};
			\node[sqstate]	(q5) [below right=of q3] {$C_5$};
			\path[-{latex[scale=3.0]}] 
			(q0) edge [bend left]  node {} (q3)
			(q2) edge node {} (q1)
			(q1) edge node {} (q0)
			(q3) edge [bend left] node {} (q0)
			(q4) edge node {} (q5)
			(q5) edge node {} (q3);
			\end{tikzpicture}}
		\end{minipage} \vspace*{1cm} \\ 
	\begin{minipage}{.45\textwidth}
		\resizebox{\textwidth}{!}{%
			\begin{tikzpicture}[shorten >=1pt,node distance=1cm]
			\node[sqstate]	(q0)                {$C_0$};
			\node[sqstate]	(q1) [above left=of q0] {$C_1$};
			\node[sqstate]	(q2) [below left=of q0] {$C_2$};
			\node[sqstate]	(q3) [right=of q0] {$C_3$};
			\node[sqstate]	(q4) [above right=of q3] {$C_4$};
			\node[sqstate]	(q5) [below right=of q3] {$C_5$};
			\path[-{latex[scale=3.0]}] 
			(q0) edge node {} (q3)
			(q2) edge  node {} (q1)
			(q1) edge  node {} (q0)
			(q3) edge  node {} (q4)
			(q4) edge  node {} (q5)
			(q5) edge  node {} (q3);
			\end{tikzpicture}}
	\end{minipage}\hspace{1cm}%
	\begin{minipage}{.45\textwidth}
		\resizebox{\textwidth}{!}{%
			\begin{tikzpicture}[shorten >=1pt,node distance=1cm]
			\node[sqstate]	(q0)                {$C_0$};
			\node[sqstate]	(q1) [above left=of q0] {$C_1$};
			\node[sqstate]	(q2) [below left=of q0] {$C_2$};
			\node[sqstate]	(q3) [right=of q0] {$C_3$};
			\node[sqstate]	(q4) [above right=of q3] {$C_4$};
			\node[sqstate]	(q5) [below right=of q3] {$C_5$};
			\path[-{latex[scale=3.0]}] 
			(q0) edge node {} (q3)
			(q2) edge  node {} (q1)
			(q1) edge  node {} (q0)
			(q3) edge  node {} (q4)
			(q4) edge  node {} (q5);
			\end{tikzpicture}}
	\end{minipage}
		\caption{Communication graph and snapshot graphs for states $\sigma_1$, $\sigma_2$, and $\sigma_3$, respectively (left to right, top to bottom), for our examples. }
		\label{fig:decomposition_example}
	\end{figure}
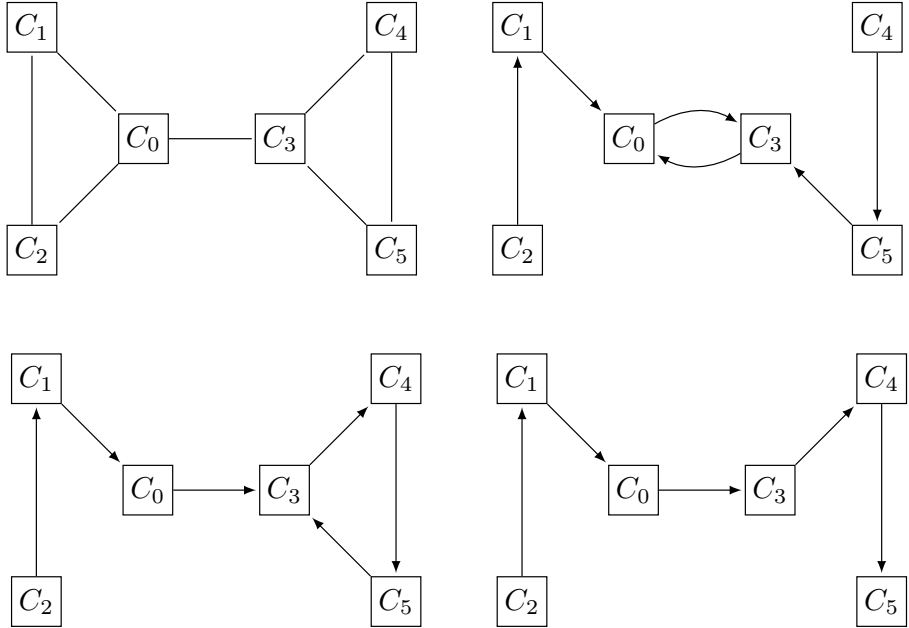 
\end{example}

Thus, our refinement expression can be used to show that a disconnecting edge is conflict free, enabling one to decompose the network into smaller essential subnetworks. Also, note that the identification of disconnecting edges can be carried out statically, i.e., by examining the communication graph, so generally this should be considerably simpler than showing conflict freedom for them. Note that a given network has a unique set of conflict-freedom disconnecting edges that can be removed to decompose the network.

In addition to that, from this theorem, we can deduce the following corollary:
\begin{cor}
	A (live) conflict-free acyclic (topology) network must be deadlock free.
\end{cor}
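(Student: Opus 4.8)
The plan is to read this off Theorem~\ref{thm:decomposition}, exploiting the special structure of an acyclic communication graph. The key graph-theoretic observation is that in a forest \emph{every} edge is a bridge: removing any edge strictly increases the number of connected components, since in an acyclic graph the two endpoints of an edge are not joined by any alternative path. Hence in an acyclic network the disconnecting edges are precisely all the edges of the communication graph. First I would record that, by hypothesis, the network is conflict free, meaning that every pair of communicating components is conflict free; combined with the previous observation, every edge of the communication graph is therefore a conflict-free disconnecting edge.

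Next I would apply Theorem~\ref{thm:decomposition}, removing the whole set of these edges. Since deleting every edge of a forest leaves only isolated nodes, the essential subnetworks produced are exactly the singleton networks $\Seq{C_i}$ for $i \in \{1,\ldots,n\}$. (If one prefers to remove bridges one at a time, the same conclusion follows by induction: each removal splits a subtree into two smaller forests, and the recursion terminates at singletons.) It then remains only to verify that each singleton essential subnetwork is deadlock free. This is immediate from liveness: busyness states that each component $P_i$ is deadlock free, and the behaviour of $\Seq{C_i}$ is just $P_i$. Equivalently, Theorem~\ref{thm:deadlock} forbids any deadlock without a cycle of ungranted requests, and a single component admits none, because $i \ur i$ is impossible ($\req(\sigma,i,i)$ forces $A_i - R_i \neq \emptyset$ while $\ung(\sigma,i,i)$ forces $A_i - R_i = \emptyset$). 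Theorem~\ref{thm:decomposition} then delivers deadlock freedom of $V$.

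I expect the only genuinely delicate point to be the justification that Theorem~\ref{thm:decomposition} may be applied to the \emph{entire} family of conflict-free disconnecting edges at once rather than to a single edge; the accompanying graph fact that every edge of an acyclic graph is a bridge is routine but must be stated explicitly, since it is what makes all edges available for removal.

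As an independent confirmation that sidesteps the decomposition theorem, I would argue directly from Theorem~\ref{thm:deadlock}: assuming a deadlock, there is a cycle of ungranted requests $c$, and since every ungranted request $i \ur j$ forces $A_i \inter A_j \neq \emptyset$, this cycle projects onto a closed walk in the communication graph. In a forest any closed walk of positive length must immediately backtrack, yielding an index with $c_i \ur c_{i \oplus 1}$ and $c_{i \oplus 1} \ur c_i$; this is a conflict between $c_i$ and $c_{i \oplus 1}$, contradicting conflict freedom. The hard part of this alternative route is the non-backtracking lemma (a positive-length non-backtracking closed walk would trace a genuine cycle, which a forest lacks), and it is reassuring that it reproduces exactly the conflict-freedom hypothesis of the corollary.
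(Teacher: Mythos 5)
Your primary argument is essentially identical to the paper's own proof: observe that in an acyclic communication graph every edge is a disconnecting edge, invoke conflict freedom to make every edge removable under Theorem~\ref{thm:decomposition}, note that the resulting essential subnetworks are singletons, and conclude via the busyness requirement of live networks. Your second, independent argument---deriving a contradiction directly from Theorem~\ref{thm:deadlock} by projecting a cycle of ungranted requests onto a closed walk in the forest and extracting a backtracking step $c_i \ur c_{i\oplus 1}$, $c_{i \oplus 1} \ur c_i$, i.e.\ a conflict---is a genuinely different and also sound route that the paper does not take for this corollary, although it mirrors the ``no cycle of ungranted requests'' style the paper uses to prove its pattern theorems (Theorems~\ref{thm:ra}, \ref{thm:cs}, \ref{thm:ad}); its benefit is that it bypasses Theorem~\ref{thm:decomposition} (and hence the imported result from \cite{Brookes91}) entirely, at the cost of having to prove the non-backtracking lemma for closed walks in forests, which the decomposition route avoids.
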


A network is conflict free if and only if all its edges are conflict free. Note that for an acyclic network, all edges are disconnecting ones. So, provided that all edges are conflict free, we can remove them and, as a result, we would have essential subnetworks with a single component. Thus, as components are deadlock-free, by the busyness requirement, this acyclic network must be deadlock free.

So, using our refinement expression, one can systematically decompose a network or even prove deadlock freedom for conflict-free acyclic networks. Both these applications can substantially reduce the complexity of deadlock-freedom analysis at a fairly low price; our conflict analysis only involves the examination of pairs of components as opposed to the system's overall behaviour. For instance, a conflict-free acyclic network can be simply ensured deadlock free by this sort of pairwise (local) analysis; we illustrate this with an example.

\begin{crexample}{\ref{ex:buffer}} Our ring-buffer network can be checked deadlock free by using decomposition alone. In this example, we analyse a network with one controller and three storage cells. In Figure~\ref{fig:ringbuffer}, we depict the communication graph of our example system and which sort of conflicts could potentially happen (they do not actually happen as we discuss next). $Contr$ represents the controller component, whereas $Cell_i$ depicts a $Cell(i)$ component. This system has an acyclic communication graph (i.e. topology) so every edge is disconnecting. Moreover, every edge (i.e. pair of components connected by an edge) is conflict free: whenever the controller wants to read from or write to a cell, it can do so. So, none of the possible conflicts depicted in Figure~\ref{fig:ringbuffer} can arise in any given system state. As all disconnecting edges are conflict free, we can decompose this network by removing all edges. This process results in 4 essential networks all of which have a single component. Since all components are deadlock free by virtue of our network being live, these essential subnetworks are deadlock free. Finally, by Theorem~\ref{thm:decomposition}, this network must be deadlock free.    
	\begin{figure}[t]
		\centering
		\begin{minipage}{0.3\textwidth}
			\resizebox{\textwidth}{!}{%
				\begin{tikzpicture}[shorten >=1pt,node distance=1.5cm]
				\node[sqstate]	(q0)                {$Contr$};
				\node[sqstate]	(q1) [above right=of q0] {$Cell_1$};
				\node[sqstate]	(q2) [right=of q0] {$Cell_2$};
				\node[sqstate]	(q3) [below right=of q0] {$Cell_3$};
				\path[-] 
				(q0) edge node {} (q1)
				(q0) edge  node {} (q2)
				(q0) edge  node {} (q3);
				\end{tikzpicture}}
	\end{minipage} \hspace{3cm}%
	\begin{minipage}{0.3\textwidth}
			\resizebox{\textwidth}{!}{%
			\begin{tikzpicture}[shorten >=1pt,node distance=1.5cm]
			\node[sqstate]	(q0)                {$Contr$};
			\node[sqstate]	(q1) [above right=of q0] {$Cell_1$};
			\node[sqstate]	(q2) [right=of q0] {$Cell_2$};
			\node[sqstate]	(q3) [below right=of q0] {$Cell_3$};
			\path[-{latex[scale=3.0]}] 
			(q0) edge [bend left=10, dotted] node {} (q1)
			(q1) edge [bend left=10, dotted] node {} (q0)
			(q0) edge [bend left=10, dotted]  node {} (q2)
			(q2) edge [bend left=10, dotted] node {} (q0)
			(q0) edge [bend left=10, dotted]  node {} (q3)
			(q3) edge [bend left=10, dotted]  node {} (q0);
			\end{tikzpicture}}
	\end{minipage}
	\caption{Communication graph for RingBuffer network with 3 storage cells and example of conflicts that could arise, respectively.}
	\label{fig:ringbuffer}
\end{figure}
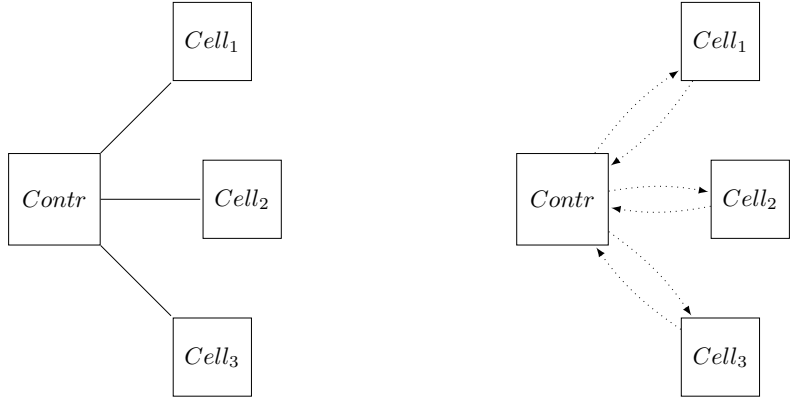
\end{crexample}


\section{Behavioural patterns}
\label{sec:patterns}

Despite being useful, conflict-freedom testing has its limitation. For instance, it is unable to show deadlock freedom for cyclic-topology systems or even to decompose systems with no disconnecting edges. For these cases, we propose pattern adherence as an alternative effective verification technique that relies on local (compositional) analysis to ensure deadlock freedom. Once again, we give up completeness in exchange for efficiency. We can only ensure deadlock freedom for systems that adhere to one of the communication/behavioural patterns that we propose but adherence to these patterns can be efficiently tested in a local/compositional way. 

In this section, we present a characterisation of behavioural patterns using refinement expressions that can prove deadlock freedom for some networks with an arbitrary communication topology. We introduce our formalisation and a proof of their soundness. We formalise requirements on the behaviour of components as refinement assertions. This formalisation permits automatic
checking of behavioural constraints using refinement checkers, providing
their model sizes are tractable.

\subsection{Resource allocation}
\label{sec:resourceAllocation}

The resource allocation pattern can be applied to systems that, in order to perform an action, have to acquire some shared resources. In this
pattern, the components of a network are divided into users and resources. A user represents a component of the system that needs to acquire
some resources in order to fulfil its final purpose. A resource is at the disposal of the users of the system.

As with design patterns for object-oriented programming languages, our behavioural patterns are specified in terms of some distinguished elements. For instance, when designing a resource allocation network, some components are meant to be users, whilst others are meant to be resources.
These pattern elements are identified through a \textit{pattern descriptor}.

A resource-allocation descriptor for a network $V$, with $n$ components and $\Sigma$ as alphabet, is a tuple $\mathcal{M} = (\mathcal{C},  acquire, release)$ containing a set $\mathcal{C} \subseteq \{1\ldots n\} \times \{1\ldots n\}$ and two functions $acquire$ and $release$. Each pair $(i,j) \in \mathcal{C}$ represents the existence of a connection in $V$ between the user component $i$ and the resource component $j$. The function application $acquire(i,j)$ ($release(i,j)$) gives the event used by $i$ to acquire (release) $j$. These functions must be defined to all pairs in $\mathcal{C}$. As conventions, $users \defs \{ i \mid \exists j : \{1\ldots n\} \spot (i,j) \in \mathcal{C}\}$, $resources \defs \{ j \mid \exists i : \{1\ldots n\} \spot (i,j) \in \mathcal{C}\}$, $resources(i) \defs \{ j \mid (i,j) \in \mathcal{C}\}$, $users(j) \defs \{ i \mid (i,j) \in \mathcal{C}\}$.

A network and a resource allocation descriptor are compliant with the resource allocation pattern if they fulfil some structural and behavioural conditions. The structural constraint restricts the static elements of the network. For instance, it may restrict which connections can be made between components or which events are shared amongst components. On the other hand, behavioural constraints restrict the behaviour of the components of the network.

The structural constraint for this pattern requires the identification of elements as either resources or users. Additionally, it restricts which events are shared. In this case, only events for acquisition and release of resources can be shared. This constraint, which appears recurrently in our patterns, singles out which events are used for interaction between components. Therefore, we can restrict the behaviour of components on these events to avoid deadlocks.

\begin{defn}\label{def:resourceAllocation} Let $V = \Seq{C_1,\ldots,C_n}$ be a network where $C_i = (A_i,P_i)$, and $\mathcal{M}$ a resource allocation pattern descriptor for $V$. The network $V$ and the descriptor $\mathcal{M}$ are structurally compliant if and only if the following predicates hold.	
	
	\begin{itemize}
		\item $partitioned \defs \{1\ldots n\} = users \cup resources \land users \cap resources = \emptyset$
		\item $\begin{aligned}[t]
		&mutually\_disjoint\_events \defs \\ & \quad \lnot \exists i : users; j : resources \spot acquire(i,j) = release(i,j)
		\end{aligned}$
		\item $\begin{aligned}[t] & controlled\_alpha\_users \defs \\ & \quad \forall i : users \spot A_i \inter \textit{Voc} = \{acquire(i,j),release(i,j) \mid
		j \in resources(i)\}
		\end{aligned}$
		\item $\begin{aligned}[t] & controlled\_alpha\_resources \defs \\ & \quad \forall i : resources \spot  A_i \inter \textit{Voc} = \{acquire(j,i),release(j,i)~|~j \in users(i)\} \end{aligned}$
	\end{itemize}
\end{defn}

On the behavioural side, we present two CSP processes that define the expected behaviour of a user component and of a resource component. The resource component should offer the events of acquisition to all users able to acquire this
resource and, once acquired, it offers the release event to the user that has
acquired it.

\begin{defn} Let $V = \Seq{C_1,\ldots,C_n}$ be a network, and $\mathcal{M}$ a resource allocation descriptor for $V$. $ResourceSpec(i)$ defines the expected behaviour of a resource component. 
\begin{lstlisting}
ResourceSpec(i) = [] j : users(i) @ acquire(j,i) -> 
	release(j,i) -> Resource
\end{lstlisting}
\end{defn}

A user component should first acquire all the necessary resources, and then release
them. Both acquiring and releasing must be performed using the order denoted by the
$order(i)$ sequence.

\begin{defn} Let $V = \Seq{C_1,\ldots,C_n}$ be a network, $\mathcal{M}$ a resource allocation descriptor for $V$, and $order(i)$ a function giving the sequence in which resources are acquired by component $i$. $UserSpec(i)$ defines the expected behaviour of a user component. 
\begin{lstlisting}
UserSpec(i) =    
	let Acquire(s) = 
			if s != <> then acquire(i,head(s)) -> Acquire(tail(s))
			else SKIP 
		Release(s) = 
			if s != <> then release(i,head(s)) -> Release(tail(s)) 
			else SKIP
		User(s) = Acquire(s);Release(s);User(s)
	within User(order(i))
\end{lstlisting}
\end{defn}

To ensure that a component meets its specification, the behavioural constraint requires the stable failure
refinement relation to hold between the specification and the behaviour of a component.

\begin{defn}  Let $V = \Seq{C_1,\ldots,C_n}$ be a network where $C_i = (A_i,P_i)$, $\mathcal{M}$ a resource allocation pattern descriptor for $V$, $order(i)$ a function giving the sequence in which resources are acquired by component $i$, and $>_{RA}$ a strict total order on resources. $V$ and $\mathcal{M}$ are behaviourally compliant if and only if the following hold.
	\begin{itemize}
		\item $\forall i : users \bullet \texttt{ UserSpec(i) [F= } \cspm{Abs(i)}$
		\item $\forall i : resources \bullet \texttt{ ResourceSpec(i) [F= } \cspm{Abs(i)}$
		\item $\forall i : users \spot order(i) \text{ must respect }>_{RA}$
	\end{itemize}
A sequence $\Seq{s_1,\ldots,s_n}$ \emph{respects} an order $>$ if the elements in the sequence are ordered respecting $>$, namely, for all $i \in \{1\ldots n\}$ we have that $s_i > s_{i+1}$.
\end{defn}

Note that we require an abstract version of a component's behaviour to comply with its specification. The reason is that, to guarantee deadlock freedom, we only need to regulate the behaviour related to events used in the interaction between components. The behaviour of a component on non-shared events is not relevant in deadlock analysis, as the component can perform them individually. So, in the analysis of deadlock freedom, we can study the network composed of the abstract behaviours of components, rather than the fully detailed network. This result is presented in the following lemma.

\begin{lem} Let $V = \Seq{C_1,\ldots,C_n}$ be a network where $C_i = (A_i,P_i)$, and $V' = \Seq{C'_1,\ldots,C'_n}$ another network where $C'_i = (A_i,\verb|Abs(i)|)$; \verb|Abs(i)| as per Definition~\ref{def:abs}. If $V'$ is deadlock free then so is $V$. 
	\label{lem:abs}
\end{lem}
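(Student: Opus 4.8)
The plan is to prove the contrapositive: assuming $V$ is \emph{not} deadlock free, I exhibit a deadlock state of $V'$, contradicting its deadlock freedom. So suppose $V$ can reach a deadlocked state $\sigma = (s,R)$ with $R = (R_1,\ldots,R_n)$ and $Refusals(\sigma) = \bigcup_i (A_i \cap R_i) = \Sigma$. The candidate deadlock state of $V'$ keeps the very same refusal tuple while projecting the non-vocabulary events out of the trace: set $s' = s \hide (\Sigma \setminus \textit{Voc})$ and $\sigma' = (s',R)$. The whole argument then reduces to showing (i) that $\sigma'$ really is a state of $V'$, i.e. $(s' \project A_i, R_i) \in \overline{\failures}(\cspm{Abs(i)})$ for every $i$, and (ii) that it is deadlocked, which is immediate since $Refusals(\sigma') = \bigcup_i(A_i \cap R_i) = Refusals(\sigma) = \Sigma$.

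The crux is establishing (i), and the key observation that makes it go through is that in a deadlocked state every component already refuses all of the hidden events, i.e. $\Sigma \setminus \textit{Voc} \subseteq R_i$ for each $i$. I would prove this from two facts. First, a well-formed component only performs events of its own alphabet, so after any trace $P_i$ refuses everything outside $A_i$; hence $\Sigma \setminus A_i \subseteq R_i$. Second, any non-vocabulary event $e \in A_i \setminus \textit{Voc}$ is, by definition of $\textit{Voc}$, shared with no other component, so the only way it can belong to $Refusals(\sigma) = \Sigma$ is via component $i$ itself, giving $e \in R_i$; thus $A_i \setminus \textit{Voc} \subseteq R_i$. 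Combining the two, any hidden event $e \in \Sigma \setminus \textit{Voc}$ is either outside $A_i$ (whence $e \in \Sigma \setminus A_i \subseteq R_i$) or inside $A_i$ (whence $e \in A_i \setminus \textit{Voc} \subseteq R_i$), so in all cases $e \in R_i$; thus $\Sigma \setminus \textit{Voc} \subseteq R_i$, as claimed.

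With $\Sigma \setminus \textit{Voc} \subseteq R_i$ in hand, the transfer to the abstracted component is a direct application of the hiding clause for failures in Table~\ref{tab:semanticClausesFailures}. Writing $X = \Sigma \setminus \textit{Voc}$, since $X \subseteq R_i$ we have $R_i \cup X = R_i$, so from $(s \project A_i, R_i) \in \failures(P_i)$ the clause gives $((s \project A_i)\hide X,\, R_i) \in \failures(\cspm{Abs(i)})$, where $\cspm{Abs(i)} = P_i \hide X$ by Definition~\ref{def:abs}. Two routine checks finish this point: that the trace matches, $(s\project A_i)\hide X = s' \project A_i$ (projection onto $A_i$ and removal of $X$ commute), and that $R_i$ is still \emph{maximal} for $\cspm{Abs(i)}$ at this state --- which holds because every hidden event is refused, so no $\tau$ is enabled and the state that $P_i$ reaches is genuinely stable after hiding, with the same maximal refusal $R_i$.

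The main obstacle, and the only place requiring care, is precisely this interaction between hiding and stability: hiding a set of events normally turns states in which those events are offered into unstable ($\tau$-reachable) states that contribute no stable failures, so without the observation $\Sigma \setminus \textit{Voc} \subseteq R_i$ the refusal could not be carried across to $\cspm{Abs(i)}$. A secondary technical point is that hiding can in principle introduce divergence if some $P_i$ can perform unboundedly many consecutive non-vocabulary events; this does not affect the argument, since we only need the existence of one particular stable failure at the deadlocked state (which is reached after the finite trace $s$), but I would note the standard well-formedness and divergence-freedom assumptions explicitly. Notably, this route does not even need the cycle-of-ungranted-requests structure of Theorem~\ref{thm:deadlock}: the deadlock state of $V'$ is reconstructed directly from that of $V$.
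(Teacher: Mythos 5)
Your proposal is correct and takes essentially the same route as the paper's own proof: both rest on the key observation that in a deadlocked state of $V$ every component's (maximal) refusal contains all non-vocabulary events, i.e.\ $\Sigma \setminus \textit{Voc} \subseteq R_i$, and then carry the same refusal tuple across the hiding clause for failures to obtain the deadlocked state $(s \restrict \textit{Voc}, R)$ of $V'$. Your write-up is in fact slightly more detailed than the paper's --- spelling out the two-case argument for the key inclusion, the commutation of projection with hiding, and the stability/maximality concerns that the paper leaves implicit.
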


\begin{proof} We prove this claim by contradiction. Assuming that $V'$ is deadlock free and $V$ is not, we reach a contradiction.
	Let us assume that $\sigma = (s,(R_1,\ldots,R_n))$ is a deadlock state of $V$, thus $Refusal(\sigma) = \Sigma$. In $\sigma$, none of the components of $V$ must be willing to perform an event that is not in the vocabulary, that is, $\overline{\textit{Voc}} \subseteq R_i$ for all $i \in \{1\ldots n\}$. If that was not the case, then $\sigma$ would not be a deadlocked state. Hence, from the definition of a network state and from the clause calculating the failures for the hiding operator, we can deduce that the state $\sigma' = (s \restrict \textit{Voc}, (R_1,\ldots,R_n))$ is a valid state for $V'$. So, since $Refusal(\sigma) = Refusal(\sigma')$ and both networks have the same alphabet, then $\sigma'$ represents a deadlock for $V'$, thus a contradiction.
\end{proof}

As the main result of this section, we show that compliance to the resource-allocation pattern guarantees deadlock freedom. It ensures that resources in a path of ungranted requests respect our strict order $>_{RA}$, namely, if there is a path from $r_1$ to $r_n$ then $r_1 >_{RA} r_n$. Hence, a cycle of ungranted requests would lead to a contradiction in the form of $r >_{RA} r$. Therefore, such cycles cannot arise and that, in turn, guarantees deadlock freedom. This sort of coincidence between paths of ungranted requests and a strict order is a core common idea shared by our patterns which makes them sound. Note that the idea of ordering resources and their acquisition to prevent deadlocks, which inspired ours and many other works, reaches back decades~\cite{Coffman71,Dijkstra71}.

\begin{thm} \label{thm:ra} Let $V = \Seq{C_1,\ldots,C_n}$ be a network where $C_i = (A_i,P_i)$, $\mathcal{M}$ a resource allocation pattern descriptor for $V$, $order(i)$ a function giving the sequence in which resources are acquired by component $i$, and $>_{RA}$ a strict total order on resources. If $V$ and $\mathcal{M}$ are resource allocation compliant then $V$ is deadlock free.
\end{thm}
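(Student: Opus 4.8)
The plan is to reduce the problem to the abstract network and then derive a contradiction from the strict order $>_{RA}$. First I would invoke Lemma~\ref{lem:abs}: it suffices to show that the abstract network $V' = \Seq{C'_1,\ldots,C'_n}$, with $C'_i = (A_i, \cspm{Abs(i)})$, is deadlock free, since deadlock freedom of $V'$ transfers to $V$. Working with $V'$ is the natural choice because the behavioural compliance conditions are exactly refinements of the abstract components, $\cspm{UserSpec(i) [F= Abs(i)}$ and $\cspm{ResourceSpec(i) [F= Abs(i)}$. I would then suppose for contradiction that $V'$ has a deadlock state $\sigma = (s,R)$ and apply Theorem~\ref{thm:deadlock} to obtain a cycle of ungranted requests.

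Before analysing the cycle I would establish a state characterisation for each component in $\sigma$, which is where most of the work lies. Because each abstract component refines its specification in the stable-failures model, its local history $s \project A_i$ obeys the acquire-then-release discipline of the spec. Concretely: a resource is either \emph{free} (offering every $acquire(j,i)$) or \emph{held} by a unique user $u$ (offering only $release(u,i)$); since the acquire/release events are shared, the user's and the resource's views of who holds what are forced to agree. A user is either in its \emph{acquisition phase}, holding a $>_{RA}$-decreasing prefix $\{r_1,\ldots,r_j\}$ of $order(i)$ and offering $acquire(i,r_{j+1})$, or in its \emph{release phase}, offering $release(i,r)$ for a resource $r$ it still holds. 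The crucial sub-claim is that in a deadlock no user can be in its release phase: if a user offered $release(i,r)$ then, holding $r$, the resource $r$ would also offer $release(i,r)$, so that event would be refused by neither component and hence would not lie in $Refusals(\sigma) = \Sigma$, contradicting the deadlock. Thus every user is blocked while waiting to acquire some resource held by another user.

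With this in hand I would read off the shape of the cycle. Structural compliance (controlled\_alpha\_users and controlled\_alpha\_resources) makes the vocabulary bipartite between users and resources, so the cycle alternates user, resource, user, resource, and so on. Examining the two kinds of edge: an ungranted request $u \ur r$ from a user to a resource forces $r$ to be precisely the resource $u$ is trying to acquire and to be held by someone other than $u$; an ungranted request $r \ur w$ from a resource to a user forces $w$ to be the current holder of $r$, since that is the only user to whom $r$ offers an event. Hence along the cycle the holder of each resource is the next user, who is in turn waiting for the next resource. Writing the resources in the cycle as $r_1,\ldots,r_k$, the holder of $r_t$ is a user that holds $r_t$ yet is still waiting to acquire $r_{t \oplus 1}$; because that user acquires in $>_{RA}$-decreasing order, $r_t$ precedes $r_{t \oplus 1}$ in its order, so $r_t >_{RA} r_{t \oplus 1}$.

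Chaining these strict inequalities around the cycle yields $r_1 >_{RA} r_2 >_{RA} \cdots >_{RA} r_k >_{RA} r_1$, hence $r_1 >_{RA} r_1$, contradicting that $>_{RA}$ is a strict order. Therefore no cycle of ungranted requests can arise in any state of $V'$, so by Theorem~\ref{thm:deadlock} $V'$ is deadlock free, and by Lemma~\ref{lem:abs} so is $V$. I expect the main obstacle to be the state characterisation of the abstract components, that is, turning the stable-failures refinements into the precise statements that users hold a $>_{RA}$-decreasing prefix and that resources are free-or-held, together with the release-phase argument; once these are in place, the bipartite alternation of the cycle and the strict decrease $r_t >_{RA} r_{t \oplus 1}$ are comparatively routine.
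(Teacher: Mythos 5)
Your proposal is correct and follows essentially the same route as the paper's proof: reduce to the abstract network via Lemma~\ref{lem:abs}, invoke Theorem~\ref{thm:deadlock} to obtain a cycle of ungranted requests, use the structural conditions to force user/resource alternation, observe that only held resources can appear in such a cycle, and derive $r >_{RA} r$ from the acquisition order, contradicting strictness. The only (harmless) difference is organisational: you rule out release-phase users globally from the deadlock refusal condition $Refusals(\sigma) = \Sigma$, whereas the paper rules out the corresponding edges locally by noting that a holder ready to release and its resource can always synchronise, so no ungranted request arises between them.
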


\begin{proof} We prove this theorem by showing that the network $V' = \Seq{C'_1,\ldots,C'_n}$, where $C'_i = (A_i,\verb|Abs(i)|)$, is deadlock free and by using Lemma \ref{lem:abs}. 
	
	To prove that $V'$ is deadlock free, we rely on the second condition of Theorem \ref{thm:deadlock}. So, we show that there cannot be a cycle of ungranted requests between components of this network.
	
	First, given that $partitioned$ holds, we know that a component must be either a resource or a user. Moreover, thanks to $mutually\_disjoint\_events$, we know that events cannot be used for both acquiring and releasing a resource. Conditions $controlled\_alpha\_users$, $controlled\_alpha\_resources$ and triple-disjointness implies that no two resources, nor two users, can share an event. As no two resources, nor two users, can share an event, the predicate $\req$ cannot be met and, as a consequence, there cannot be an ungranted request between such two elements. Thus, a cycle of ungranted requests in this network must be composed of alternating users and resources. So, we move on to analyse the interaction between a user and a resource.
	
	Let $C_r$ be a resource component and $C_u$ a user one. From the required behaviour compliance, we know that the \verb|Abs(i)| has to behave exactly as \texttt{User(u)} or \texttt{Resource(r)} for $i = u$ or $i = r$, respectively. So, we can analyse the behaviour of \verb|Abs(i)| in terms of the behaviour of these two processes.
	
	Based on the behaviour of \texttt{User(u)} and \texttt{Resource(r)}, we know that an ungranted request can only arise from $u$ to $r$ in a state $\sigma$ if and only if $u$ is ready to acquire $r$, but $r$ has already been acquired by another user. In all other cases, $u$ and $r$ can successfully interact preventing the ungranted request. Note, then, that a cycle of ungranted requests can only involve resources that have already been acquired. Thus, we only discuss paths and cycles of ungranted request where all resources have been acquired. Additionally, based on \texttt{User(u)}'s behaviour, we know that (i) $u$ is trying to acquire a resource that is higher, considering $>_{RA}$, than any of its acquired resources.
	
	Two kind of ungranted requests can happen from a resource $r$ to one of its users $u$. An ungranted request from $r$ to $u$ might arise if either $r$ has not been acquired yet or $r$ has been acquired by $u$ but $u$ is not yet ready to release it. We are only interested in the later since the former case cannot be part of a cycle of ungranted requests; note that a free resource cannot be the target of an ungranted request, as the user issuing the request to acquire this resource would just be able to do so (i.e. the request would be ``granted").  
	
	So, we have that cycles of ungranted requests can only be formed by chains of the form $r \ur u \ur r'$ where $r$ has been acquired by $u$ and $u$ is trying to acquire $r'$. Such a chain implies that $r >_{RA} r'$ by (i). So, for any pair of resources $r_1$ and $r_n$ in a path of ungranted requests $r_1 \ur u_1 \ur r_2 \ur \ldots \ur r_{n-1} \ur u_{n-1} \ur r_n$, it must hold that $r_1 >_{RA} r_n$.  
	
	Note, then, that the existence of a cycle of ungranted requests would lead to a contradiction. Such a cycle is a resource-user chain of ungranted requests that begins and ends in the same resource. That would imply that a reflexive pair $(r,r)$ belongs to $>_{RA}$, contradicting the fact that $>_{RA}$ is a strict order on resources. In Figure~\ref{fig:intuitionRA}, we illustrate the coinciding of the order in which resources appear in paths of ungranted requests with the order $>_{RA}$, and the contradiction it leads to in the context of cycles of ungranted requests. 
	
	\begin{figure}[t]
		\centering
		\begin{minipage}{0.32\textwidth}
			\resizebox{\textwidth}{!}{%
				\begin{tikzpicture}[shorten >=1pt,node distance=.4cm]
				\node[sqstate, inner sep=0cm]	(r1)                {$\ r_{1}\ $};
				\node[sqstate, inner sep=0cm]	(u1) [below=of r1] {$\ u_1\ $};
				\node[sqstate, inner sep=0cm]	(r2) [below=of u1] {$\ r_2\ $};
				\node	(u2) [below=of r2] {$\vdots$};
				\node[sqstate, inner sep=0cm]	(r3) [below=of u2] {$r_{n-1}$};
				\node[sqstate, inner sep=0cm]	(u3) [below=of r3] {$u_{n-1}$};
				\node[sqstate, inner sep=0cm]	(r4) [below=of u3] {$\ r_n\ $};
				\path[-{latex[scale=3.0]}] 
				(r1)  edge node {} (u1)
				(u1) edge node {} (r2)
				(r2) edge node {} (u2)
				(u2) edge node {} (r3)
				(r3) edge node {} (u3)
				(u3) edge node {} (r4);
				
				\draw [decorate,decoration={brace,amplitude=10pt,raise=8pt},yshift=0cm]
				(r1.north east) -- (r4.south east) node [midway,right,xshift=.8cm] {implies $r_1 >_{RA} r_n$};
				\end{tikzpicture}}
		\end{minipage}\hfill%
		\begin{minipage}{0.5\textwidth}
		\resizebox{\textwidth}{!}{%
			\begin{tikzpicture}[shorten >=1pt,node distance=0.8cm]
			\node[sqstate, inner sep=0cm]	(r1)                {$\ r_1\ $};
			\node[sqstate, inner sep=0cm]	(u1) [above=of r1] {$\ u_1\ $};
			\node[sqstate, inner sep=0cm]	(r2) [above right=of u1] {$\ r_2\ $};
			\node	(u2) [below right=of r2] {$\vdots$};
			\node[sqstate, inner sep=0cm]	(rn) [below=of u2] {$\ r_n\ $};
			\node[sqstate, inner sep=0cm]	(un) [below right=of r1] {$\ u_n\ $};
			\path[-{latex[scale=3.0]}] 
			(r1)  edge node {} (u1)
			(u1) edge node {} (r2)
			(r2) edge node {} (u2)
			(u2) edge node {} (rn)
			(rn) edge node {} (un)
			(un) edge node {} (r1);
			
			\draw [decorate,decoration={brace,amplitude=10pt,raise=1.2cm},yshift=0cm] 
			(r2.north east) -- (un.south east) node [midway,right,xshift=1.4cm] {$\begin{array}[t]{l}\text{implies }{r_1 >_{RA} r_1} \\\text{(contradiction)} \end{array}$};
			\end{tikzpicture}}
		\end{minipage}
		\caption{Chain of ungranted requests and cycle of ungranted requests coincidence with $>_{RA}$.}
		\label{fig:intuitionRA}
	\end{figure}
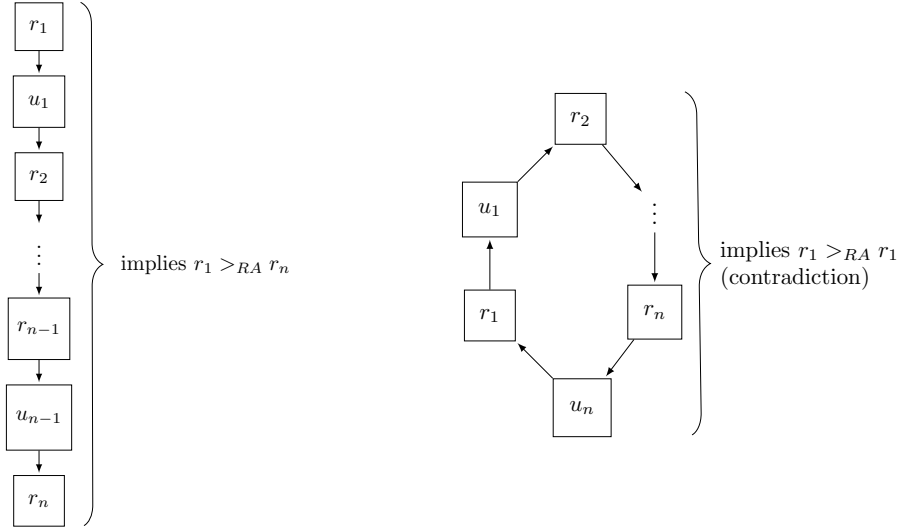
\end{proof}

We use our asymmetric dining philosophers example to illustrate how patterns can be applied to ensure deadlock freedom.

\begin{crexample}{\ref{ex:philosophers}} Our asymmetric-dining-philosophers network does not have an acyclic topology. It is in fact a large ring of alternating fork and philosopher components. Hence, decomposition is not an option as all edges are not disconnecting. Note that by removing any edge the number of connected (graph-theoretic) components/essential subnetworks does not increase; we always have a single subnetwork. In this example, we analyse a network with 3 forks and 3 philosophers. Figure~\ref{fig:diningphilosopher} depicts the communication graph of our example system and an example of a system state $\sigma$ which we discuss next. $Phil_i$ represents component $Phil(i)$, $Fork_i$ component $Fork(i)$, and $APhil_i$ component $APhil(i)$.
	
As decomposition is not an option, we apply a pattern to the entire network: the resource allocation pattern. Philosophers are users and forks are resources, and philosophers have to acquire forks according to the expected order on their indexes; this is the $>_{RA}$ order. As this network adheres to this pattern, in a cycle of ungranted requests the resources present in this cycle must have been acquired and the way in which they are ordered must respect their natural index order. For instance, assume that $\sigma$ is a network state exhibiting a cycle of ungranted requests such as the one in Figure~\ref{fig:diningphilosopher}, we explain how the network cannot reach such a state. Note that there is an ungranted request from $Fork_2$ to $APhil_2$ and from $APhil_2$ to $Fork_0$. Such a configuration can only happen if the $Fork_2$ has not been acquired by $APhil_2$, according to the behavioural requirements over users and resources enforced by our pattern. Hence, $Phil_1$ cannot have an ungranted request to $Fork_1$ as it is free to be acquired.	If all the resources were acquired, an ungranted-request path from $r_1$ to $r_2$ would coincide with our $>_{RA}$ order. Therefore, a cycle of ungranted requests could not arise as it would violate the irreflexiveness of $>_{RA}$. 
	\begin{figure}[t]
		\centering
		\begin{minipage}{0.32\textwidth}
			\resizebox{\textwidth}{!}{%
				\begin{tikzpicture}[shorten >=1pt,node distance=1cm]
				\node[sqstate]	(p0)                {$Phil_0$};
				\node[sqstate]	(f0) [above=of p0] {$Fork_0$};
				\node[sqstate]	(p2) [above right=of f0] {$APhil_2$};
				\node[sqstate]	(f2) [below right=of p2] {$Fork_2$};
				\node[sqstate]	(p1) [below=of f2] {$Phil_1$};
				\node[sqstate]	(f1) [below left=of p1] {$Fork_1$};
				\path[-] 
				(p0)  edge node {} (f0)
				(f0) edge node {} (p2)
				(p2) edge node {} (f2)
				(f2) edge node {} (p1)
				(p1) edge node {} (f1)
				(f1) edge node {} (p0);
				\end{tikzpicture}}
		\end{minipage}\hspace{2cm}%
		\begin{minipage}{0.32\textwidth}
			\resizebox{\textwidth}{!}{%
				\begin{tikzpicture}[shorten >=1pt,node distance=1cm]
				\node[sqstate]	(p0)                {$Phil_0$};
				\node[sqstate]	(f0) [above=of p0] {$Fork_0$};
				\node[sqstate]	(p2) [above right=of f0] {$APhil_2$};
				\node[sqstate]	(f2) [below right=of p2] {$Fork_2$};
				\node[sqstate]	(p1) [below=of f2] {$Phil_1$};
				\node[sqstate]	(f1) [below left=of p1] {$Fork_1$};
				\path[{latex[scale=3.0]}-] 
				(p0)  edge node {} (f0)
				(f0) edge node {} (p2)
				(p2) edge node {} (f2)
				(f2) edge [dotted] node {} (p1)
				(p1) edge node {} (f1)
				(f1) edge node {} (p0);
				\end{tikzpicture}}
		\end{minipage}
		\caption{Communication graph and snapshot graph for asymmetric dining philosophers.}
		\label{fig:diningphilosopher}
	\end{figure}
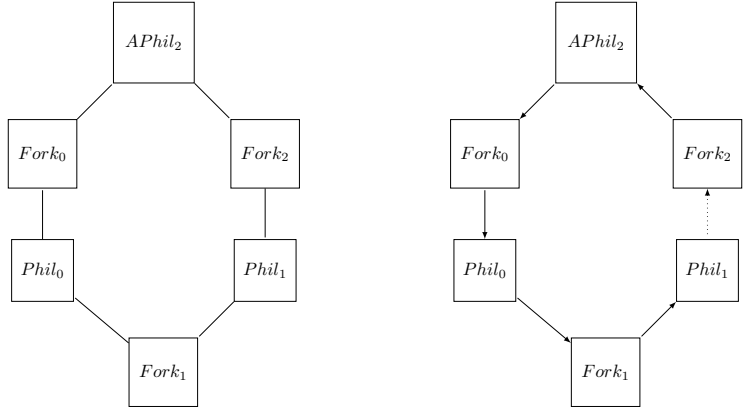
\end{crexample}

\subsection{Client/Server}

The client/server pattern applies to some networks implementing a client/server interaction architecture. In such a network, a component might behave as both a server and a client. As a server, it waits for a request from a client. As a client, it contacts a server component in the search for some service. The distinction between behaving as a server or as a client is based on the offer of events by a component. In a server state it must be offering all its server events, whereas in a client state it must be willing to request some service. The distinction between such events, as well as the identification of other elements of this pattern, is made via a pattern descriptor.

A client/server descriptor for a network $V$, with $n$ components and $\Sigma$ as alphabet, is a tuple $\mathcal{M} = (\mathcal{C}, request, responses)$ containing a set $\mathcal{C} \subseteq \{1\ldots n\} \times \{1\ldots n\}$ and functions $requests$ and $responses$. Each pair $(i,j) \in \mathcal{C}$ represents the existence of a connection in $V$ between components $i$ and $j$ such that $i$ acts as a client and $j$ as a server. The function application $requests(i,j)$ yields a set of events for which the client $i$ requests some service of the server $j$. For the request event $k$, the expected responses are given by the events in $responses(k)$. As conventions, we have that:

\begin{itemize}
	\item $\clr(i) \defs \Union \{ requests(i,j) \mid (i,j) \in \mathcal{C}\}$;
	\item $\svr(i) \defs \Union \{ requests(j,i) \mid (j,i) \in \mathcal{C}\}$;
	\item $\clp(i) \defs \bigcup \{ responses(k) \mid k \in \clr(i)\}$;
	\item $\svp(i) \defs \bigcup \{ responses(k) \mid k \in \svr(i)\}$.
\end{itemize}

A network $V$ and a client/server descriptor are structurally compliant if they fulfil some conditions. Roughly speaking, these conditions ensure that there can only be interaction between components through the use of the controlled events, that is, request and response events. Furthermore, we impose that the client/server relation between components $\mathcal{C}$ should respect a strict order on component identifiers.

\begin{defn}  Let $V = \Seq{C_1,\ldots,C_n}$ be a network where $C_i = (A_i,P_i)$, and $\mathcal{M}$ a client/server pattern descriptor for $V$, and $>_{CS}$ a strict total order on component identifiers. $V$ and $\mathcal{M}$ are structurally compliant if and only if the following predicates hold.
	\begin{itemize}
		\item $disjoint\_events \defs requests \inter responses = \emptyset$
		\begin{itemize}
			\item $requests \defs \Union \{ requests(i,j) \mid (i,j) \in \mathcal{C}\}$
			\item $responses \defs \Union\{ responses(k) \mid k \in requests\}$
		\end{itemize}
		\item $ \begin{aligned}[t] & controlled\_alpha \defs \\
		& \qquad \forall i : \{1 \ldots n\} \spot A_i \inter \textit{Voc} = \svr(i) \union \clr(i)\\
		& \qquad \qquad \union \svp(i) \union \clp(i) \end{aligned} $
		\item $ordered$ holds if and only if the relation $\mathcal{C}$ respects $>_{CS}$.
	\end{itemize}
\end{defn}

We propose two expected behaviours for a component in a client/server network. The first one concerns how it behaves as a server. When a component is behaving as a server, namely, offering some server request event, it must offer all its server request events. This is to say, as a server, a component cannot choose which requests it is able to do, but it should rather offer all its services for its clients. 

\begin{defn} Let $V = \Seq{C_1,\ldots,C_n}$, and $\mathcal{M}$ a client/server pattern descriptor for $V$. The server request specification for component $i \in \{1 \ldots n\}$ is given by the following process.
	
\begin{lstlisting}
ServerRequestsSpec(i) = 
	let sEvts = server_requests(i)
		otherEvts = diff(A(i),sEvts)
		Server = 
			((|~| ev : otherEvts @ ev -> SKIP) 
			|~| 
			([] ev : sEvts @ ev -> SKIP)) ; Server
	within if not empty(otherEvts) then Server else RUN(sEvts)
\end{lstlisting}
where \verb"RUN(evts) = [] ev : etvs @ ev -> RUN(evts)"
	
\end{defn}
In the definition of the process \cspm{ServerRequestsSpec}, we check whether the
set of non server request events is empty, since the replicated
internal choice operator is not defined for an empty set of elements. 


The second behavioural imposition restricts the request-response behaviour of components. A process, conforming to the client/server pattern, must recursively offer its request events and then the appropriate responses for the selected request event. The specification of this behaviour is given by the following process, which also has to deal with the replicated internal choice undefinedness for the empty set.

\begin{defn} Let $V = \Seq{C_1,\ldots,C_n}$, and $\mathcal{M}$ a client/server pattern descriptor for $V$. The request-response specification for component $i \in \{1 \ldots n\}$ is given by the following process.
	
\begin{lstlisting}
RequestsResponsesSpec(i) = 
	let cEvts = client_requests(i)
		sEvts = server_requests(i)
		ClientRequestsResponsesSpec = 
			|~| ev : cEvts @ ev -> 
				(if empty(responses(ev)) then SKIP
				else ([] res : responses(ev) @ res -> SKIP))
		ServerRequestsResponsesSpec = 
			|~| ev : sEvts @ ev -> 
				(if empty(responses(ev)) then SKIP 
				else (|~| res : responses(ev) @ res -> SKIP))
		C = ClientRequestsResponsesSpec; C
		S = ServerRequestsResponsesSpec; S
		CS = (ClientRequestsResponsesSpec 
			 |~| ServerRequestsResponsesSpec); CS
	within
		if empty(cEvts) and empty(sEvts) then STOP
		else 
			if empty(cEvts) then S
			else 
				if empty(sEvts) then C
				else CS
\end{lstlisting}
	
\end{defn}

We use the revivals' refinement relation to check conformance of a component's behaviour to the process \cspm{ServerRequestsSpec}. The reason is that the specification that ``either all server-requests are offered or none of them is" cannot be intuitively represented by a characteristic process in the stable failures model. Note that, intuitively, such a characteristic process would require failures that are not prefix closed. On the other hand, this process can be simply captured, in the stable revivals model, by the aforementioned characteristic process. The other specification does not suffer from this problem and can be simply captured in the stable failures model.

\begin{defn} Let $V = \Seq{C_1,\ldots,C_n}$ be a network where $C_i = (A_i,P_i)$, and $\mathcal{M}$ a client/server pattern descriptor for $V$. $V$ and $\mathcal{M}$ are behaviourally compliant if and only if the following predicates hold.
	\begin{itemize}
		\item $\forall i : \{1 \ldots n\} \spot\texttt{ ServerRequestsSpec(i) [V= } \cspm{Abs(i)}$
		\item $\forall i : \{1 \ldots n\} \spot \texttt{ RequestResponsesSpec(i) [F= } \cspm{Abs(i)}$ 
	\end{itemize}
\end{defn}

As with the previous pattern, we benefit from the order imposed on the client-server relation to show that a cycle of ungranted requests cannot arise and, as a result, a network compliant to this pattern is deadlock free.

\begin{thm}  \label{thm:cs} Let $V = \Seq{C_1,\ldots,C_n}$ be a network where $C_i = (A_i,P_i)$, and $\mathcal{M}$ a client/server pattern descriptor for $V$, and $>_{CS}$ a strict order on component identifiers. If $V$ and $\mathcal{M}$ are structural and behaviourally compliant then $V$ is deadlock free.
	\label{thm:CSdeadlockfree}
\end{thm}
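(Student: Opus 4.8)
The structure mirrors the resource-allocation proof (Theorem \ref{thm:ra}), so I'll follow the same skeleton. The plan is to reduce to the abstracted network $V' = \Seq{C'_1,\ldots,C'_n}$ with $C'_i = (A_i,\verb|Abs(i)|)$ via Lemma \ref{lem:abs}, and then show $V'$ is deadlock free by appealing to condition 2 of Theorem \ref{thm:deadlock}: it suffices to prove no cycle of ungranted requests can arise in any state. The key idea will be that the client/server order $>_{CS}$ on component identifiers coincides with the direction of ungranted requests, so a cycle would force some $c >_{CS} c$, contradicting irreflexivity.

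First I would analyze what an ungranted request $i \ur j$ looks like under the behavioural constraints. Since $Abs(i)$ refines $RequestsResponsesSpec(i)$ (in \cspm{[F=}) and $ServerRequestsSpec(i)$ (in \cspm{[V=}), in any reachable state each component is either in a "client phase" (offering some request in $\clr(i)$ and awaiting responses) or a "server phase" (offering all of $\svr(i)$). The structural predicate $controlled\_alpha$ guarantees that the only shared events are request/response events, and $disjoint\_events$ keeps the request and response roles separate. I would then argue that for an ungranted request $i \ur j$ to hold, the refused shared event must be a request event flowing from a client to a server — so the edge points from the client $i$ to the server $j$, whence $(i,j)\in\mathcal{C}$ and by $ordered$ we get $i >_{CS} j$ (or the chosen orientation). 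The crucial subtlety here is the role of $ServerRequestsSpec$ and why we need revivals refinement: it guarantees that a component acting as a server offers \emph{all} its server-request events, so a server can never be the \emph{source} of an ungranted request over a request event it should be offering — it cannot selectively refuse. This is what rules out request edges pointing "up" the order and forces every ungranted-request edge to decrease $>_{CS}$.

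The main obstacle I anticipate is handling the \emph{response} events carefully. An ungranted request could in principle arise on a response event rather than a request event — a server awaiting acknowledgement, or a client refusing a response. I would need to show that response-based ungranted requests either cannot occur in a state that is part of a cycle, or that whenever they do occur they still respect $>_{CS}$ in the same direction as the associated request. Because $responses(k)$ follows the request $k$ and inherits the same client/server connection (the same pair in $\mathcal{C}$), the response travels along the same edge, so its orientation with respect to $>_{CS}$ should be consistent; but making this rigorous — showing that the request-response protocol enforced by $RequestsResponsesSpec$ prevents a server from being blocked by a client in a way that reverses the order — is where the real work lies.

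Finally I would assemble the pieces: every edge in any snapshot graph of $V'$ decreases $>_{CS}$, so any path $c_1 \ur c_2 \ur \cdots \ur c_k$ satisfies $c_1 >_{CS} c_k$; a cycle would then yield $c_1 >_{CS} c_1$, contradicting that $>_{CS}$ is strict. Hence $V'$ has no cycle of ungranted requests, so by Theorem \ref{thm:deadlock} it is deadlock free, and by Lemma \ref{lem:abs} so is $V$.
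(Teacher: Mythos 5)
Your skeleton (reduce to the abstracted network via Lemma~\ref{lem:abs}, then exclude cycles of ungranted requests via condition 2 of Theorem~\ref{thm:deadlock}, using a coincidence between edges and a strict order) is exactly the paper's, but the core orientation claim on which your assembly rests is false, and it inverts the role of \cspm{ServerRequestsSpec}. It is not true that every ungranted request $i \ur j$ points from a client to a server and decreases $>_{CS}$. A server-\emph{waiting} component offers all of $\svr(i)$; if the client sharing those events is itself busy requesting from a different server (or engaged in a response exchange with a third party), the server's offer is refused and there is an ungranted request \emph{from the server to the client}. Such an edge satisfies $(j,i) \in \mathcal{C}$ and therefore points \emph{up} the order. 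What the revivals refinement against \cspm{ServerRequestsSpec} actually guarantees is the converse of what you state: a server-waiting component offers all its services, hence grants any request addressed to it, so there can be no ungranted request from a client-requesting component \emph{towards} a server-waiting one; it does not prevent a server from being the \emph{source} of an ungranted request. The paper's proof uses this to force a case split you are missing: since client-to-server-waiting edges are impossible, any cycle must be homogeneous --- either all client-requesting components, where each edge gives $(i,j) \in \mathcal{C}$ and hence $i >_{CS} j$, yielding the contradiction you describe, or all server-waiting components, where each edge gives $(j,i) \in \mathcal{C}$ and the contradiction is obtained with the order \emph{dual} to $>_{CS}$. Your ``every edge decreases $>_{CS}$'' conclusion would silently pass over the all-server cycles, so the final step of your argument does not go through as written.

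The response-event difficulty that you flag as ``where the real work lies'' and leave open is in fact dispatched quickly in the paper: by behavioural compliance, when a component is in its responding phase its peer is in the matching phase (they synchronised on the request event); the responding server offers at least one response while the awaiting client is ready for \emph{every} response, so the $\ung$ predicate fails between the pair; and since a responding component's offered events are shared only with that peer, it has no outgoing ungranted request at all and thus cannot lie on any cycle. This is what licenses restricting attention to client-requesting and server-waiting components in the first place. So your proposal has the right architecture and correctly identifies both sensitive points, but it resolves neither: one is deferred, and the other is settled with an incorrect uniform-orientation claim.
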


\begin{proof} We prove this theorem by showing that the network $V' = \Seq{C'_1,\ldots,C'_n}$, where $C'_i = (A_i,\verb|Abs(i)|)$, is deadlock free and by using Lemma \ref{lem:abs}. 
	
	To prove the former claim, we rely on the second condition of Theorem \ref{thm:deadlock}. So, we show that there cannot be a cycle of ungranted requests between components of this network. From the validity of $mutually\_disjoint\_events$, we know that events cannot be used for both requesting and responding. 
	
	From the behavioural compliance of the network to the client/server pattern, we know that a component might be in one of three cases: ready to request as a client, waiting for a request as a server, and ready to respond.
	
	In the case a component is responding, due to behavioural compliance, it can only be willing to communicate with its peer, namely, the component that has shared a request event with it. In this case, both must be willing to engage in a shared event. The component behaving as server has to offer at least a response, whereas the client component must be waiting for any response event. Hence, in such a state, a component and its peer cannot be part of a cycle of ungranted requests, as the $\ung$ predicate does not hold for them. So, a cycle of ungranted requests can only be formed by a combination of client-requesting and server-waiting-for-request components.
	
	Given two components $i$ and $j$, there cannot be an ungranted request $i \ur j$ in a state where $i$ is a client-requesting and $j$ a server-waiting. The reason is that $j$ would be willing to engage on the request offered by $i$. So, this fact implies that a cycle of ungranted requests can only exists if all components are behaving either as a server-waiting or as a client-requesting. 
	
	So, let us first assume that a cycle involving only client-requesting components exists. This means that for each pair of adjacent elements $i$ and $j$ in the cycle, $(i,j) \in \mathcal{C}$ and consequently (by $ordered$) $i >_{CS} j$ must hold. Thus, we reach a contradiction as $ >_{CS} $ is a strict order and, based on the cycle, we can establish that a reflexive pair exists. In the case of all server-waiting components, one can use the same argument, but using the order dual to $ >_{CS} $, to reach a contradiction. Figure~\ref{fig:intuitionCS} illustrates the coincidence of order $>_{CS}$ and components in a path of ungranted requests, and the contradiction that a cycle would lead to.
	
	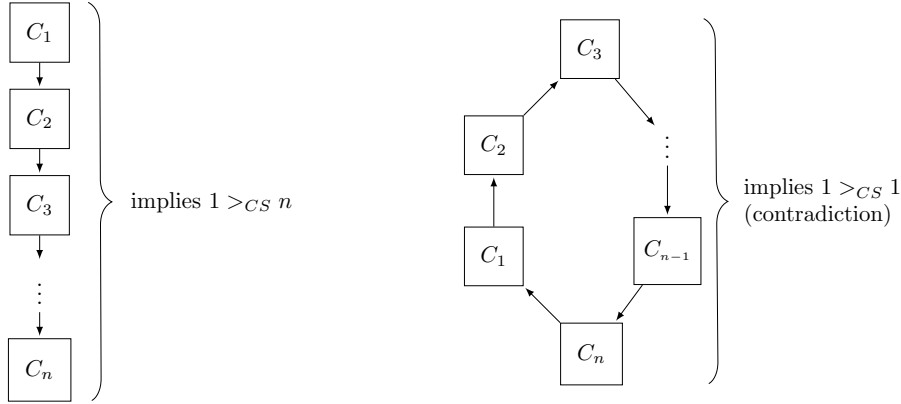
\begin{figure}[t]
		\centering
		\begin{minipage}{0.32\textwidth}
			\resizebox{\textwidth}{!}{%
				\begin{tikzpicture}[shorten >=1pt,node distance=.4cm]
				\node[sqstate, inner sep=0cm]	(c1)                {$\ C_1\ $};
				\node[sqstate, inner sep=0cm]	(c2) [below=of c1] {$\ C_2\ $};
				\node[sqstate, inner sep=0cm]	(c3) [below=of c2] {$\ C_3\ $};
				\node	(c4) [below=of c3] {$\vdots$};
				\node[sqstate, inner sep=0cm]	(cn) [below=of c4] {$\ C_n\ $};
				\path[-{latex[scale=3.0]}] 
				(c1)  edge node {} (c2)
				(c2) edge node {} (c3)
				(c3) edge node {} (c4)
				(c4) edge node {} (cn);
				
				\draw [decorate,decoration={brace,amplitude=10pt,raise=8pt},yshift=0cm]
				(c1.north east) -- (cn.south east) node [midway,right,xshift=.8cm] {implies $1 >_{CS} n$};
				\end{tikzpicture}}
		\end{minipage}\hfill%
		\begin{minipage}{0.5\textwidth}
			\resizebox{\textwidth}{!}{%
				\begin{tikzpicture}[shorten >=1pt,node distance=0.8cm]
				\node[sqstate, inner sep=0cm]	(c1)                {$\ C_1\ $};
				\node[sqstate, inner sep=0cm]	(c2) [above=of c1] {$\ C_2\ $};
				\node[sqstate, inner sep=0cm]	(c3) [above right=of c2] {$\ C_3\ $};
				\node	(c4) [below right=of c3] {$\vdots$};
				\node[sqstate, inner sep=0cm]	(cn1) [below=of c4] {$C_{\scriptscriptstyle n-1}$};
				\node[sqstate, inner sep=0cm]	(cn) [below right=of c1] {$\ C_n\ $};
				\path[-{latex[scale=3.0]}] 
				(c1)  edge node {} (c2)
				(c2) edge node {} (c3)
				(c3) edge node {} (c4)
				(c4) edge node {} (cn1)
				(cn1) edge node {} (cn)
				(cn) edge node {} (c1);
				
				\draw [decorate,decoration={brace,amplitude=10pt,raise=1.3cm},yshift=0cm] 
				(c3.north east) -- (cn.south east) node [midway,right,xshift=1.6cm] {$\begin{array}[t]{l}\text{implies }{1 >_{CS} 1} \\\text{(contradiction)} \end{array}$};
				\end{tikzpicture}}
		\end{minipage}
		\caption{Chain of ungranted requests and cycle of ungranted requests coincidence with $>_{CS}$.}
		\label{fig:intuitionCS}
	\end{figure}
\end{proof}

\subsection{Async Dynamic}

This pattern can be applied to construct networks in which participants interact via a transport layer. For instance, this pattern seems to be suited for building name-server and address-resolution systems~\cite{Plummer1982,Mockapetris87}. Participants are elements that embed the functional behaviour of the network, whereas the transport layer is a mere communication infrastructure. In such a network, a fixed number of participants, which are also known in advance, might join and leave the network. Aside from transporting messages, the transport layer also detects participants leaving and entering the network. The transport layer is composed of transport entities. These are components responsible for providing one-direction communication between two participants and detecting whether its sending participant is present or not in the network.

An async-dynamic descriptor for a network $V$, with $n$ components and $\Sigma$ as alphabet, is a tuple $\mathcal{M} = (\mathcal{C},link,send, receive,on,off,timeout)$ containing a set $\mathcal{C} \subseteq \{1\ldots n\} \times \{1\ldots n\}$, and functions $link(i,j)$, $send(i,j)$, $receive(i,j)$, $on(i,j)$, $off(i,j)$, $timeout(i,j)$. A pair $(i,j) \in \mathcal{C}$ denotes the connection from $i$ to $j$. The function $link(i,j)$ yields the transport-entity component that relay messages from $i$ to $j$. This function must be defined for all pairs in $\mathcal{C}$; $send(i,j)$ and $receive(i,j)$ denote the set of events used to pass data from $i$ to $j$; and $on(i,j)$, $off(i,j)$ and $timeout(i,j)$ denote control events that are explained later. We define $participants \defs \{ i \mid \exists j : \{1\ldots n\} \spot (i,j) \in \mathcal{C} \lor (j,i) \in \mathcal{C} \}$, $transport\_entities \defs \{ link(i,j) \mid \exists i,j : \{1\ldots n\} \spot (i,j) \in \mathcal{C}\}$. We require a given transport entity to link a unique pair of participants, so we use $source(k) = i$ and $target(k) = j$ if $link(i,j) = k$.

Structural compliance is achieved if the network's components are partitioned into transport entities and participants. In addition to that, we require the traditional shared events to be the ones controlled by the pattern.

\begin{defn} Let $V = \Seq{C_1,\ldots,C_n}$ be a network where $C_i = (A_i,P_i)$, and $\mathcal{M}$ an async-dynamic pattern descriptor for $V$, and $S(i)$ a function that gives the sequence  in which participant $i$ interacts with its peer participants. $V$ and $\mathcal{M}$ are structurally compliant if and only if the following predicates hold.
	\begin{itemize}
		\item $partitioned \defs \begin{aligned}[t]&
		participants \cap
		transport\_entities = \emptyset \\ &\land participants \cup transport\_entities = \{1\ldots n\}
		\end{aligned}$
		\item $mutually\_disjoint\_events$ holds if and only if the events used for sending, receiving, turning on, turning off and timing out are all mutually disjoint. For any two sets $X$ and $Y$, representing all the events used for two of these activities, $X \cap Y = \emptyset$ must hold;
		\item $\begin{aligned}[t]
		&controlled\_alpha\_participant \defs\\
		&\quad \forall i : participants \spot A_i \inter \textit{Voc} = \\
		& \qquad\Union \{send(i,j) \mid (i,j) \in \mathcal{C}\}\\ 
		 & \qquad \union \Union \{receive(j,i) \mid (j,i) \in \mathcal{C}\}\\ 
		 &\qquad \union \{on(i,j),off(i,j), timeout(i,j) \mid (i,j) \in \mathcal{C}\}
		\end{aligned}$
		\item $\begin{aligned}[t]
		&controlled\_alpha\_transport\_entity \defs&\\
		& \quad \forall link(i,j) : transport\_entities \spot A_{link(i,j)} \inter \textit{Voc} = \\ & \qquad send(i,j) \union receive(i,j) \union \{ on(i,j), off(i,j)\} \union \{timeout(i,j)\}&
		\end{aligned}$
	\end{itemize}
\end{defn}

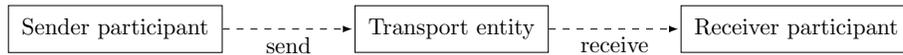
\begin{figure}[!b]
	\centering
	\resizebox{\textwidth}{!}{%
		\begin{tikzpicture}[shorten >=1pt,node distance=2cm]
		\node[rectangle, inner sep=0.2cm]	(c1)       {Sender participant};
		\node[rectangle, inner sep=0.2cm]	(c2) [right=of c1] {Transport entity};
		\node[rectangle, inner sep=0.2cm]	(c3) [right=of c2] {Receiver participant};
		\path[-{latex[scale=3.0]}] 
		(c1)  edge [dashed] node [below] {send} (c2)
		(c2) edge [dashed] node [below] {receive} (c3);
		\end{tikzpicture} }
	\caption{Illustration of the communication role performed by a transport entity.}
	\label{fig:transport_entity}
\end{figure}

On the behavioural side, we restrict the behaviours of participants and transport entities in different ways. A transport entity is expected to behave as a one-place buffer that can be overwritten with new data, providing one-direction communication as illustrated in Figure~\ref{fig:transport_entity}. In addition to that, it must be able to detect whether its sender is present or not in the network. The information about the presence of a participant is conveyed by the events on and off. If the participant is off, it means that it is no longer part of the network, it is on otherwise.

\begin{defn} Let $V = \Seq{C_1,\ldots,C_n}$ be a network where $C_i = (A_i,P_i)$, and $\mathcal{M}$ an async-dynamic pattern descriptor for $V$. The expected behaviour of the transport entity component $k$ is given by the following process.
	
\begin{lstlisting}
TransportSpec(k) =
	let i = source(k)
		j = target(k)
		On = off(i,j) -> Off 
			[] send(i,j)?data -> OnF(data)
		OnF(d) = off(i,j) -> Off 
			[] send(i,j)?data -> OnF(data)
			[] receive(i,j)!d -> On
		Off = on(i,j) -> On 
			[] timeout(i,j) -> Off
	within Off
\end{lstlisting}
\end{defn}

As mentioned, participants are the elements of the network carrying its business logic. For the purpose of deadlock analysis, we are only interested in the pattern of interaction of the participants, rather than in the business logic that they carry out. So, a participant should cyclically interact with its peer participants, first sending a message for each of its peer and then receiving messages from all of them. It might receive a timeout instead of some data, if a peer participant has left the network. At any time, a participant should be able to turn off, namely, leave the network. After leaving, the participant might re-join the network.

\begin{defn} Let $V = \Seq{C_1,\ldots,C_n}$ be a network where $C_i = (A_i,P_i)$, and $\mathcal{M}$ an async-dynamic pattern descriptor for $V$, and $S(i)$ a function that gives an order in which participant $i$ interact with its peer participants. The expected behaviour of participant $i$ is given by the following process.
	
\begin{lstlisting}
ParticipantSpec(i) =
	let	s = S(i)
		SendReceive(i,s) = 
			Send(i,s); Receive(i,s); SendReceive(i,s)
	within OnDetect(i,s);(SendReceive(i,s) /\ (SKIP |~| STOP));
		OffDetect(i,s); ParticipantSpec(i,s)
\end{lstlisting}
\end{defn}

The \verb"OnDetect" (\verb"OffDetect") process sends
a signal to inform that it is on (off) to each of the transport entity to which it
acts as a sender; this mechanism abstracts the ability
of the transport layer to detect participant status. In the same way,  The \verb"s" parameter
gives the sequence in which the participant interacts with its transport entities.
The \verb"Send" process sends messages to all transport entities that have this
participant as sender, following the order of sequence \verb"s". The \verb"Receive" process interacts with the transport entities that have
it as a receiver, also following how participants are ordered in \verb"s". This receiving interaction consists of either accepting incoming data or a timeout, in
the case that the sender associated with the transport entity in question is
off.

Note that we use the process \verb"(SKIP |~| STOP)" on the right-hand side of the interruption operator instead of, for instance, \texttt{SKIP}. The reason is that the latter construction would trivially imply deadlock freedom as a participant would be always able to turn off. On the other hand, the internal-choice construction implies that the process might not have the ability of turning off (if \cspm{STOP} is chosen), and as a consequence, one can guarantee that deadlock freedom is achieved because they are well behaved processes rather than because they can always turn off.

For a network and an async-dynamic descriptor to be behaviour compliant, participant and transport entities must meet their respective specifications. In addition to that, the sequence in which participants interact with its peers, given by $S(i)$, must not have the same component twice.

\begin{defn} Let $V = \Seq{C_1,\ldots,C_n}$ be a network where $C_i = (A_i,P_i)$, and $\mathcal{M}$ an async-dynamic pattern descriptor for $V$, and $S(i)$ a function that gives the sequence in which participant $i$ interact with its peer participants. $V$ and $\mathcal{M}$ are behaviourally compliant if and only if the following conditions hold.
	\begin{itemize}
		\item $\forall i : transport\_entities \spot \texttt{ TransportSpec [F= } \cspm{Abs(i)}$
		\item $\forall i : participants \spot \texttt{ ParticipantSpec [F= } \cspm{Abs(i)}$
		\item $\forall i : participants \spot \forall j,k : \{1 \ldots |S(i)|\} \mid j \neq k \spot S(i)_j \neq S(i)_k$
	\end{itemize}
\end{defn}

Finally, given the introduced pattern, we present the main theorem of this section. It shows that compliance to the pattern implies deadlock freedom.

\begin{thm} \label{thm:ad} Let $V = \Seq{C_1,\ldots,C_n}$ be a network where $C_i = (A_i,P_i)$, and $\mathcal{M}$ an async-dynamic pattern descriptor for $V$, and $S(i)$ a function that gives the sequence in which participant $i$ interact with its peer participants. If $V$ and $\mathcal{M}$ are behavioural and structurally compliant to the async-dynamic pattern, then $V$ is deadlock free.
\end{thm}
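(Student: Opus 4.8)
The plan is to follow the same two-step template used for Theorems~\ref{thm:ra} and~\ref{thm:cs}: first replace $V$ by the abstracted network $V' = \Seq{C'_1,\ldots,C'_n}$ with $C'_i = (A_i,\cspm{Abs(i)})$, so that by Lemma~\ref{lem:abs} it suffices to show $V'$ is deadlock free; and then, invoking the second clause of Theorem~\ref{thm:deadlock}, establish that no state of $V'$ admits a cycle of ungranted requests. Since behavioural compliance gives a failures refinement from each specification to the corresponding abstract component, every stable failure of $\cspm{Abs(i)}$ is a stable failure of $\cspm{TransportSpec(i)}$ or $\cspm{ParticipantSpec(i)}$, so I would reason about these specifications as over-approximations of the real component states.

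First I would use the structural predicates to constrain where ungranted requests can occur. From $partitioned$, $controlled\_alpha\_participant$, $controlled\_alpha\_transport\_entity$ and triple-disjointness, no two participants and no two transport entities share an event, and an entity $link(a,b)$ shares events only with its source $a$ (via $send,on,off$) and its target $b$ (via $receive,timeout$). Hence $\req$ can never hold between two participants or two transport entities, and any cycle of ungranted requests must alternate participants and transport entities.

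The heart of the argument is a case analysis on the states of $\cspm{TransportSpec}$ and $\cspm{ParticipantSpec}$ to pin down exactly which edges are possible. I would show that a participant is never blocked while sending, turning on, or turning off: the buffer always accepts $send$ while its sender is on, and the $OnDetect$/$OffDetect$ discipline (each live period is bracketed by an $on$ and a matching $off$) guarantees the $on$/$off$ handshakes always succeed. Consequently the only edge from a participant into an entity is a receiver $b \ur link(a,b)$, and this forces $link(a,b)$ into its on-but-empty state; there it refuses $receive$ and $timeout$ to $b$ and offers only $send/off$ to $a$, so its sole outgoing edge is $link(a,b) \ur a$, demanding that $a$ neither sends to $b$ nor is willing to turn off. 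Checking the complementary entity states (\textit{Off} and the full state \textit{OnF}) I would show these admit no incoming edge and so cannot lie on a cycle. Contracting each (necessarily empty) entity then reduces a cycle of ungranted requests to a cycle of participants $i_1 \ur i_2 \ur \cdots \ur i_m \ur i_1$ in which every $i_k$ is committed inside its receive phase, waiting to receive from $i_{k+1}$, while $i_{k+1}$ is itself stuck receiving and hence neither sending to $i_k$ nor offering to go off.

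The final and hardest step is to derive a contradiction from such a participant cycle. Unlike the resource-allocation and client/server patterns, no static order ($>_{RA}$, $>_{CS}$) is imposed on the components here, so I would extract the ordering \emph{dynamically} from the send-then-receive discipline. Because $i_{k+1}$ sits in its receive phase it has already completed its current send phase and therefore dispatched a message to its peer $i_k$; as $link(i_{k+1},i_k)$ is empty, $i_k$ must have consumed that message and lapped forward to await $i_{k+1}$ again. Writing $\rho_k$ for the receive event by which $i_k$ consumed $i_{k+1}$'s latest message, the protocol forces $\rho_k$ to precede $i_k$'s subsequent send to $i_{k-1}$, which in turn precedes $i_{k-1}$'s consuming receive $\rho_{k-1}$; hence $\rho_k < \rho_{k-1}$ in the underlying trace for every $k$ taken modulo $m$. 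Going once around the cycle yields $\rho_m < \cdots < \rho_1 < \rho_m$, an impossible strictly decreasing cyclic chain of trace positions. This temporal contradiction plays the role that the reflexive-pair contradictions of Figures~\ref{fig:intuitionRA} and~\ref{fig:intuitionCS} play for the earlier patterns, and I expect the main obstacle to lie precisely here: justifying the ``lapping'' step rigorously---that the consumed message is the one sent in the current round, and that the nondeterministic interrupt $\cspm{SKIP |\~{}| STOP}$ genuinely leaves a committed, non-off-offering receiver---rather than in the more routine structural bookkeeping.
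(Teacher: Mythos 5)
Your proposal is correct and follows essentially the same route as the paper's proof: abstraction via Lemma~\ref{lem:abs}, reduction to the non-existence of a cycle of ungranted requests via Theorem~\ref{thm:deadlock}, the structural and behavioural case analysis forcing any cycle to alternate receiving participants with on-but-empty transport entities, and a dynamically induced temporal order whose cyclic chain of strict inequalities yields the contradiction --- your order on last consuming receives is a reformulation of the paper's $>_{AD}$ on last broadcasts. The only organisational difference is that you exclude the just-turned-on entity case inside the cycle argument (the sender being in its receive phase forces the entity to have been filled and then emptied), whereas the paper dispatches that case separately by observing that the network cannot then be blocked.
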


\begin{proof}
	From the analysis of structural restrictions, a process must be either a transport entity or a participant. This fact together with triple disjointness and the controlled-alphabet restriction imply that there can only be ungranted requests between a transport entity and a participant. To be more specific, there can only be an ungranted request between a participant and one of its sender or receiver transport entities, for a participant only shares events with these transport entities.
	

	Next, we show that there cannot be a cycle of ungranted requests in a state where all transport entities have not an \cspm{on} event as their last event. 
	
	First, we examine the behaviour of a participant $i$ when interacting with its transport entity $k$. We analyse two cases: when $i$ is a sender to $k$ and when $i$ is a receiver from $k$. When $i$ is a sender, no ungranted requests can from $i$ to $k$. Whenever $i$ is willing to communicate with $k$, $k$ is accepting a communication from $i$, be it a \cspm{send}, \cspm{on} or \cspm{off} event. When $i$ is a receiver, however, an ungranted request arises from $i$ to $k$ if $k$ is on and empty.
	
	In order to be on and empty, a transport entity $k$, linking $i$ to $j$, must have just turned on (i.e., $on(i,j)$ was its last event performed), or it must have been filled and then emptied (i.e., $receive(i,j)$ was its last event performed). In the first case, the participant $j$ has to be turning on or broadcasting data. In both cases, $j$ has to be in a state in which it can effectively communicate a send or an on event to a transport entity that has $j$ as a sender. Therefore, the network $V'$ cannot be blocked. So, we only have to establish that a cycle involving participants willing to receive messages and filled-and-then-emptied transport entities cannot arise.
	
	Let us assume that such a cycle of participants willing to receive messages and filled-and-then-emptied transport entities exist. We analyse the behaviour of a transport entity $k$, which links $i$ to $j$, and of participants $j$ and $i$.
	
	In such a cycle, $k$ must have $receive(i,j)$ as its last performed event, and based on the behaviour of a transport entity, it must have performed a $send(i,j)$ immediately before $receive(i,j)$. So, it has to have executed a trace like:
	\begin{itemize}
		\item[] $\Seq{\ldots,send(i,j),receive(i,j)}$
	\end{itemize} 
	
	Participant $j$ must be willing to receive some data from $k$. So, it has to be offering the event $receive(i,j)$. As $j$ synchronises with $k$ in $receive(i,j)$, the last occurrence of this event for $j$ and $k$ must have happened at the same time. Note that, as $receive(i,j)$ is being offered by $j$, $j$ must have broadcast between the last occurrence of $receive(i,j)$ and its current state. So, $j$ must have performed its last broadcast after the last occurrence of $receive(i,j)$.
	
	Participant $i$, as $j$, must be willing to receive some data from a transport entity. So, it must be in its receiving phase, and that means that its last broadcast has been completed. As $i$ synchronises with $k$ in $send(i,j)$, the last occurrence of this event for $i$ and $k$ must have happened at the same time.
	
	Thus, considering the behaviour of these components together, we know that $j$'s last broadcast must have started more recently than the start of $i$'s last broadcast. $j$'s last broadcast must have started after the last $receive(i,j)$ occurred. $i$'s last broadcast must have started before the last occurrence of $receive(i,j)$,  as the last occurrence of $send(i,j)$, which is part of $i$'s last broadcast, happened before $receive(i,j)$.
	
	Hence, in such a cycle, we have the following strict order being induced between participants. If $j$, $k$ and $i$ are a path in this cycle then $j$ must have had its last broadcast more recently than $i$'s last one. Let us call this order $>_{AD}$. This strict order implies that a cycle cannot happen as this would lead to a contradiction: one could deduce that a participant's last broadcast happened more recently than its last broadcast. Thus, this network is sdeadlock free. In Figure~\ref{fig:intuitionAD}, we illustrate the coinciding of these paths of ungranted requests and the order $>_{AD}$, and the contradiction that a cycle of ungranted request would lead to. 
	
	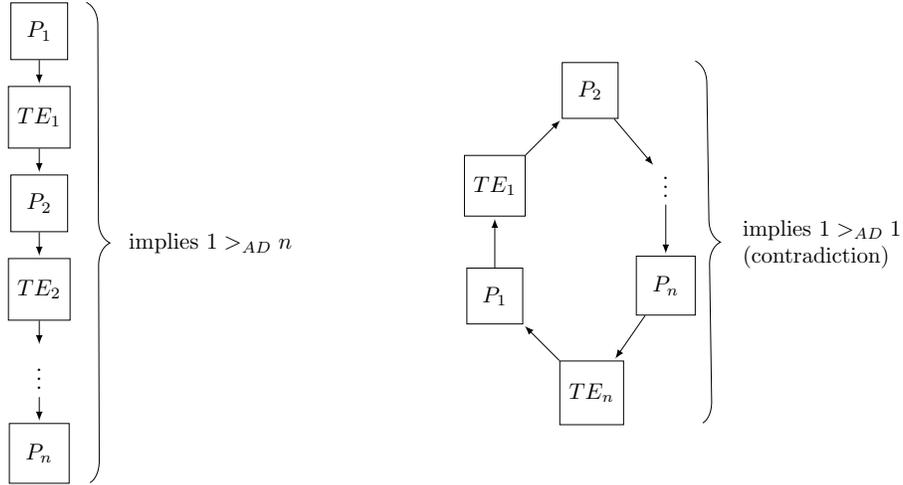
\begin{figure}[t]
		\centering
		\begin{minipage}{0.32\textwidth}
			\resizebox{\textwidth}{!}{%
				\begin{tikzpicture}[shorten >=1pt,node distance=.4cm]
				\node[sqstate, inner sep=0cm]	(c1)                {$\ P_1\ $};
				\node[sqstate, inner sep=0cm]	(c2) [below=of c1] {$TE_1$};
				\node[sqstate, inner sep=0cm]	(c3) [below=of c2] {$\ P_2\ $};
				\node[sqstate, inner sep=0cm]	(c4) [below=of c3] {$TE_2$};
				\node	(c5) [below=of c4] {$\vdots$};
				\node[sqstate, inner sep=0cm]	(cn) [below=of c5] {$\ P_n\ $};
				\path[-{latex[scale=3.0]}] 
				(c1)  edge node {} (c2)
				(c2) edge node {} (c3)
				(c3) edge node {} (c4)
				(c4) edge node {} (c5)
				(c5) edge node {} (cn);
				
				\draw [decorate,decoration={brace,amplitude=10pt,raise=8pt},yshift=0cm]
				(c1.north east) -- (cn.south east) node [midway,right,xshift=.8cm] {implies $1 >_{AD} n$};
				\end{tikzpicture}}
		\end{minipage}\hfill%
		\begin{minipage}{0.5\textwidth}
			\resizebox{\textwidth}{!}{%
				\begin{tikzpicture}[shorten >=1pt,node distance=0.8cm]
				\node[sqstate, inner sep=0cm]	(c1)                {$\ P_1\ $};
				\node[sqstate, inner sep=0cm]	(c2) [above=of c1] {$TE_1$};
				\node[sqstate, inner sep=0cm]	(c3) [above right=of c2] {$\ P_2\ $};
				\node	(c4) [below right=of c3] {$\vdots$};
				\node[sqstate, inner sep=0cm]	(cn1) [below=of c4] {$\ P_n\ $};
				\node[sqstate, inner sep=0cm]	(cn) [below right=of c1] {$TE_n$};
				\path[-{latex[scale=3.0]}] 
				(c1)  edge node {} (c2)
				(c2) edge node {} (c3)
				(c3) edge node {} (c4)
				(c4) edge node {} (cn1)
				(cn1) edge node {} (cn)
				(cn) edge node {} (c1);
				
				\draw [decorate,decoration={brace,amplitude=10pt,raise=1.2cm},yshift=0cm] 
				(c3.north east) -- (cn.south east) node [midway,right,xshift=1.6cm] {$\begin{array}[t]{l}\text{implies }{1 >_{AD} 1} \\\text{(contradiction)} \end{array}$};
				\end{tikzpicture}}
		\end{minipage}
		\caption{Chain of ungranted requests and cycle of ungranted requests coincidence with $>_{AD}$.}
		\label{fig:intuitionAD}
	\end{figure}
\end{proof}

Note that the patterns presented impose restrictions that can be efficiently checked. These are either restrictions that can be statically checked, or behavioural restrictions that can be checked by the examination of individual or pairs of processes. So, in the case of proving deadlock freedom for large systems, pattern adherence is an efficient choice and it might, in fact, be the only viable option. For example, in Section~\ref{sec:evaluation}, we present a leadership-election system, modelled after a commercial protocol, for which monolithic analysis and even compression techniques are not viable options for checking deadlock freedom.

\section{A systematic and scalable method for ensuring deadlock freedom}
\label{sec:method}

In this section, we propose a systematic approach that combines network decomposition and the application of our behavioural patterns to construct and verify some deadlock-free systems. In addition to the method itself, we propose the DFA (Deadlock-Freedom Analysis) tool to support our method's application. It is a plugin to the well-known Eclipse IDE, offering an Eclipse-like look-and-feel. It fully automates most of the application steps of our method. The only step that is not fully automated is checking pattern adherence. It involves the user selecting a pattern and providing the information needed to construct its descriptor. In Subsection 5.1 we give an overview of the DFA tool. The decomposition and patten adherence method is presented in Section 5.2, and its application, using the tool, in Subsection 5.3 (the decomposition strategy) and in Subsection 5.4 (pattern adherence). Finally,  Subsection 5.5 is dedicated to the evaluation of our method,  comparing the efficiency of the deadlock analysis of the systems developed using our approach with three other approaches.

\subsection{Deadlock Freedom Analysis tool overview}

Through this section, we use our tool to discuss and illustrate our method application. We begin by briefly describing DFA's interface and how it can be used to model a network, and then we propose our method and explain how DFA supports its application.

DFA's graphical interface is divided into four areas as depicted in Figure \ref{fig:interface}. We number the areas in this figure to facilitate referencing them. Area 1 provides the projects or networks that have been created in a given workspace. In this example, we created the networks \emph{RingBuffer} and \emph{DiningPhilosophers}. To create a project, we provide a project creation wizard in Eclipse's \emph{New} menu. Area 2 provides a view of the communication graph of the network under analysis. In this case, we selected the RingBuffer project.

Area 3 provides three panels that enables one to have an overview of the elements that have been created to construct the network, such as components, channels, etc. Area 4 has several panels that give details of the elements that have been created, and allows the user to edit them. For instance, for a selected component, it shows its alphabet, behaviour and name. In the following, we present in more detail Areas 3 and 4, their panels and the features that they offer.

\begin{figure}[t]
	\includegraphics[width=\textwidth]{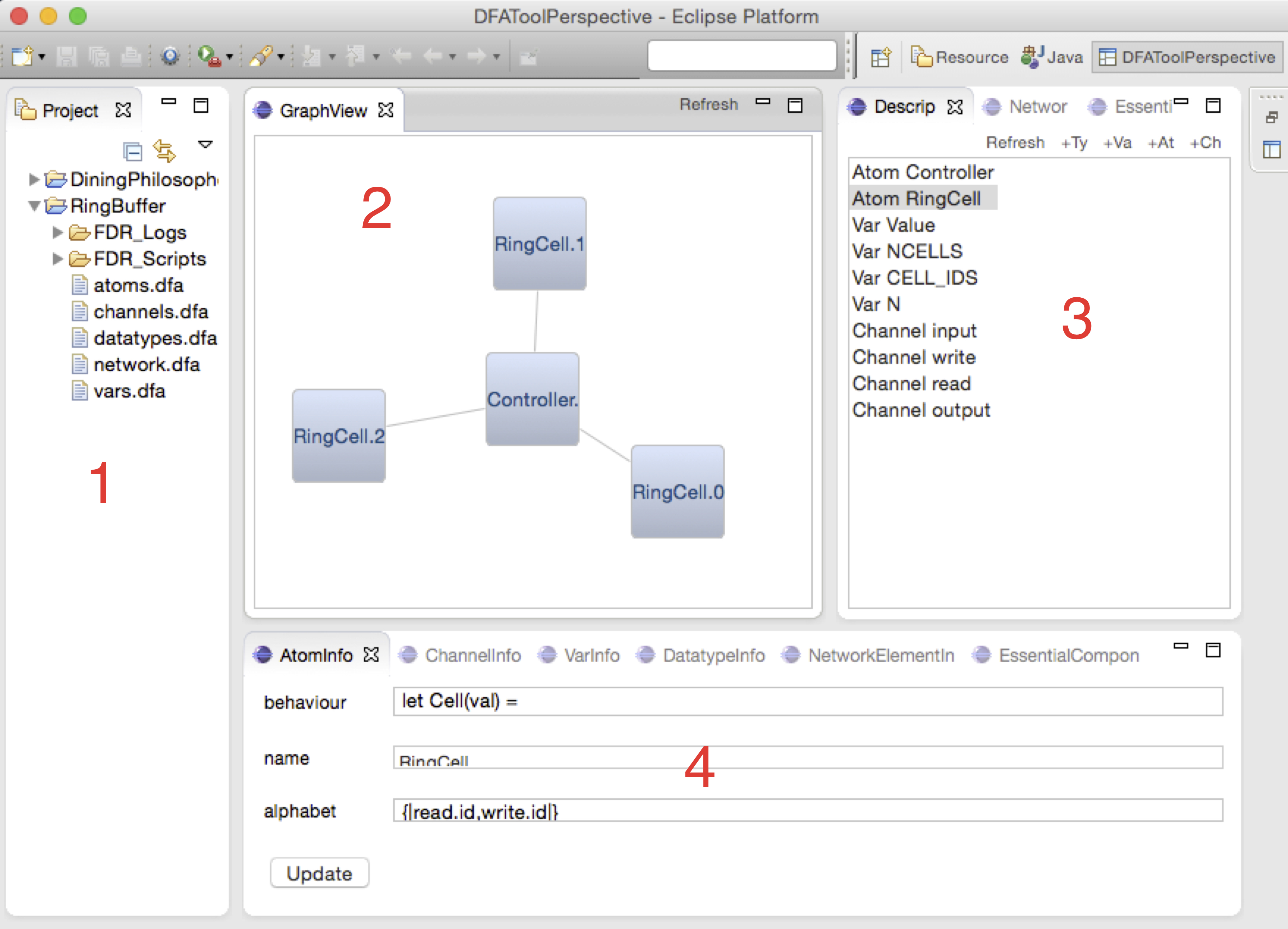}
	\caption{DFA's interface.}
	\label{fig:interface}
\end{figure}

Area 3 offers three different panels: the \emph{description-list} panel, the \emph{network-list} panel and the \emph{essential-components-list} panel. The description-list panel lists the elements that have been declared and are, as a consequence, available for the construction of the network. These elements are: atom, channel, variable, and datatype declarations. Also, in the top part of it, it has four buttons that allows the user to create new elements. An \emph{atom} (or component schema) is a parametrised component, i.e.  its alphabet and behaviour are parametrised. So, it becomes a component once the parameters are defined. A channel declaration is a declaration of a set of events, and the last two elements are self-explanatory. The network-list panel provides instantiations of atoms that define the network. The purpose of having these two separate notions for an atom and its instantiation (a component) is to facilitate the creation of networks composed of many similar components. We discuss the essential-components-list panel later.

For instance, Figure \ref{fig:area2panels} depicts the declarations and instantiations used to create our RingBuffer network. We can see that this network is composed of a single controller atom that has been instantiated with the value $0$ and three ring cell atoms that have been instantiated with values $0$, $1$ and $2$. Thus, we use a set notation to denote the parameter values (and number of components) that are to be instantiated for each atom.

\begin{figure}[t]
	\centering
	\includegraphics[width=0.35\textwidth]{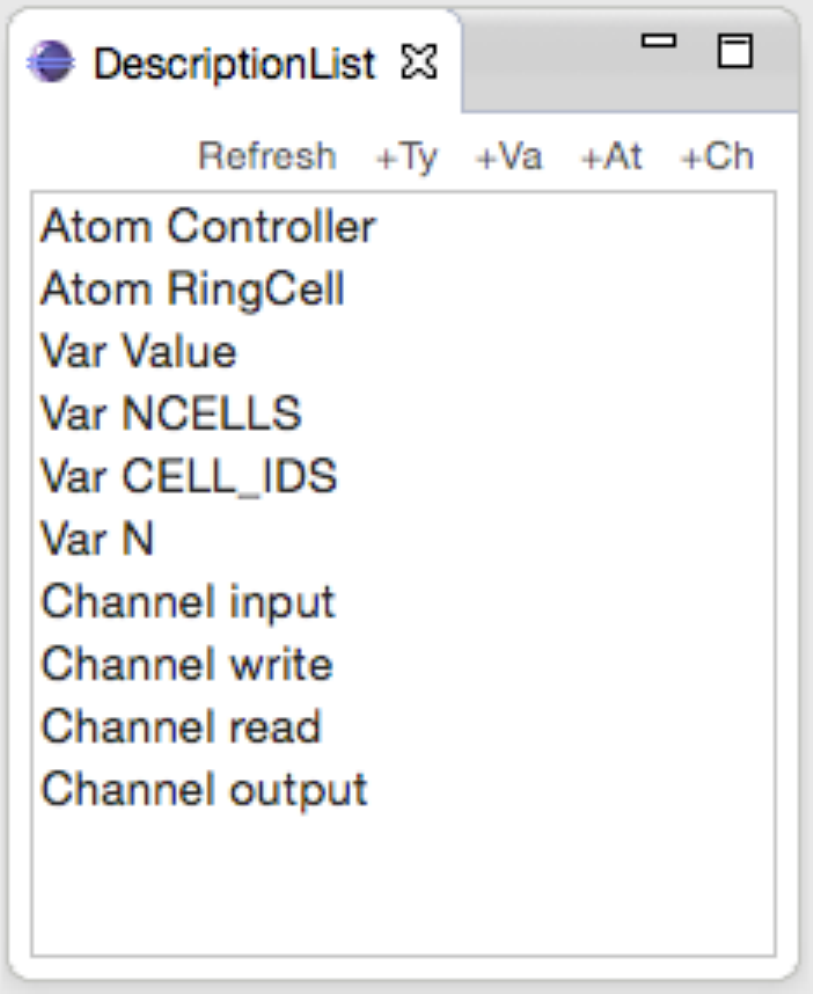}
	\hspace{1cm}
	\includegraphics[width=0.35\textwidth]{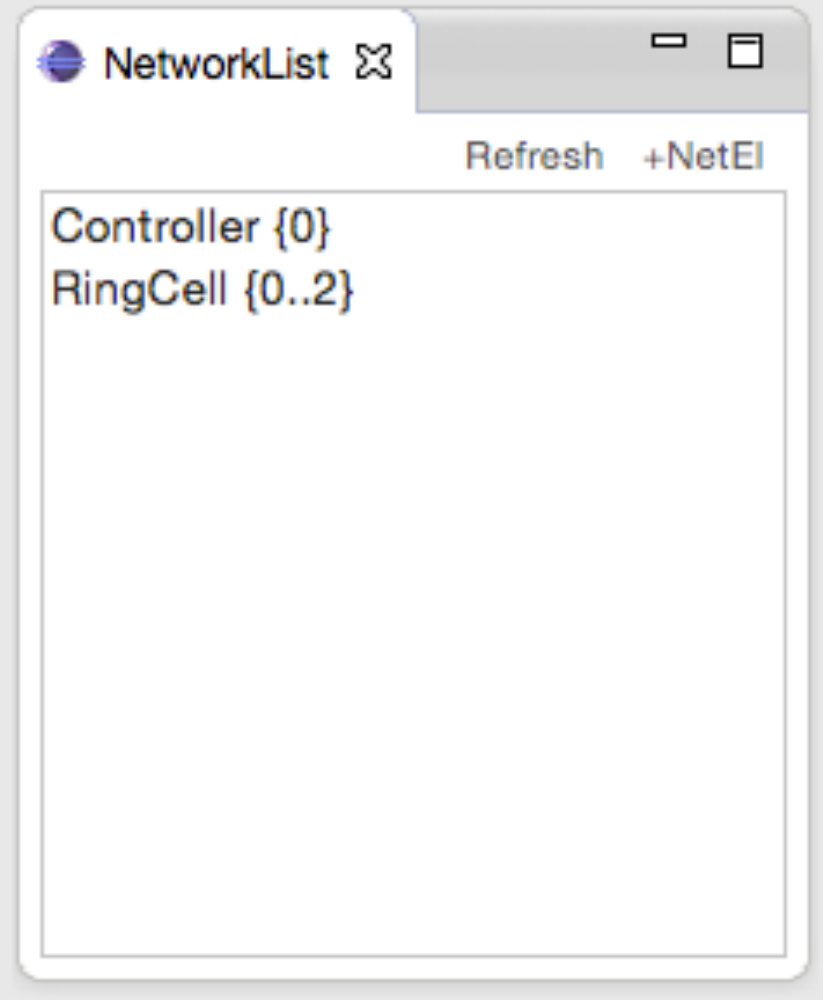}
	\caption{Descriptions-list and network-elements-list panels, respectively.}
	\label{fig:area2panels}
\end{figure}

\begin{figure}[!b]
	\centering
	\includegraphics[width=\textwidth]{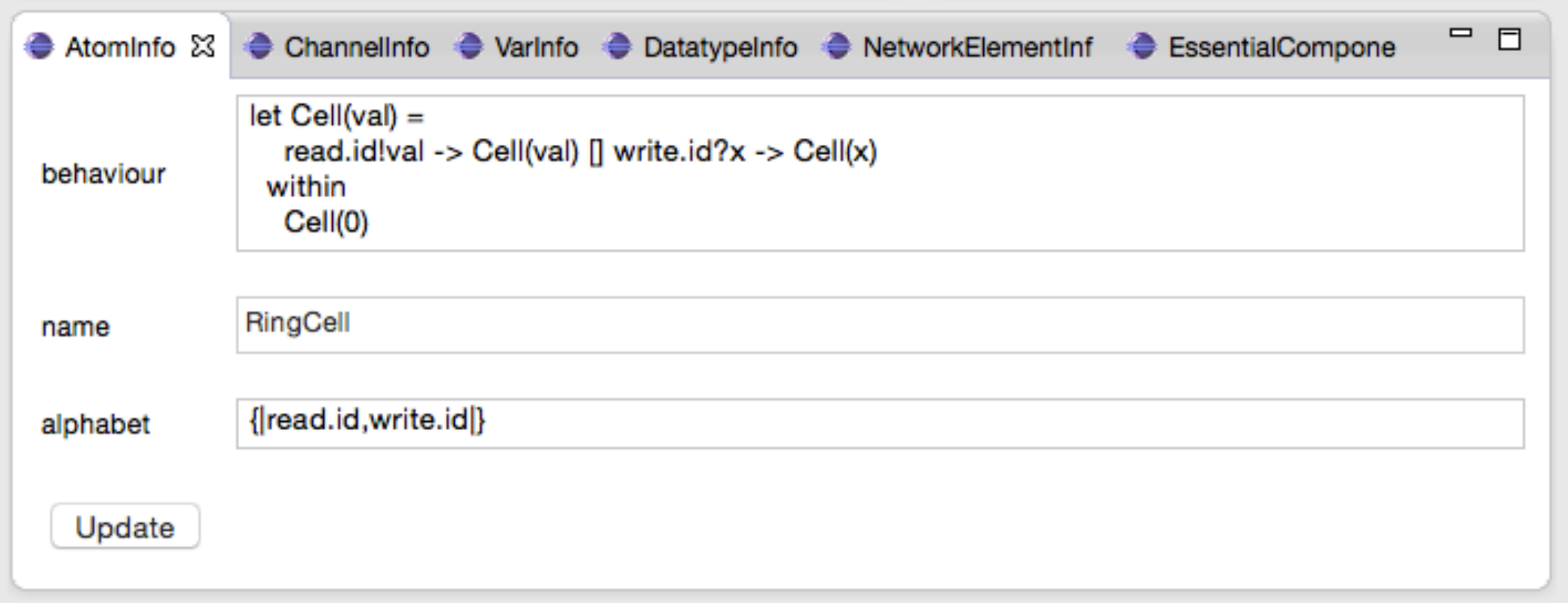}
	\caption{Atom-info panel.}
	\label{fig:atomInfo}
\end{figure}

Area 4 offers 6 panels: \emph{atom-info}, \emph{channel-info}, \emph{datatype-info}, \emph{variable-info}, \emph{network-element-info} and \emph{essential-component-info} panels. Upon selection of an atom in the descriptions-list panel, the atom-info panel presents its details. It shows its name, parametrised behaviour and parametrised alphabet. Atoms are parametrised by the implicit variable \emph{id}. This variable is what needs to be instantiated to turn an atom into a component. At the bottom of the panel, the \emph{update} button allows the user to edit the details of an atom. Panels channel-info, datatype-info, and variable-info provide similar informative and editing functionalities for the other declared elements. For instance, in Figure \ref{fig:atomInfo}, we illustrate the declaration of the RingCell component schema for the \emph{RingBuffer} network. Note the behaviour and alphabet are described using $CSP_M$ and have the implicit variable $id$. Upon selection of a network element in the network-list panel, the network-element-info provides the user with detailed information about this element and an update functionality, just like for the atom-info panel. We discuss the \emph{essential-component-info} panel later.

\subsection{The Decomposition and Pattern Adherence method}

After presenting how our tool can be used to model a network, we move on to propose and discuss our verification method. The DPA (Decomposition and Pattern Adherence) method essentially relies on two main phases: firstly, it decomposes the network, then it proves that the essential subnetworks are deadlock free. In the following, we detail all smaller steps that are necessary to carry out both of DPA's two main phases. We discuss how the steps can be implemented and estimate the complexity of this method. 

The steps of the DPA method are as follows.
	\begin{enumerate}
		\item Decompose network (identify essential subnetworks):
		\begin{enumerate}
			\item \label{s1.1} Construct communication graph;
			\item \label{s1.2} Identify disconnecting edges (bridges) in this undirected graph;
			\item \label{s1.3} Remove conflict-free disconnecting edges; and
			\item \label{s1.4} Identify resulting essential subnetworks.
		\end{enumerate}
		\item Show pattern adherence for essential subnetworks with more than one component:
		\begin{enumerate}
			\item \label{s2.1} Describe pattern descriptor for each of these subnetworks; and
			\item \label{s2.2} Check pattern adherence.
		\end{enumerate}
	\end{enumerate}

\subsubsection{Method application: decomposition strategy}
The first part of our method attempts to decompose the network under analysis. As our decomposition strategy is based on the network's topology, in Step~\ref{s1.1}, it constructs the network's communication graph. The creation of the communication graph can be carried out in time $\bigo(n^2|A|)$ where $n$ is the number of components in the network and $|A|$ over-approximates the size of individual component alphabets (say, it is the size of the largest alphabet). This approximates the time taken to create the edges of the graph. There are $\bigo(n^2)$ potential edges (pairs of components) in this graph and, for each pair of component, we can check whether their alphabets intersect, thereby giving rise to an edge in the communication graph, in $\bigo(|A|)$ steps. 

In the next step, our decomposition strategy identifies disconnecting edges. There is a linear time algorithm -- taking time $\bigo(|V|+|E|)$ where $|V|$ and $|E|$ are the sizes of the sets of nodes and edges, respectively, of the input graph -- that identifies all the bridges of an undirected graph~\cite{Tarjan74}. This algorithm can be readily applied to find disconnecting edges in a communication graph. So, it takes time $\bigo(n^2)$ to find all disconnecting edges in such a graph, given that the communication graph has $\bigo(n)$ nodes and $\bigo(n^2)$ edges.

Step~\ref{s1.3} involves finding which disconnecting edges are conflict free and removing them. So, for each pair of components corresponding to a disconnecting edge, we test them for conflict freedom using the refinement assertion in Definition~\ref{thm:conflictFreedomAssertion}. A graph with $|V|$ nodes has at most $|V|-1$ bridges. So, our communication graph has $\bigo(n)$ bridges to be tested. For the purposes of estimating DPA's complexity, we assume that components are described by labelled transition systems instead of CSP processes. This is a reasonable assumption since CSP has an operational semantics that enables this translation and most checkers internally represent components and systems in this way. We assume that $|B|$ is number of states/nodes for the largest component (i.e., transition system) of the input network. Refinement checkers work by examining the product space of specification and implementation. Our specifications, used to constrain the behaviour of network components, are small and simple processes, which should be simply normalised. So, in our complexity analysis, we factor specifications out and use the size of the implementation's state space as an estimate for the work required to check some refinement expression. Thus, if the implementation is a network with $n$ components, its has $\bigo(|B|^n)$ states and the refinement checking has to examine this many states. Note the state-space explosion is represented by the exponent $n$ in this bound. Our conflict-freedom refinement expression, however, analyses only a pair of components (placed in the \cspm{Context} process) at a time. So, checking each of our conflict-freedom refinement expressions takes $\bigo(|B|^2)$ steps\footnote{In fact, we should say that the state space of the Context process, which is the actual implementation, is proportional to $|B|^2$.}. Moreover, given that there are $\bigo(n)$ bridges, Step~\ref{s1.3} can be carried out in $\bigo(n|B|^2)$ steps.

The last step of our decomposition strategy consists of calculating the resulting essential subnetworks. This step consists of finding the graph-theoretic connected components of the graph resulting from the removal of conflict-free disconnecting edges. These connected components can be found in linear time in the size of the input communication graph using depth-first search. Therefore, similarly to Step~\ref{s1.2}, this step can be carried out in time $\bigo(n^2)$.

We use our RingBuffer network in Example~\ref{ex:buffer} to illustrate the proposed decomposition strategy. In Step~\ref{s1.1}, our strategy constructs this system's acyclic communication graph depicted in Area 2 of Figure~\ref{fig:interface}; it has a controller and three memory cells. Given its acyclic topology, Step~\ref{s1.2} finds that all its edges are disconnecting. Step~\ref{s1.3} analyses each of these edges using our conflict-freedom assertion to find out that all of them are conflict free. So, they are all removed leading to the communication graph depicted in Figure~\ref{fig:menu}. Finally, Step~\ref{s1.4} finds that each individual component is a singular essential subnetwork, i.e., an essential subnetwork with a single component.

This strategy could be manually carried out. It would, however, involve many tedious and error-prone tasks such as manually constructing and analysing a graph and manually crafting our conflict-freedom refinement assertions. Instead,
it is much more productive to carry it out in a fully automatic way by using our tool, via the \emph{Decompose} option in DFA's menu depicted in Figure~\ref{fig:menu}. It fully automates the strategy's steps using the algorithms we discuss and the FDR2 tool~\cite{FDR2}, in background, to check the conflict-freedom refinement expressions.

\begin{figure}[t]
	\centering
	\includegraphics[width=0.3\textwidth]{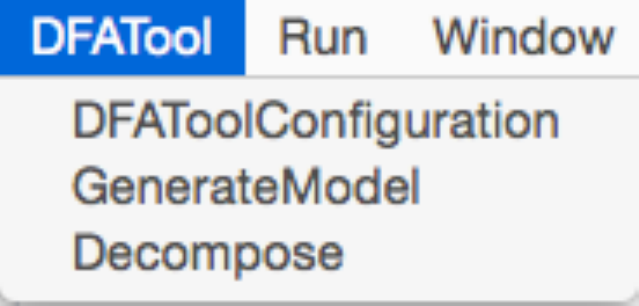}
	\hspace{1.5cm}
	\includegraphics[width=0.4\textwidth]{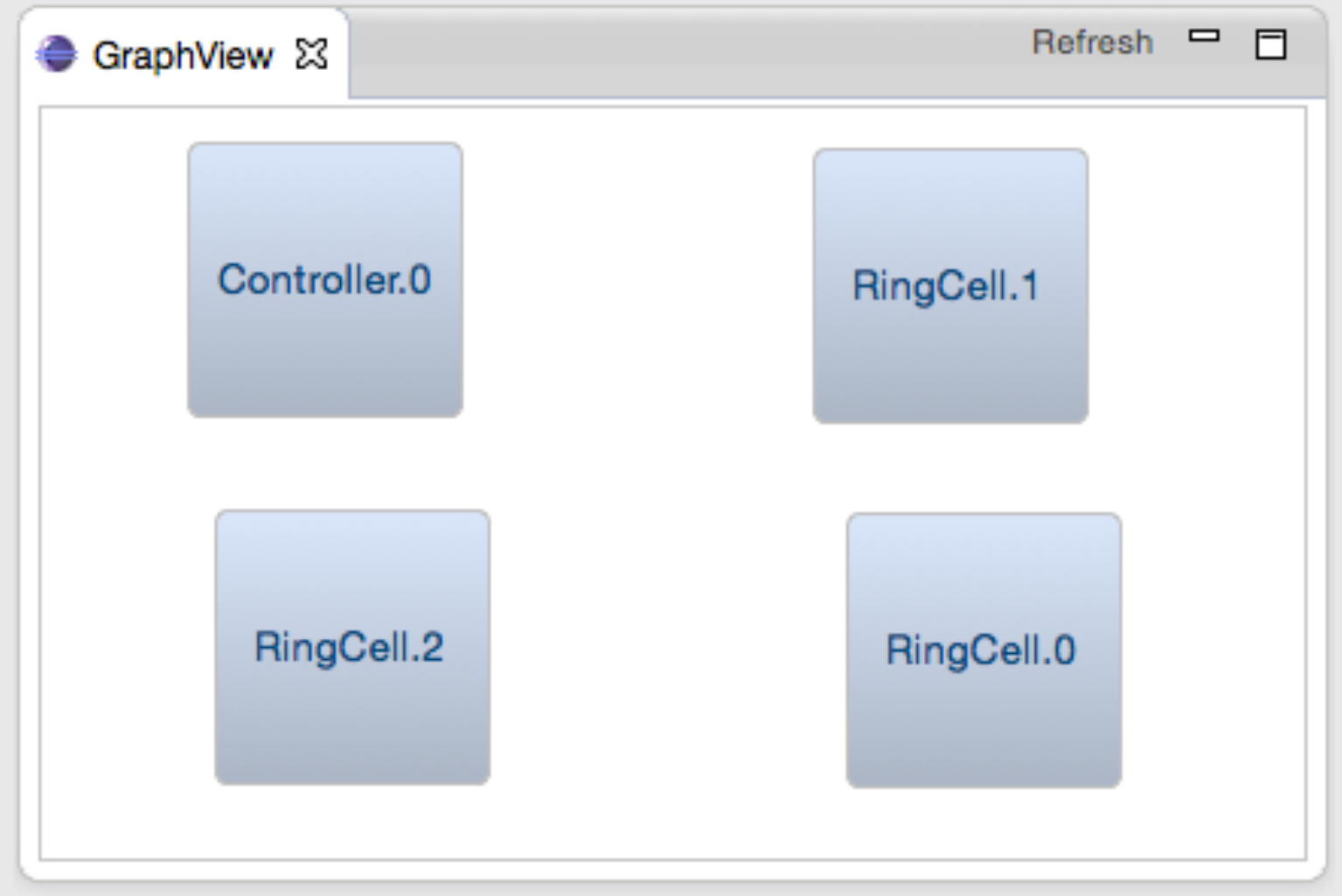}
	\caption{DFA's menu and RingBuffer's decomposed communication graph.}
	\label{fig:menu}
\end{figure} 

We make a few relevant remarks about our decomposition strategy. Firstly, this strategy alone can prove conflict-free acyclic systems deadlock free. If after decomposition all essential subnetworks are singular then the network under analysis must be deadlock free. This is exactly the case for our RingBuffer example. The buffer in this example has only three cells but our strategy can, in fact, show deadlock freedom for similar buffers with any fixed number of cells. Thirdly, based on the analysis of its complexity, this decomposition strategy seems much less computationally costly than carrying our deadlock-freedom checking for the entire network. While traditional exact deadlock-freedom checking explores the network's entire state space (taking $\bigo(|B|^n)$ steps), our strategy can be carried out in polynomial time, taking $\bigo(n^2|A||B|^2)$ steps. Therefore, it is much more scalable in proving deadlock freedom for conflict-free acyclic systems when compared to traditional exact methods. 

\subsubsection{Method application: pattern adherence}

The second part of DPA consists of proving that the essential subnetworks found by our decomposition strategy are deadlock free. As singular essential subnetworks are deadlock free by our busyness requirement, this second part is only really concerned with showing deadlock freedom for non-singular essential subnetworks and we do so via pattern adherence. Our method requires showing that each of these non-singular essential subnetworks adhere to one of our patterns. So, for each of these networks, the user of our method has to choose which pattern it adheres to and provide the appropriate pattern descriptor. Given a pattern descriptor, one can simply test adherence by validating the structural and behavioural constraints. The behavioural constraints can be validated using the refinement expressions we propose, whereas structural restrictions can be tackled by simple iterative algorithms. In the following, we discuss and illustrate this part of our method and the tool support we provide using the DiningPhilosopher network in Example~\ref{ex:philosophers} and the resource-allocation pattern.

Given its ring-like topology, as depicted in Figure~\ref{fig:ecInfo0}, the DiningPhisolophers network has no disconnecting edges. So, the application of the decomposition strategy to it results in the original network being the single essential subnetwork found. In our tool, the decomposition strategy updates the \emph{essential-components} panel (located in Area 3 of Figure~\ref{fig:interface}) to show the non-singular essential subnetworks found. In our example, Figure~\ref{fig:ecInfo0} shows that DFA finds this single essential subnetwork and names it \emph{EC0}.

\begin{figure}[!t]
	\centering
	\includegraphics[width=0.58\textwidth]{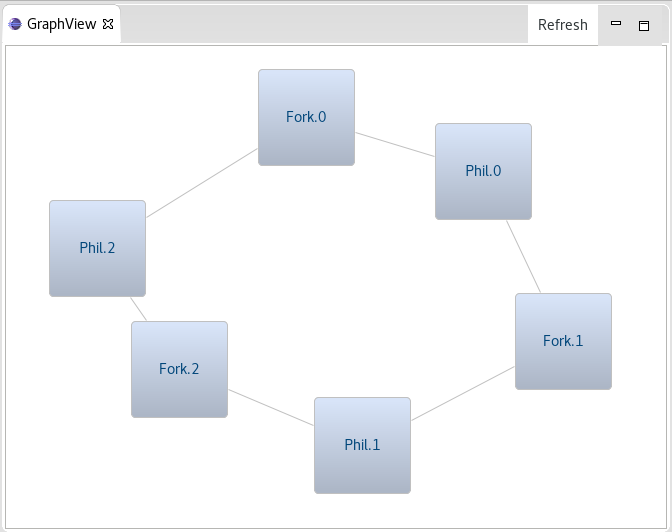}
	\hspace{0.2cm}
	\includegraphics[width=0.38\textwidth]{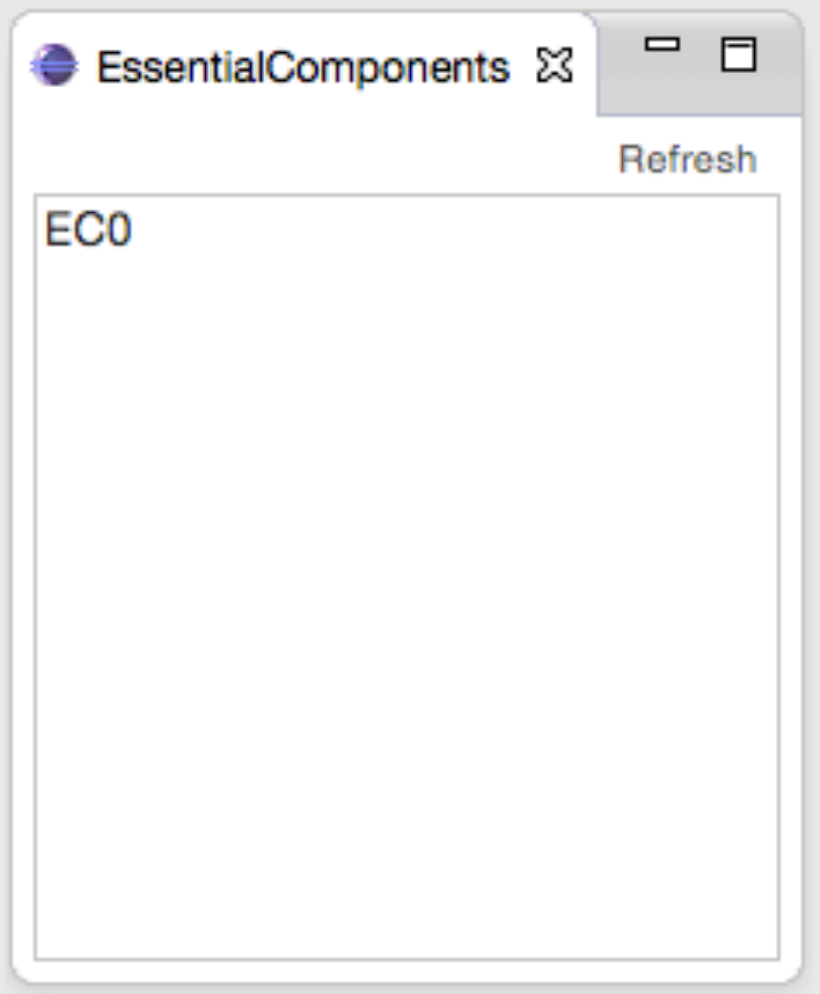}
	\caption{DiningPhilosophers' communication graph and essential-components panel.}
	\label{fig:ecInfo0}
\end{figure}

By our method's definition, we are then left with proving this essential subnetwork, i.e., the entire original network, adheres to some pattern; we show it adheres to the resource-allocation pattern. Firstly, we identify the resource allocation descriptor for this network. Instead of describing the descriptor in terms of the structure $\mathcal{C}$ as per Section~\ref{sec:patterns}, we directly describe sets users and resources, and functions users, resources, acquire and release. Also, we represent our $>_{RA}$ order by a sequence of resources $\Seq{r_1,\ldots,r_n}$, where $r_i >_{RA} r_j$ if and only if $i > j$. This alternative description is the one used by our tool. We believe that, although less concise, this alternative description is more user-friendly and more suited to users that are not formal methods enthusiasts.

\begin{defn} \label{def:ra_meta} Resource allocation descriptor for DiningPhilosophers. We use \verb|Phil.i| to identify a philosopher component, \verb|Fork.i| a fork one, and $N = 3$ respresents the number of philosophers/forks in the network.
\begin{itemize}
	\item $\textit{User} =$ \verb"{Phil.i | i <- {0..N-1}}"
	\item $Resources = $ \verb"{Fork.i | i <- {0..N-1}}"
	\item $users(id) = $ \verb"{Phil.id,Phil.((id-1)%N)}"
	\item $resources(id) = $ \verb|if id == N-1 then <Fork.0,Fork.id>| \\ \hspace*{2.5cm} \verb|else <Fork.id,Fork.((id+1)%N)>|
	\item $acquire(idU,idR) = $ \verb|pickup.idU.idR|
	\item $release(idU,idR) = $ \verb|putdown.idU.idR|
	\item $>_{RA} =$ \verb"<Fork.i | i <- <0..N>>"
\end{itemize}
\end{defn}

We can test whether the provided descriptor satisfies the pattern's structural restrictions using simple iterative algorithms. For instance, the condition \emph{partitioned} can be checked in time $\bigo(n)$ by a simple algorithm that carries out the required operations and comparisons on the two sets: \emph{users} and \emph{resources}. Similarly, we can check \emph{controlled\_alpha\_users} in time $\bigo(n^2|A|^2)$ since $\textit{Voc}$'s size is bound by $\bigo(n|A|)$; we can iterate over $\textit{Voc}$ at most $|A|$ times to find the intersection set $\textit{Voc} \inter A$, and there are $n$ such calculations to be carried out. 

This descriptor also gives the information that we need to craft the appropriate refinement expressions to test whether the behavioural constraints the pattern enforces are met. In the case of our DiningPhilosopher network, it leads to $N$ assertions for philosophers and $N$ for forks. Since each assertion checks a component individually they can be carried in time $\bigo(|B|)$, and checking all of them takes time $\bigo(n|B|)$. 

\begin{figure}[h]
	\centering
	\includegraphics[width=\textwidth]{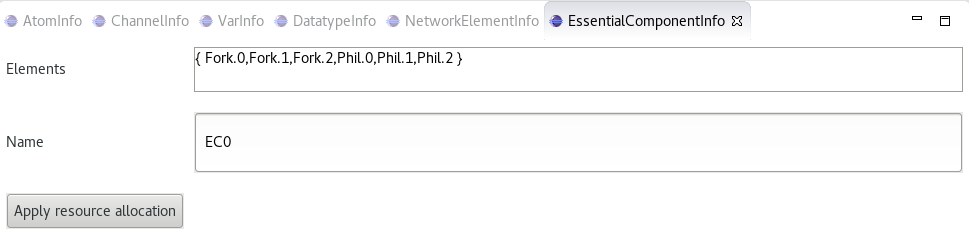}
	\caption{Essential-component-info panel for ECO subnetwork.}
		\label{fig:ecInfo}
\end{figure}

Our tool supports this step as follows. By selecting an essential subnetwork in the essential-components panel, the \emph{essential-component-info} panel (located in Area 4 of Figure~\ref{fig:interface}) is updated to show the components in this essential subnetwork. For instance, Figure~\ref{fig:ecInfo} presents the information for our example's EC0 subnetwork. This panel also allows the user to apply the resource allocation by clicking on the ``Apply resource allocation" button. Then, the user has to input the pattern descriptor via a dialog box as depicted in Figure~\ref{fig:descriptor}. The boxes should be filled as per Definition~\ref{def:ra_meta}\footnote{Our tool requires the two sets \emph{users} and \emph{resources} to be written without the variable $N$ so intervals are \texttt{0..2} instead of \texttt{0..N-1}. Also, we adopt the convention that $<_{RA}$ is the natural order on Fork's identifiers.}. At the moment, our prototype only supports the application of the resource allocation pattern. Other patterns can be similarly implemented using the same core idea. Given this descriptor, our tool can show that this subnetwork adheres to the resource allocation pattern, and so the network is deadlock free. This network has only 3 philosophers and forks but our method can, similarly, tackle this example for any fixed number of philosophers and forks.

\begin{figure}[b]
	\centering
	\includegraphics[width=\textwidth]{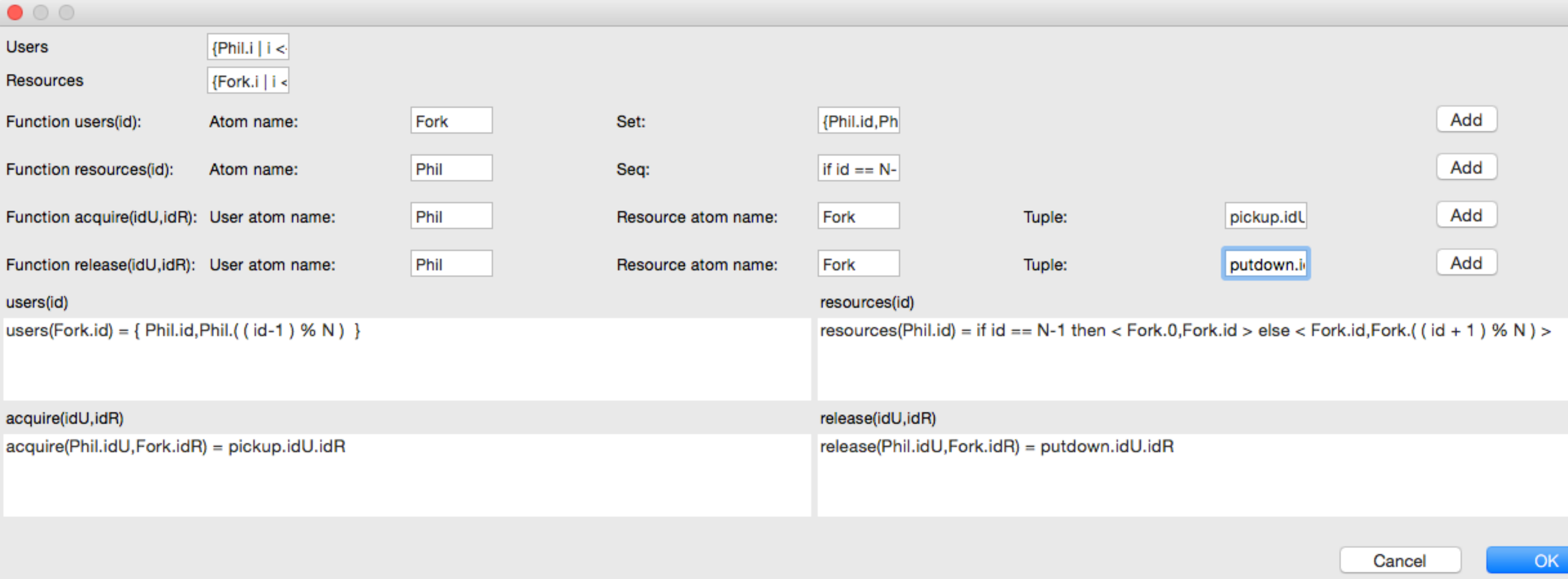}
	\caption{Resource-allocation descriptor dialog box.}
	\label{fig:descriptor}
\end{figure}

We point out that in terms of efficiency, pattern adherence checking should be substantially faster than monolithically checking deadlock freedom for the corresponding network. For the cases when the state space of a system increases exponentially with the number of components, our approach will very much outperform monolithic approaches. While monolithic deadlock checking has to explore an exponentially large state space in general, pattern-adherence verification only examines one component at a time, for behavioural conditions, and the structural conditions can be polynomially checked in the size of the structure of the process (i.e., size of alphabets and number of nodes), which tends to be much smaller than the behavioural part. On the other hand, there are concurrent systems for which state space only grows polynomially. They are not common but they exist. So, in general, as the state-explosion problem affects most (interesting, worth-verifying) concurrent systems, our approach should normally outperform monolithic ones.

\subsection{Method evaluation}
\label{sec:evaluation}
In this section, we empirically evaluate our method. Our evaluation only takes into account the verification of behavioural constraints. So, the verification times that we present for DPA disregard the examination of structural restrictions. Checking the behavioural aspect of our method should be much more demanding than checking its structural counterpart, given the static nature of a network's structure and the simplicity of structural conditions. Thus, the time to verify our method's behavioural conditions should approximate the time that would take checking structural restrictions.


We compare DPA against three other approaches that can prove deadlock freedom: the \emph{SDD} framework implemented in the Deadlock Checker tool~\cite{DC}, FDR2's built-in deadlock-freedom assertion (FDR2) and its combination with compression techniques (FDR2c)\footnote{FDR is currently in its fourth version (FDR4). Version 3 was a complete rewrite of FDR2 which largely improved it. This version (and subsequent ones), however, does not implement the stable-revivals model, which is an essential part of our method's conflict-freedom analysis.}. SDD is an incomplete framework that works by constructing the system's dependency digraph and checking it for cycles. A live system/network that does not exhibit such a cycle must be deadlock free~\cite{Martin96}. FDR2 and FDR2c are complete methods that explicitly explore the system's state space. While FDR2 simply explore this space, FDR2c relies on some user-provided hierarchical compression strategy to attempt to reduce the size of the system's original state space. We point out that while incomplete methods can only show deadlock freedom for some deadlock-free systems, complete ones do so for them all. This incompleteness is the price paid to achieve efficiency.

We use our two running examples in this comparison, i.e., the ring buffer and the asymmetric dining philosophers examples, and we also check the transport layer of the leadership-election system presented in~\cite{Antonino14a}. This last example models a protocol used by B\&O's\footnote{\url{http://www.bang-olufsen.com/}} Audio and Video (AV) systems to elect a leader that coordinates the interaction between components. In this system, nodes exchange messages containing their priority value, which represents their eagerness to become the leader, so they coordinate and agree on which node becomes the leader. The transport layer is composed of the nodes themselves and bus cells implementing the asynchronous means of communication through which they exchange messages. This pattern adheres to the async-dynamic pattern where nodes are participants and bus cells are transport entities. A detailed description of this system and its adherence to this pattern can be found in the aforementioned work. Our evaluation was conducted on a dedicated machine with Intel i7-7500U CPU @ 2.70GHz, 16GB of RAM, and running Fedora 25, and the scripts used can be found in~\cite{DPAEXP}.

\begin{table}[!t]
	\centering
		\begin{tabular}{|c|c|c|c|c|c|}
			\cline{2-6}
		     \multicolumn{1}{c|}{}& n     & DPA       & SDD       & FDR2      & FDR2c     \\ 
			\hline 
			\multirow{5}{*}{Ring buffer} & 3     & 0.01      & 0.25      & 0.03      & 0.06      \\ 
			\cline{2-6} 
			& 5     & 0.32      & 0.28      & 0.27      & 0.03      \\ 
			\cline{2-6} 
			& 10    & 1.54      & 1.83      & 371.94    & 0.11      \\ 
			\cline{2-6}
			& 20    & 11.65     & 48.28     & -         & 0.42      \\ 
			\cline{2-6} 
			& 30    & 53.43     & -         & -         & 0.92      \\
			\hline
			\multirow{5}{*}{Dining philosophers} & 5     & 0.03      & 0.18      & 0.06      & 0.03      \\ 
			\cline{2-6} 
			&10    & 0.11      & 0.18      & 448.41    & 0.06      \\ 
			\cline{2-6}  
			&50    & 3.82      & 0.18      & -         & 0.82      \\ 
			\cline{2-6}  
			&100   & 28.02     & 0.23      & -         & 7.28      \\ 
			\cline{2-6}  
			&200   & 214.13    & 0.38      & -    	  & -         \\ 
			\hline 
			\multirow{5}{*}{Leadership Election} & 3     & 0.30      & -         & 2039.40   & +         \\ 
			\cline{2-6} 
			&5     & 0.89      & -         & -         & +         \\ 
			\cline{2-6} 
			&10    & 7.76      & -         & -         & +         \\ 
			\cline{2-6} 
			&20    & 71.54     & -         & -         & +         \\ 
			\cline{2-6} 
			&30    & 383.30    & -         & -         & +         \\ 
			\hline
		\end{tabular}
		\caption{Results of evaluation. $n$ is a parameter used to configure the size of the systems tested. We measure in seconds the time taken to check deadlock freedom for each system. - means that the method could not prove deadlock freedom for the system: either the (incomplete) method is unable to prove so, or it took longer than 1 hour, or an error, such as running out of memory, occurred. + means that no efficient compression technique could be found.}
		\label{tab:results}
	\end{table}
	
The results of our evaluation are presented in Table~\ref{tab:results}. For DPA, we show the results of showing conflict freedom for the disconnecting edges of examples RB, behavioural adherence of examples DP to the resource allocation pattern, and behaviour adherence of examples LE to the async-dynamic pattern.

Unsurprisingly, these results suggest that incomplete methods are fairly scalable; both DPA and SDD can handle these examples quite efficiently, albeit SDD cannot tackle the leadership-election examples we analyse. FDR2's built-in assertion quickly becomes unable to handle systems with the growth of $n$. This demonstrates the state-explosion problem in practice. Its combination with compression techniques, however, is fairly effective in handling the ring buffer and dining philosophers examples. We point out that the user has to find a good compression strategy to make this approach effective and find such a strategy is not usually a simple task. For instance, for the leadership-election system, we were unable to find a good compression strategy. These results also suggest that our method can tackle systems that cannot be handled by traditional incomplete methods such as SDD and that the sort of local analysis that our method employs might be the only alternative in handling complex systems such as the ones modelled in the leadership-election examples.

We reinforce that unlike the other approaches, DPA provides not only a method to check that a system is deadlock free but a guideline to construct deadlock-free systems. In fact, our formal analysis of B\&O's protocol identified several issues that were addressed by modifications to the real C++ implementation, which were guided by our async-dynamic pattern. This attests the real and practical impact that our method can make.

\subsection{Final considerations}

The main driving force behind our method's efficiency is \emph{local analysis}. Instead of explicitly examining the global behaviour of the system as traditional approaches do, we only analyse small parts of the network at a time. Our method analyses the behaviour of pairs of components when we analyse disconnecting edges for conflict freedom and of individual components when checking for pattern adherence. Our method can be seen as a systematisation of local reasoning to ensure deadlock freedom. As we show later, for some complex networks, local reasoning might be the only practical alternative for guaranteeing deadlock freedom.

Our work was inspired by~\cite{Roscoe87} and~\cite{Martin96}. They proposed the ideas behind decomposition and pattern adherence, and we refined, combined and extended them into a practical framework. The soundness of our method follows straightforwardly from Theorems~\ref{thm:decomposition},~\ref{thm:conflictFreedomAssertion},~\ref{thm:ra},~\ref{thm:cs} and~\ref{thm:ad}. Our method can show deadlock freedom for acyclic systems that are conflict free and for cyclic systems for which essential subnetworks adhere to one of our pattern. Unlike traditional techniques that propose \emph{a posteriori} verification, our method proposes an approach that can be used as a guide to design deadlock-free systems. One can create a deadlock free acyclic network by composing components in a way that they are conflict free; conflict free captures the natural idea that for an effective communication protocols must be conjugate/symmetric, if a component requests some action its communication partner must provide it. Also, one can design cyclic networks by ensuring that they conform to one of our patterns. Note that this perspective also leads to a way to combine different patterns into a single complex system. If we have (sub)networks that adhere to different patterns, we can link them with conflict-free disconnecting edges and the resulting network is also deadlock free. So, our method also allows for this sort of combination of patterns.

\section{Related Work}
\label{sec:related}

In this section, we discuss some alternative incomplete approaches to ensure deadlock freedom. Broadly speaking, we can split such approaches into non-constructive and constructive ones. Constructive approaches explicitly provide guidelines on how to construct deadlock-free networks, whereas non-constructive approaches do not. So, while non-constructive approaches normally propose some \emph{a posteriori} verification technique, constructive ones provide some systematic technique to avoid deadlocks. We begin by analysing constructive approaches, and then discuss non-constructive ones.

\subsection{Constructive approaches}

Roscoe and Brookes developed a theory, which is used in this work, for analysing deadlock freedom for networks of CSP processes~\cite{Brookes91}. They identified a cycle of ungranted requests as a necessary condition for a deadlock. Roscoe and Dathi contributed by developing a (local) proof method for deadlock freedom~\cite{Roscoe87}. They have built a method to prove deadlock freedom based on variants, similar to the ones used to prove loop termination. In their work, they also analyse some patterns that arise in deadlock free systems. They use the proposed proof rule to establish deadlock freedom for some classes of networks. 

Following these initial works, Martin defined and formalised some design rules to guarantee deadlock freedom by avoiding cycles of ungranted requests~\cite{Martin96}. These design rules are similar to our patterns in the sense that they describe some constraints to be followed while designing a network so as to avoid deadlocks. Nevertheless, they describe behavioural constraints as semantic properties that processes in the network must have, and no automatic way of checking design rule adherence is suggested.

In~\cite{Isobe05}, the authors propose an encoding of the network model and of a proof rule from \cite{Roscoe87} in a theorem prover. Even though this encoding provides mechanical support for deadlock analysis and allows one to reason locally, it does not resolve the problems that motivated this work, which is to insulate the user as much as possible from the details of the formalisation. For instance, in order to carry out the proof using this approach one has to understand the stable-failures semantic model, has to directly interact with the theorem prover, and has to provide some mathematical structures that are not evident, such as a partial order that breaks possible cycles of ungranted requests. On the other hand, our work could benefit from this encoding to mechanise the formalisation of our patterns using a theorem prover.

In \cite{Lambertz13}, a method that proves deadlock freedom for message-passing component-based systems is proposed. This method only deals with live networks that respect some topological restrictions. It presents a necessary condition for a deadlock based on the analysis of wait-for dependencies for pairs of components. So, this condition can be checked in polynomial time, which also implies that this method is immune to the state space explosion problem. No automated strategy, however, is proposed to verify that a given network respects these restrictions.


BRICK~\cite{Ramos09,Ramos11} is an alternative approach for designing asynchronous deadlock-free systems. This approach represents systems as contracts and proposes rules for composition of systems that ensure deadlock freedom. 
BRICK is systematic and rely on refinement expressions to discharge the side conditions imposed by composition rules. A BRICK user can create basic contracts from scratch, and then design design deadlock-free systems, in a step-by-step fashion, guided by the proposed rules. This approach, however, is not fully compositional.
One of its composition rules, the $reflexive$ rule, imposes a restriction on the overall behaviour of the resulting composition, rather than on its components, like the other rules require. As this composition is a parallel combination of components, this verification can quickly become unfeasible. This issue is rather significant given that cyclic-topology networks can only be created using this rule. In \cite{Oliveira2016}, we adapted our pattern based approach to BRICK to make this rule compositional. Recently, a tool to support the original BRICK framework without pattern adherence has been proposed~\cite{Pereira17}.

In \cite{Lynch81}, the author studies networks that conform to the resource-allocation pattern. The author acknowledges that to make such a network deadlock-free, one has to impose a strict order on the way resources are acquired. She goes, then, into studying how to choose a good ordering of resources in the sense that it minimises the time users need to wait to acquire resources. She proposes a few ``good" orderings, an algorithm to implement this well-behaved acquisition of resources, and analysis of a few networks using different orderings. The problem studied in that paper is much narrower than the one we tackle here. There, to some extent, the author is refining the ordering of resources used in the resource-allocation pattern ($>_{RA}$) and finding orderings and algorithms that will maximise the work of users, by minimising the time they wait to acquire resources. So, one could potentially create a refined version of our pattern that would require component to conform to these ``good" orderings instead of a general ordering.       

As for these approaches, our method is also constructive and provide a clear and systematic guideline on how to design deadlock-free systems. Unlike these approaches, however, we do provide fully automatic procedures that can be readily implemented and used to show that a system was constructed respecting our method. Our method can be seen as a lightweight and synchronous version of BRICK that is fully local/compositional and automated. We are not aware of any other approach that automates communication pattern adherence, as we do in the DPA method and in the DFA tool. 

\subsection{Non-constructive approaches}

In addition to design rules, Martin developed three frameworks (SDD, CSDD, and FSDD) and a tool with the specific purpose of deadlock verification, the Deadlock checker~\cite{DC,Martin97}. Broadly speaking, this tool reduces the problem of deadlock checking to the quest of cycles of ungranted requests in live networks. So, it can verify deadlock freedom for some networks in a very efficient way. In fact, this method constructs a digraph, in polynomial time in the size of the input system, using local analyses of the network. Furthermore, cycle finding can be conducted in polynomial time in the size of this digraph. We point out that our method and Martin's have incomparable accuracy: some networks that can be proved deadlock free by the Deadlock Checker do not obey any of the patterns, and some networks that obey the Async-dynamic pattern cannot be proved deadlock free by the Deadlock Checker. For instance, the leadership-election system we evaluated have cycles of dependencies between participants and transport entities, rendering SDD, FSDD and CSDD unable to prove it deadlock free. 

Similarly to Martin's approaches, the techniques in~\cite{Attie05,Attie13,Attie18} rely on a graph-like structure that depicts wait-for dependencies between component states. These works prove deadlock freedom by showing that a necessary condition for the existence of this graph-like structure is not met by the system under analysis. While~\cite{Attie05} uses this framework to analyse shared-memory concurrent programs, \cite{Attie13,Attie18} extend this approach to a more general setting. In the context of shared-memory concurrent programs, this condition is shown to be checked by the analysis of pairs of components, while in the setting of~\cite{Attie13,Attie18} it is unclear the complexity for establishing this condition.

In \cite{Bensalem11,Bensalem16}, the authors propose a compositional verification strategy together with a tool, which checks deadlock freedom, based on component and global invariants; these global invariants, which are called interaction invariants, express global synchronisation requirements between atomic components. The D-Finder tool iteratively tries to find a system invariant, combining component and interaction invariants, that can ensure deadlock freedom. Although powerful, these strategies can suffer from combinatorial explosion in calculating interaction invariants.

In \cite{Antonino16a,Antonino16b,Antonino17a,Antonino17b}, the first author has proposed a number of techniques that find sophisticated invariants and use them to prove deadlock freedom. It uses local analysis to find local and global invariants that are combined to over-approximate the state space of a system. Although these approaches can be hindered by combinatorial explosion, they tend to generally be much more efficient than complete methods. These frameworks, however, cannot prove deadlock freedom for some systems that our method can. For instance, our leadership-election example cannot be proved deadlock free by these techniques; the invariants captured by these methods are not strong enough to show deadlock freedom for this system.

These non-constructive approaches and our method should come close in terms of scalability. They differ, however, in terms of the methodology employed. While these approaches try to establish deadlock freedom for a constructed system, our method provides a design guideline to help the user build a deadlock-free network. We point out that while these other approaches are fully automatic, our method is only semi-automatic in the sense that the user might be required to provide a pattern descriptor so pattern adherence can be checked. Nevertheless, one should note that such information should be trivially known to the user if they use our method, and in particular our patterns, to design the network under analysis.

\section{Conclusion}
\label{sec:conclusion}

In this work, we propose a method that combines both a decomposition strategy and behavioural-pattern adherence to prove deadlock freedom. This method can be a very useful design tool as it provides both a systematic guideline to construct deadlock-free systems and procedures to ensure that the guideline has been correctly followed. Our use of refinement expressions to impose behavioural constraints improves previous pattern formalisations in two ways. Firstly, the refinement expressions give a practical representation of the behavioural restrictions imposed by a given pattern. That means that, instead of describing semantic properties of the process, we have a CSP process describing what is expected from the behaviour of a component. Secondly, it allows automatic checking of these constraints by the use of a refinement checker.

Our method can be seen as a systematisation of local analysis to prove deadlock freedom. Local analysis is a core factor in making our method efficient. Many frameworks using local analysis have been proposed. Some of them propose a posteriori verification and give no indication of how to avoid deadlocks, whereas others provide guidelines to avoid deadlock but no automatic verification to ensure that the guidelines were correctly followed. Our method provides both a guideline and verification procedures to ensure it was properly followed. Its use in the design of a practical protocol for B\&O also demonstrates our method's practical relevance and impact. Moreover, we also provide a tool that support and automates the application of our method.  Finally, our evaluation suggests that, for some examples, our method might be the only practical and capable option to prove deadlock freedom. So, it can tackle a class of systems that cannot be handled by available incomplete approaches.

In order to improve this framework, our pattern catalogue could be augmented. Some patterns described in prior works have not been formalised in our framework yet. To make our framework more general, we plan to add those patterns to it. Moreover, the tool developed is a proof of concept that, so far, has only a single pattern available. So, a natural extension would be to add the two missing patterns to it. Finally, we plan to promote this framework to a general modelling language level, such as SysML. This involves defining a suitable component model for SysML, and adapting the proposed DPA method. It is also required a front-end tool to translate from SysML to CSP, running FDR in background, and supporting traceability between the SysML and the CSP models. This should hide the formal methods part of our method, making it more accessible for industry partners.

\section*{References}

\bibliography{references}

\end{document}